\newcolumntype{s}{>{\hsize=.05\hsize \raggedright\arraybackslash}X} 
\newcolumntype{t}{>{\hsize=.15\hsize \raggedright\arraybackslash}X} 
\newcolumntype{b}{>{\hsize=.95\hsize}X}
\newcommand{\norm}[1]{\left\lVert#1\right\rVert}
\newcommand{\abs}[1]{\left\lvert#1\right\rvert}
\theoremstyle{plain}
\newtheorem{thm}{Theorem}
\newtheorem{cor}[thm]{Corollary}
\newtheorem{lem}[thm]{Lemma}
\newtheorem{prop}[thm]{Proposition}
\theoremstyle{definition}
\newtheorem{defn}[thm]{Definition}
\newtheorem{rem}[thm]{Remark}
\newtheorem{exmp}[thm]{Example}
\newcommand{\einv}{{\epsilon^{-1}}}
\newcommand{\ee}{{\mathrm e}}
\newcommand{\ii}{{\mathrm i}}
\newcommand{\dd}{{\mathrm d}}
\DeclareMathOperator{\tr}{tr}
\DeclareMathOperator{\rk}{rk}
\DeclareMathOperator{\ran}{ran}
\DeclareMathOperator{\sign}{sign}
\let\Re\relax
\let\Im\relax
\DeclareMathOperator{\Re}{Re}
\DeclareMathOperator{\Im}{Im}
\newcommand{\bvec}[1]{\vec{#1}}
\renewcommand{\bvec}[1]{\boldsymbol{#1}}
\newcommand{\He}{H^\sharp}
\DeclarePairedDelimiterX\braket[2]{\langle}{\rangle}{#1 , #2}
\DeclareFontFamily{U}{mathx}{\hyphenchar\font45}
\DeclareFontShape{U}{mathx}{m}{n}{
      <5> <6> <7> <8> <9> <10>
      <10.95> <12> <14.4> <17.28> <20.74> <24.88>
      mathx10
      }{}
\DeclareSymbolFont{mathx}{U}{mathx}{m}{n}
\DeclareMathAccent{\widecheck}{0}{mathx}{"71}
\DeclareMathAccent{\wideparen}{0}{mathx}{"75}
\numberwithin{equation}{section}
\title{Classifying bulk-edge anomalies in the Dirac Hamiltonian}
\author[1]{Hansueli Jud}
\author[2]{Clément Tauber\thanks{tauber@ceremade.dauphine.fr}}
\affil[1]{Lab42,
	Obere Strasse 22B,
	7270 Davos, Switzerland
}
\affil[2]{CEREMADE, CNRS, Université Paris-Dauphine, Université PSL, 75016 PARIS, FRANCE}
\date{\today}
\begin{document}

\maketitle

\begin{abstract}
	We study the Dirac Hamiltonian in dimension two with a mass term and a large momentum regularization, and show that bulk-edge correspondence fails. Despite a well defined bulk topological index --the Chern number--, the number of edge modes depends on the boundary condition. The origin of this anomaly is rooted in the unbounded nature of the spectrum. It is detected with Levinson's theorem from scattering theory and quantified via an anomalous winding number at infinite energy, dubbed ghost charge. First we classify, up to equivalence, all self-adjoint boundary conditions, using Schubert cell decomposition of a Grassmanian. Then, we investigate which ones are anomalous. We expand the scattering amplitude near infinite energy, for which a dominant scale captures the asymptotic winding number. Remarkably, this can be achieved for every self-adjoint boundary condition, leading to an exhaustive anomaly classification. It shows that anomalies are ubiquitous and stable. Boundary condition with a ghost charge of $2$ is also revealed within the process. 
\end{abstract}

\section{Introduction}

Bulk-edge correspondence is a profound result of topological insulators. It has been used as a paradigm for more than 40 years to understand the exotic properties of topological materials. The bulk index, defined for an infinite sample without boundary, predicts the number of states appearing at the edge of a half-infinite sample with a sharp boundary. Such edge states are confined near the boundary and have remarkable properties, like robust unidirectional propagation in two dimensions. Their number is counted by the edge index, which equals the bulk index via the bulk-edge correspondence.

Such a result was established with various techniques from spectral analysis to K-theory and for a very broad class of models, originally in the context of the quantum Hall effect \cite{Halperin82,Hatsugai93} and then extended to various condensed matter systems \cite{AvronSeilerSimon94,BellissardVanElstSchulzBaldes94,SchulzBaldesKellendonkRichter00,GrafElbau02,CombesGerminet05,GrafPorta13,Avilaetal13,EssinGurarie11,ProdanSchulzBaldes16,MathaiThiang16,BourneRennie18,Drouot19,GomiThiang19,Corneanetal21,Corneanetal23, Bal23} and actually to almost any energy-preserving classical wave phenomenon \cite{RaghuHaldane08,DeNittisLein17,Perietal19,DelplaceMarstonVenaille17}. Yet, it has its limitation: Bulk-edge correspondence has been shown to fail in the context of shallow-water waves \cite{TauberDelplaceVenaille19bis,GrafJudTauber21}. There, the system has a well-defined bulk index and yet the number of edge modes depends on the choice of boundary condition.  The origin of the anomaly was identified in the unbounded nature of the operator spectrum. Using scattering theory and a relative version of Levinson's theorem, a winding number captures the anomalous asymptotic properties of the spectrum and compensates for the lack or excess of edge modes in the finite spectrum. Such a winding number was dubbed ``ghost'' charge in \cite{TauberDelplaceVenaille19bis}, with possible physical consequences. 

The purpose of this paper is to show that this anomalous situation is not isolated. First, it is not restricted to water waves. As noticed in \cite{TauberDelplaceVenaille19bis}, we show that it also appears in the celebrated Dirac Hamiltonian, a canonical model to describe graphene near Fermi energy \cite{Castroetal09}. Its massive version is used as an effective theory for various topological materials, like the Haldane or Kane-Mele model  \cite{Haldane88,KaneMele02}. However, because its bulk index is ill-defined, we consider instead its regularized version which compactifies the problem at large momenta. The regularized and massive Dirac Hamiltonian appears for example in \cite{Volovik88} in the context of superfluid Helium. Second, and more importantly, we do not study a specific anomalous boundary condition as in \cite{GrafJudTauber21} but classify them all. 

The first result of this paper is to classify all self-adjoint boundary conditions for the massive and regularized Dirac Hamiltonian. We identify them with stable subspaces of $2 \times 4$ matrices of maximal rank. Up to equivalence, we get seven classes based on the Schubert cell decomposition of the Grassmannian $\mathrm{Gr}(2,4)$. The main result of this paper is then to compute the asymptotic winding number, denoted $w_\infty \in \mathbb Z$ below,  for any boundary condition among theses classes. Remarkably, the classification can be done exhaustively. One consequence is that anomalies are actually ubiquitous and very stable: this is not a fine-tuning effect. Nevertheless, many boundary conditions (e.g. Dirichlet) are also anomaly-free and stable as well.  

To do so, we develop a formalism to extract precisely the dominant scale of the scattering amplitude in the asymptotic part of the spectrum, in order to compute the winding number  $w_\infty$. Then we manage to apply this strategy to any boundary condition, class by class. In some cases the expressions contain up to 60 terms, but using a symbolic computation software --Mathematica-- our approach reduces them to small and tractable complex curves. Surprisingly, among the many parameters in each class, only one or two drive the anomaly, so that $w_\infty$ can be always computed exactly and moreover the whole classification can be summarized in a rather compact way. Finally, the classification  also reveals a family of anomalous boundary conditions where the ghost charge is $2$ instead of $1$, which was unexpected and never noticed before. 

The general message of this paper is that one should always consider with extra care apparent topological edge modes of unbounded operators, as they might be anomalous. However, there are several way to circumvent anomalous boundary conditions. Considering a smooth interface or domain wall instead of a sharp boundary usually removes the anomaly and restores the bulk-edge correspondence \cite{TauberDelplaceVenaille19,RossiTarantola24}. Moreover, there are situations where an interface index can be properly defined even though there is no bulk index \cite{DelplaceMarstonVenaille17,Delplace22, Bal22, QuinnBal22}.  Furthermore it is also possible to avoid the asymptotic part of the spectrum even with a sharp boundary condition, for example via a spectral flow of boundary conditions \cite{TauberThiang23} or using an external scalar field \cite{OnukiVenailleDelplace23}. Finally, we also mention a related anomalous problem for the magnetic Dirac Hamiltionian in graphene \cite{Treustetal24}, in a slightly different context. 

The rest of the paper is organized as follows: In Section~\ref{sec:main} we describe the model, explain the scattering theory to define $w_\infty$ and present the main results: Theorem~\ref{thm:SABC} and \ref{thm:classification}, associated with Table~\ref{tab:self-adjoint_classes} and \ref{tab:anomalies}, respectively. Section~\ref{sec:proof_sabc} proves the boundary condition classification. Section~\ref{sec:asym_S} develops the main tools to expand the scattering amplitude near infinity, and Section~\ref{sec:anomaly_classification} applies these tools to classify all the anomalies.

\section{Setting and main results\label{sec:main}}

\subsection{Dirac Hamiltonian and its topological index}

The massive and regularized Dirac Hamiltonian is a self-adjoint operator on $L^2(\mathbb R^2)$ with domain $H^1(\mathbb R^2) \oplus H^1(\mathbb R^2)$ given by
\begin{equation}\label{eq:Dirac_Hamiltonian}
	\mathcal{H} = \begin{pmatrix}
			m + \epsilon (\partial_x^2 + \partial_y^2) & \ii \partial_x + \partial_y\\
			\ii \partial_x -\partial_y & -m -\epsilon (\partial_x^2 +\partial_y^2)
	\end{pmatrix}\,.
\end{equation}
In the following we assume $m>0$, $\epsilon>0$ and $\epsilon< 1/2 m$. The mass parameter $m$ opens a gap in the spectrum, see \eqref{eq:bulkbands} below, and the term $\epsilon (\partial_x^2 + \partial_y^2)$ regularizes the problem at high energy. It also has a physical interpretation in the context of superfluid Helium \cite{Volovik88}.
By translation invariance in space, the stationary solutions to the Schrödinger equation $\ii \partial_t \psi = \mathcal H \psi$ are given by normal modes
\begin{equation}\label{eq:normal_modes}
	\psi:=\widehat{\psi}(k_x,k_y,\omega)\ee^{\ii (k_x x+k_yx-\omega t)}\,.
\end{equation}
They have momentum $\bvec{k}=(k_x,k_y)\in\mathbb{R}^2$, energy $\omega\in\mathbb{R}$ and correspond to the eigenvalue problem:
\begin{equation}\label{eq:bulk_eigenvalue_equation}
	H\widehat{\psi}=\omega\widehat{\psi}\,,\quad \widehat{\psi}=\begin{pmatrix}
		\widehat{\psi}_1\\
		\widehat{\psi}_2
	\end{pmatrix}\,,\quad H(\bvec{k})=\begin{pmatrix}
		m -\epsilon\bvec{k}^2 & -k_x +\ii k_y\\
		-k_x -\ii k_y & -m +\epsilon\bvec{k}^2
	\end{pmatrix}\,,
\end{equation}
with $\bvec{k}^2=k_x^2+k_y^2$ and $H(\bvec{k})$ a Hermitian matrix. The system admits two energy bands
\begin{equation}\label{eq:bulkbands}
	\omega_\pm = \pm \sqrt{\bvec{k}^2 + \left(m-\epsilon \bvec{k}^2\right)^2}
\end{equation}
separated by a gap of size $m$. We denote by $P_\pm(\bvec{k})$ the corresponding eigenprojections. Since $\epsilon \neq 0$, these projections are single-valued at $\bvec{k}\to \infty$, which allows to consider the associated fibre bundle over the compactified plane $\mathbb R^2 \cup \{\infty\} \cong S^2$. See \cite{GrafJudTauber21} or Appendix~\ref{app:Chern} for more details. Consequently, each projection $P$ has an associated fibre bundle index --the Chern number-- given by
\begin{equation}
	C(P)=\frac{1}{2\pi\ii}\int_{S^2} dk_xdk_y\tr\left(P\left[\partial_{k_x}P,\partial_{k_y}P\right]\right)\,.
\end{equation}
A short computation shows that 
$$
C_\pm := C(P_\pm)= \pm \dfrac{\sign{m}+\sign{\epsilon}}{2} = \pm 1.
$$
Thus, each energy band has an associated non-trivial bulk topological index.

\subsection{Self-adjoint boundary conditions}

We restrict the spatial domain to $\{(x,y)\,\vert\, y\geq 0\}\subset \mathbb{R}^2$ and study the restriction of $\mathcal H$ to it, named $\mathcal H^\sharp$. This domain preserves translation invariance in $x$-direction, so we focus on normal modes of the form $\psi:=\widetilde{\psi}(y;k_x,\omega)\ee^{\ii (k_x x-\omega t)}$. The Schr\"odinger equation $\ii \partial_t \psi = \mathcal H^\sharp \psi$ is  reduced to the study of a one-parameter family of operators on $L^2(\mathbb R^+)$:
\begin{equation}\label{eq:edge_Hamiltonian}
	\He \widetilde \psi = \omega \widetilde \psi, \qquad
	\He(k_x)=\begin{pmatrix}
		m -\epsilon k_x^2 + \epsilon \partial_y^2 & -k_x + \partial_y\\
		-k_x - \partial_y & -m + \epsilon k_x^2 - \epsilon \partial_y
	\end{pmatrix}\,,
\end{equation}
for $k_x,\omega \in \mathbb R$. 
%\tocheck{Check all signs in the equation above, it was not in your original note.} 
We consider any local boundary conditions for this problem:
\begin{equation}\label{eq:local_boundary_condition}
	\left.(B_0 + i k_x B_1) \widetilde{\psi} + B_2 \widetilde{\psi}^\prime \right|_{y=0}= 0
\end{equation}
where $B_0,B_1,B_2 \in \mathrm{M}_2(\mathbb C)$ are $k_x$ and $y$-independent and
\begin{equation}\label{eq:states_halfspace}
	\widetilde{\psi} = \begin{pmatrix}
		\psi_1 \\ \psi_2
	\end{pmatrix}, \qquad \widetilde{\psi}^\prime = \begin{pmatrix}
	\partial_y \psi_1 \\ \partial_y \psi_2
\end{pmatrix}.
\end{equation}
For example, Dirichlet's boundary condition corresponds to $B_0=I_2$ and $B_1=B_2=0$. However, not all matrices in \eqref{eq:local_boundary_condition} define a self-adjoint operator associated to \eqref{eq:edge_Hamiltonian}. The first result of this paper is to classify all local self-adjoint boundary conditions. It appears convenient to regroup the $B_i$ matrices in  $2\times4$-matrices $A,A_0,A_1 \in \mathrm{M}_{2,4}(\mathbb C)$:
\begin{equation}
	A := A_0 + \ii k_x A_1, \qquad A_0 := [B_0 | B_2], \qquad A_1 := [B_1 | 0],
\end{equation}
so that \eqref{eq:local_boundary_condition} is equivalent to 
\begin{equation}\label{eq:local_boundary_condition_A}
\left.	A \Psi\right|_{y=0} =0, \qquad \Psi = \begin{pmatrix} \widetilde\psi \\ \widetilde\psi^\prime \end{pmatrix} \in \mathbb C^4.
\end{equation}

\begin{defn}
Two boundary conditions given by $A$ and $\widetilde{A}$ are equivalent if $\widetilde{A} = B A$ for some $B \in  \mathrm{GL}_2(\mathbb C)$.
\end{defn}

Equivalent boundary conditions preserve self-adjointness as well as anomalies, see below. 
%\tocheck{Check the latter. A more natural invariance would be a change of spinor basis, namely $(\widetilde\psi, \widetilde\psi^\prime) \mapsto (G \widetilde\psi, G\widetilde\psi^\prime)$ for $G \in \mathrm{GL}_2(\mathbb C)$. This would correspond to acts on the right on $A$. Is that equivalent/related? it is true however that what we do is mathematically more convenient.}

\begin{thm}\label{thm:SABC}
For almost every $k_x \in \mathbb R$ the operator $\He(k_x)$ together with boundary condition \eqref{eq:local_boundary_condition_A} is self-adjoint on $L^2(\mathbb R_+)$ if and only if $A \in \mathrm{M}_{2,4}(\mathbb C)$ is a rank-2 matrix and  is equivalent to one of the seven classes $\mathfrak A_{i,j}, \mathfrak B$ and $\mathfrak C$ given in Table~\ref{tab:self-adjoint_classes}.
\end{thm}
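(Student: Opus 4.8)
The plan is to convert self-adjointness into a linear-algebra condition on $\ker A$ through the boundary sesquilinear form of $\He(k_x)$, and then to classify the admissible kernels by the Schubert stratification of $\mathrm{Gr}(2,4)$.

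First I would establish Green's identity. Since $\He(k_x)$ is formally symmetric and its only $y$-differential parts are the diagonal second-order terms $\pm\epsilon\partial_y^2$ and the off-diagonal first-order terms $\pm\partial_y$ (the $-\epsilon k_x^2$ and $-k_x$ entries being multiplication operators), two integrations by parts on $L^2(\mathbb R_+)$ give, for $\phi,\psi$ in the maximal domain,
\begin{equation}
	\langle \phi, \He(k_x)\psi\rangle - \langle \He(k_x)\phi, \psi\rangle = \left.\Phi^* J\, \Psi\right|_{y=0},
	\qquad
	J = \begin{pmatrix} 0 & -1 & -\epsilon & 0\\ 1 & 0 & 0 & \epsilon\\ \epsilon & 0 & 0 & 0\\ 0 & -\epsilon & 0 & 0\end{pmatrix},
\end{equation}
where $\Phi,\Psi \in \mathbb C^4$ are the boundary data of \eqref{eq:local_boundary_condition_A}. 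The key structural facts to record are that $J$ is independent of $k_x$, that $J^* = -J$ with $\det J = \epsilon^4 \neq 0$, and hence that $\ii J$ is a nondegenerate Hermitian matrix of signature $(2,2)$; equivalently $\Phi^*J\Psi$ is a nondegenerate skew-Hermitian (complex symplectic) form on $\mathbb C^4$.

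Second, I would treat the singular endpoint. Inserting $\ee^{\lambda y}$ into $\He(k_x)\psi = z\psi$ yields a biquadratic characteristic equation whose four roots, for $\Im z \neq 0$ and $k_x$ outside an exceptional null set, split into two with $\Re\lambda<0$ and two with $\Re\lambda>0$. Thus $y=\infty$ is in the limit-point case and carries no boundary data, the deficiency indices equal $(2,2)$, and the quotient of the maximal by the minimal domain is realized by the boundary data at $y=0$, i.e. by $\mathbb C^4$ with the form $J$. By the Glazman--Krein--Naimark / complex-symplectic description of self-adjoint extensions, the restriction of $\He(k_x)$ by $A\Psi|_{y=0}=0$ is self-adjoint if and only if $V:=\ker A$ is a Lagrangian (maximal neutral) subspace for $J$. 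As $\ii J$ has signature $(2,2)$, Lagrangians are exactly the isotropic $2$-planes, so self-adjointness forces $\dim V = 2$, i.e. $\operatorname{rank} A = 2$, and the condition reads $N^* J N = 0$ for any frame $N$ of $V$, equivalently $A J^{-1} A^* = 0$. This is where the ``almost every $k_x$'' enters: one discards the finitely many $k_x$ at which the characteristic roots touch the imaginary axis or at which $A = A_0+\ii k_x A_1$ drops rank.

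Third comes the classification. Because $\widetilde A = BA$ with $B\in\mathrm{GL}_2(\mathbb C)$ leaves $\ker A$ unchanged, equivalence classes of rank-$2$ matrices are in bijection with their kernels, i.e. with points of $\mathrm{Gr}(2,4)$. I would stratify $\mathrm{Gr}(2,4)$ by the Schubert cells attached to the coordinate flag adapted to the splitting $\mathbb C^4 = \mathbb C^2_{\widetilde\psi}\oplus\mathbb C^2_{\widetilde\psi'}$ of \eqref{eq:states_halfspace}: bringing $A$ to reduced row echelon form, the pivot pattern labels the cell and the remaining free entries coordinatize the plane. Within each cell I would impose the isotropy equations $N^*JN=0$ and normalize the surviving parameters by the residual $\mathrm{GL}_2(\mathbb C)$-freedom; solving these few quadratic relations cell by cell produces the explicit normal forms, and collecting them yields precisely the seven families $\mathfrak A_{i,j}$, $\mathfrak B$ and $\mathfrak C$ of Table~\ref{tab:self-adjoint_classes}. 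The routine part is the per-cell isotropy computation, elementary once $J$ is in hand. I expect the genuine obstacle to be the analysis at infinity underpinning the ``almost every $k_x$'' statement: the principal part $\epsilon\,\mathrm{diag}(\partial_y^2,-\partial_y^2)$ is not sign-definite, so I must verify carefully, uniformly in $k_x$ off a null set, that exactly two solutions decay at $+\infty$ for $\Im z\neq 0$, thereby pinning down the deficiency indices and justifying the reduction to the finite-dimensional boundary form; a secondary subtlety is to check that the seven Schubert strata are exhaustive and mutually inequivalent, so that no Lagrangian kernel is missed or double-counted.
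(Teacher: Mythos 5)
Your first two steps are sound and, if anything, more careful than the paper's own treatment: the paper (Lemma~\ref{lem:Omega} and Proposition~\ref{prop:sa_bdry_cond}) derives the same boundary form $\Phi^*\Omega\Psi$ (your $J$ is $-\Omega$, an immaterial sign convention) and the same characterization $\Omega M = M^\perp$, equivalently $\rk A = 2$ and $A\Omega^{-1}A^*=0$, by a direct $H^*\subset H$, $H\subset H^*$ argument on the domain $H^2\oplus H^2$, rather than through deficiency indices and the GKN framework; your limit-point analysis at $y=\infty$ is a legitimate, arguably more rigorous, route to the same finite-dimensional reduction.

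The gap is in the classification step. You identify equivalence classes of boundary conditions with points of $\mathrm{Gr}(2,4)$ and propose to impose isotropy Schubert cell by Schubert cell. But $A=A_0+\ii k_x A_1$ is a \emph{pencil} of matrices: the object being classified is the pair $(A_0,A_1)$ (with $A_1=[B_1\,|\,0]$), and the admissible equivalence $\widetilde A = BA$ uses a single $k_x$-independent $B$ acting simultaneously on $A_0$ and $A_1$. Pointwise in $k_x$ the kernel is indeed a Lagrangian plane, but bringing $A(k_x)$ to row echelon form generally requires $k_x$-dependent row operations, which the equivalence does not allow -- e.g.\ in class $\mathfrak{B}$, where $A=A_0+\ii k_x[I_2\,|\,0]$ with $\rk A_0\le 1$, no constant $B$ achieves echelon form. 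Moreover, self-adjointness for almost every $k_x$ is a polynomial identity in $k_x$ and must be split into its coefficients (the two equations \eqref{eq:sa_constraint2x2}); your isotropy condition $N^*JN=0$ is imposed at fixed $k_x$ and this splitting never appears. This is also why your approach cannot yield seven classes: $\mathrm{Gr}(2,4)$ has six Schubert cells, of which one ($\mathcal{C}_{\{1,3\}}$) is incompatible with isotropy, so a cell-by-cell analysis produces at most five families -- exactly the classes $\mathfrak{A}_{i,j}$, which the paper obtains by applying the Schubert reduction to $A_0$ \emph{alone} in the case $\rk A_0=2$. The remaining classes $\mathfrak{B}$ ($\rk A_1=2$, $\rk A_0\le 1$) and $\mathfrak{C}$ ($\rk A_0=\rk A_1=1$) are not Schubert strata of any Grassmannian; they arise from a case distinction on the ranks of the pair $(A_0,A_1)$, which is the organizing step missing from your argument (your closing reference to ``the seven Schubert strata'' signals precisely this confusion). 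Contrary to your assessment, this rank bookkeeping and the coefficient-wise constraint solving -- not the endpoint analysis -- is where the substance of the paper's proof lies.
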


\begingroup
\setlength{\tabcolsep}{20pt}
\renewcommand{\arraystretch}{1}
\begin{table}[h!]
	\centering
	\begin{tabular}{|c|c|c|}
		\hline
	Class &	$A_0$ & $A_1$	\\
		\hline
		\hline   
		& & \\
		
	$\mathfrak{A}_{1,2}$ &	$\begin{pmatrix}
			1 & 0 & 0 & 0\\
			0 & 1 & 0 & 0
		\end{pmatrix}$	& $\begin{pmatrix}
			b_{11} & b_{12} & 0 & 0\\
			b_{21} & b_{22} & 0 & 0
		\end{pmatrix}$	 \\ 
	
	 		& & \\
		\hline
	 		& & \\
	$\mathfrak{A}_{1,4}$ &	$\begin{pmatrix}
			1 & 0 & 0 & 0\\
			0 & \alpha & 0 & 1
		\end{pmatrix}$ 	& $\begin{pmatrix}
		b_{11} & 0 & 0 & 0\\
		b_{21} & \ii \beta & 0 & 0
	\end{pmatrix}$	\\

&$\alpha\in\mathbb{R}$ &  $\beta\in\mathbb{R}$ \\ 
	\hline
		 		& & \\
	$\mathfrak{A}_{2,3}$ & $\begin{pmatrix}
			0 & 1 & 0 & 0\\
			\alpha & 0 & 1 & 0
		\end{pmatrix}$	& $\begin{pmatrix}
		0 & b_{12} & 0 & 0\\
		\ii \beta & b_{22} & 0 & 0
	\end{pmatrix}$ \\

& $\alpha\in\mathbb{R}$ & $\beta\in\mathbb{R}$ \\
		\hline
	 		& & \\
	$\mathfrak{A}_{2,4}$ &	$\begin{pmatrix}
			a_{11} & 1 & 0 & 0\\
			a_{21} & 0 &(a_{11}^*)^{-1}& 1
		\end{pmatrix}$	& $\begin{pmatrix}
		b_{11} & b_{11}(a_{11})^{-1} & 0 & 0\\
		 b_{21} & b_{22} & 0 & 0
	\end{pmatrix}$ \\

&  $\scriptstyle a_{11}\in \mathbb C\setminus \{0\},\quad a_{21} = \alpha a_{11 }+\epsilon^{-1}, \quad \alpha \in \mathbb R$ &  $\scriptstyle b_{22}-b_{21}a_{11}^{-1}=\ii \beta, \quad \beta \in \mathbb R$ \\
%$\Re b_{22} = \Re ({b}_{21}/a_{11})$  \\
		\hline
	 		& & \\
	$\mathfrak{A}_{3,4}$ &	$\begin{pmatrix}
			\alpha_{1} & a_{12} & 1 & 0\\
			\einv -a_{12}^* & \alpha_{2} & 0 & 1
		\end{pmatrix}$	& $\begin{pmatrix}
		\ii \beta_{1} & b_{12} & 0 & 0\\
		b_{12}^* & \ii \beta_{2} & 0 & 0
	\end{pmatrix}$ \\

& $\alpha_{1}, \alpha_{2}\in\mathbb{R}$ & $\beta_{1},\beta_{2} \in \mathbb R$ \\
		\hline
		\hline
			 		& & \\
	$\mathfrak B$	&$\begin{pmatrix}
			a_1 & a_2 & \ii\alpha & - \ii  \mu^*\alpha\\
			\mu a_1 & \mu a_2 & \ii  \mu  \alpha & - \ii |\mu|^2  \alpha
		\end{pmatrix}$	& $\begin{pmatrix}
			1 & 0 & 0 & 0\\
			0 & 1 & 0 & 0
		\end{pmatrix}$	 \\

	& $\scriptstyle \alpha\in\mathbb{R}$, $\scriptstyle \alpha \big(\alpha \Im(\mu) - \epsilon \Re(a_1-a_2\mu)\big)=0$&\\
	
	%\alpha\left(\epsilon(\Re{a_1}+\Re{(\mu a_2)})+ \alpha \Im{\mu}\right) =0$ & \\
%		\hline
%			$\mathfrak B_0$	&$\begin{pmatrix}
%			a_1 & a_2 & 0 & 0\\
%			\lambda a_1 & \lambda a_2 & 0& 0
%		\end{pmatrix}$	& $\begin{pmatrix}
%			1 & 0 & 0 & 0\\
%			0 & 1 & 0 & 0
%		\end{pmatrix}$	 \\
%		
%		\hline
%			$\mathfrak B_1$	&$\begin{pmatrix}
%			a_1 & a_2 & \alpha_3 & -\bar{\lambda} \ii \alpha_3\\
%			\lambda a_1 & \lambda a_2 & \lambda \ii \alpha_3 & -|\lambda|^2 \ii \alpha_3
%		\end{pmatrix}$	& $\begin{pmatrix}
%			1 & 0 & 0 & 0\\
%			0 & 1 & 0 & 0
%		\end{pmatrix}$	 \\
%		
%		& $\scriptstyle \alpha_3=-\epsilon(\Re{a_1}+\Re{(\lambda a_2)})(\Im{\lambda})^{-1}$ & \\
		\hline
		\hline
			 		& & \\
	$\mathfrak C$ &	$\begin{pmatrix}
			a_1 & a_2 & 0 & a_4\\
			\mu a_1 & \mu a_2 & 0 & \mu a_4
		\end{pmatrix}$	& $\begin{pmatrix}
			1 & 0 & 0 & 0\\
			0 & 0 & 0 & 0
		\end{pmatrix}$  \\

	& $\scriptstyle (a_2,a_4)\neq 0,\, \mu \in \mathbb C\setminus\{0\},\, \Im\left(a_2a_4^*\right)=0$ & \\
		\hline
	\end{tabular}
	\caption{Classification of local self-adjoint boundary conditions from Theorem~\ref{thm:SABC}, up to $\mathrm{GL}_2$-invariance. All unconstrained parameters are arbitrary complex numbers: $a_{ij}, b_{ij}, a_i \in \mathbb C$.\label{tab:self-adjoint_classes}}
\end{table}
\endgroup

The proof of this theorem can be found in Section~\ref{sec:proof_sabc}. We relate self-adjoint extensions of $H^\sharp$ with stable subspaces generated by $\ker A$, then use the  $\mathrm{GL}_2$-invariance to identify $A$ with elements of the Grassmanian $\mathrm{Gr}(4,2)$. The subclasses $\mathfrak{A}_{i,j}$ correspond to a Schubert cell decomposition which is compatible with the self-adjoint constraint. 

\begin{rem}
$A_0$ and $A_1$ are $k_x$-independent, so each couple $(A_0,A_1)$ provides a  self-adjoint operator $\mathcal H^\sharp$ on $L^2(\mathbb R\times \mathbb R^+)$. It may happen exceptionally that $A=A_0+\ii k_{x,0}A_1$ is not of rank $2$ for some $k_{x,0}\in \mathbb R$ (e.g. $k_{x,0}=0$ in class $\mathfrak{B}$) so that that $H^{\sharp}(k_{x,0})$ is not self-adjoint. However this would happen only for a finite number of $k_x$ so that  $\mathcal H^\sharp$ remains globally self-adjoint.
\end{rem}

\subsection{Edge modes} For each boundary condition, equation \eqref{eq:edge_Hamiltonian} is a system of two coupled ODEs in $y$ which is exactly solvable for every $k_x,\, \omega$.  We consider the spectrum of $\He$, namely solutions to \eqref{eq:edge_Hamiltonian} which are bounded in $y$, and plot it in the $(k_x,\omega)$-plane. We distinguish two kinds of solutions. For $|\omega|\geq \omega_+(k_x,0)$, see \eqref{eq:bulkbands}, the solutions are oscillatory modes delocalized in the upper half-plane, they correspond to the essential spectrum of $\mathcal H$. We call them bulk modes. On the other hand, for   $|\omega|\leq \omega_+(k_x,0)$, there may exist  solutions which decay exponentially when $y \to \infty$. We call them edge modes. As $k_x$ varies, edge modes draw continuous curves that we count as follows.
\begin{defn}
	The number $n_\mathrm{b}$ of edge modes below a bulk band is the signed number of
	edge mode branches emerging $(+)$ or disappearing $(-)$ at the lower band limit, as $k_x \in \mathbb R$ increases.
	The number $n_\mathrm{a}$ of edge modes above a band is counted likewise up to a global sign change.
\end{defn}

\newpage
\begin{exmp}\label{example}
To illustrate the diversity of situations we plot three examples of the spectrum of $\He$ in Figure~\ref{fig:edgemodes} for the following boundary condition:
\begin{itemize}
	\item Dirichlet boundary condition: class $\mathfrak{A}_{12}$ with $A_0= [I_2 | 0 ]$ and $A_1 = [0| 0]$. This corresponds to $\psi_1=\psi_2=0$ at $y=0$.
	\item Condition a: class $\mathfrak A_{1,4}$ with $A_0 = \begin{pmatrix}
		1 &0&0&0\\0 & 0 & 0 & 1
	\end{pmatrix}$ and $A_1 = \begin{pmatrix}
	0 &0&0&0\\0 & -\ii & 0 &0
\end{pmatrix}$ which corresponds to $\psi_1=0$ and $\psi_2^\prime + k_x \psi_2  =0$ at $y=0$.
\item Condition b: class $\mathfrak A_{3,4}$ with 
$A_0=\begin{pmatrix}
	1 & 1 & 1 & 0 \\
	9 & 1 & 0 & 1 \\
\end{pmatrix}
$ and $
A_1=
\begin{pmatrix}
	4 \ii & -\ii & 0 & 0 \\
	\ii & -4 \ii & 0 & 0 \\
\end{pmatrix}$.
\end{itemize}
% Dirichelt's boundary condition as well as one element of class $\mathfrak A_{1,4}$
%\begin{equation}\label{eq:boundary_condition_example}
%A_0 = \begin{pmatrix}
%	1 &0&0&0\\0 & 0 & 0 & 1
%\end{pmatrix}, \qquad A_1 = \begin{pmatrix}
%0 &0&0&0\\0 & \ii \beta & 0 &0
%\end{pmatrix}, \quad\beta \in \mathbb R
%\end{equation}
%which corresponds to $\psi_1=0$ and $\psi_1^\prime - \beta k_x \psi_2  =0$ at the boundary.
\begin{figure}[h!]
	\centering
	\includegraphics[width=0.3\textwidth]{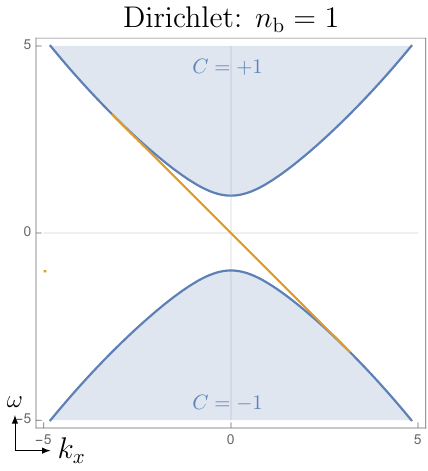}\hspace{0.03\textwidth}
	\includegraphics[width=0.3\textwidth]{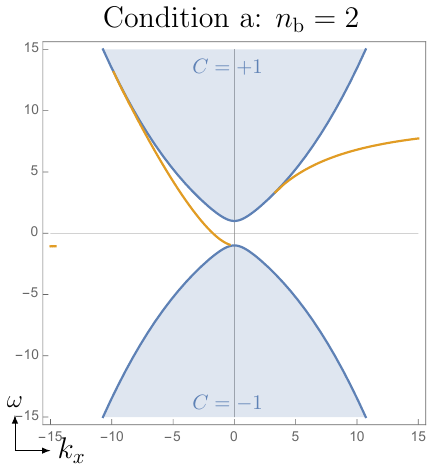}\hspace{0.03\textwidth}
		\includegraphics[width=0.3\textwidth]{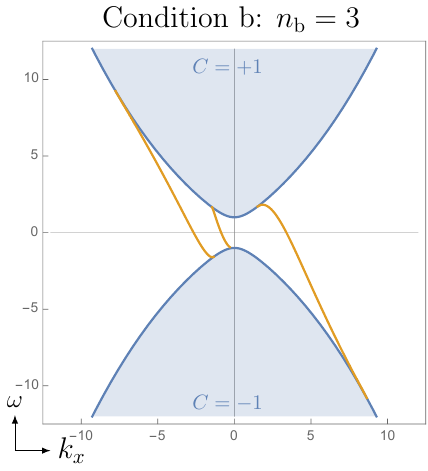}
\caption{Edge spectrum of $\He$ for $m=1$, $\epsilon=0.1$ and various boundary conditions from Example~\ref{example}. Blue regions correspond to bulk modes and yellow curves to edge modes. Their number below the upper band is respectively $n_\mathrm{b} = 1,2$ and $3$.\label{fig:edgemodes}}
\end{figure}
We focus on the edge modes below the upper band. For Dirichlet boundary condition, we get $n_\mathrm{b}=1=C_+$, in agreement with the bulk-edge correspondence. However, such a correspondence is violated for boundary conditions (a) and (b), for which we get $n_\mathrm{b}=2$ and $3$, respectively. Changing $-\ii$ to $\ii$ in $A_1$ of Condition (a) leads to a spectrum with no edge modes, so that $n_\mathrm{b}=0$. As we shall see, for this model all the possible values are $n_\mathrm{b} = -1,0,1,2,3$, see Remark~\ref{rem:mainthm} below.
\end{exmp}

%The number of edge modes depends on $\beta$ and never matches the Chern number. In particular, for $\beta=1$, there are no edge modes even if the bulk system is topologically non-trivial with $C_+=1$. 

Similarly to \cite{GrafJudTauber21}, a boundary condition with $C_+\neq n_\mathrm{b}$ is said to be anomalous. Notice that the mismatch cannot be cured by looking at higher --but finite-- values of $k_x$ and $\omega$, possibly revealing extra edge modes that are not visible in Figure~\ref{fig:edgemodes}. The anomaly is actually rooted in a singularity at $k_x \to \infty$, and can be precisely detected for any boundary condition via the scattering amplitude formalism.

\subsection{Scattering amplitude}

For the rest of the paper we focus on the upper part of the spectrum $\omega>0$. A similar analysis could be performed on the lower part\footnote{In general, a boundary condition does not preserve the symmetry of the bulk, so the nature --anomalous or not-- of the upper band may be different than the lower one. However the analysis of the lower band would be very similar.}. Let $k_x\in \mathbb R$ and fix $\omega > \omega_+(k_x,0) = \sqrt{k_x^2 + (m-\epsilon k_x^2)^2}$. The bulk band equation
\begin{equation}\label{eq:def_kappa}
	k_x^2 + k_y^2 + (m-\epsilon(k_x^2 + k_y^2))^2 = \omega^2
\end{equation}
has four solutions in $k_y$, two are opposite and real and two are opposite and purely  imaginary. Let $\kappa=\kappa(k_x,\omega)$ be the real positive one. Then the other solutions are: $-\kappa<0$ and 
\begin{equation}\label{eq:kappa_ev_div}
	\kappa_{\mathrm{ev}/\mathrm{div}} = \pm \ii \sqrt{\kappa^2 + 2 k_x^2 + \frac{1-2\epsilon m}{\epsilon^2}} \in \ii\mathbb{R}_+\,.
\end{equation}
Reciprocally, for fixed $k_x\in \mathbb R$ and $\kappa>0$, $k_y= \kappa,-\kappa,\kappa_\mathrm{ev}$ and $\kappa_\mathrm{div}$ are all solution to \eqref{eq:def_kappa} for the same $\omega = \omega_+(k_x,\kappa)$. Let $\widehat \psi_\mathrm{in}$, $\widehat \psi_\mathrm{out}$ and $\widehat \psi_\mathrm{ev}$ be bulk eigenstates, namely solutions of $H\widehat \psi = \omega \widehat\psi$, see \eqref{eq:bulk_eigenvalue_equation}, with momentum $k_x$ and $k_y=-\kappa, \kappa, \kappa_{\mathrm{ev}}$, respectively. The associated normal modes are not solutions of the half-space problem, but a linear combination can be. We define the scattering state as
\begin{equation}\label{eq:def_scattering_state}
\psi_\mathrm{s} := \widetilde \psi_\mathrm{s} \ee^{\ii (k_x x- \omega t)}, \qquad  \widetilde \psi_\mathrm{s}:= \widehat \psi_\mathrm{in} \ee^{-\ii \kappa y}  +  S \widehat \psi_\mathrm{out} \ee^{\ii \kappa y}  + T \widehat \psi_\mathrm{ev} \ee^{\ii \kappa_\mathrm{ev}y}.
\end{equation}
Such a state has well-defined momentum $k_x$ and energy $\omega = \omega_+(k_x,\kappa)$. Each term is respectively interpreted as an incoming, outgoing and evanescent state with respect to the boundary $y=0$. The coefficients $S$ and $T$ are tuned so that $\widetilde \psi_\mathrm{s}$ satisfies the boundary condition \eqref{eq:local_boundary_condition_A}, and $S$ is called the scattering amplitude\footnote{In this framework, $S$ is actually a pure reflection amplitude, since the incoming wave cannot be transmitted across the sharp boundary.}. 

We are interested in the properties of $S$ as $(k_x,\kappa)$ varies. Let $U_\mathrm{out} \subset \mathbb R^2$ be an open subset and let $U_\mathrm{in}$ and $U_\mathrm{ev}$ be the images of it by the maps $(k_x,\kappa) \mapsto (k_x,-\kappa)$ and $(k_x,\kappa) \mapsto (k_x,\kappa_\mathrm{ev})$. We assume  $\widehat \psi_\mathrm{in}$, $\widehat \psi_\mathrm{out}$ and $\widehat \psi_\mathrm{ev}$ to be regular and non-vanishing on their respective domains, and also of amplitude one. Namely $\lVert \widehat \psi_\mathrm{out}(k_x,\kappa) \rVert =1$ for $k_x,\kappa \in U_\mathrm{out}$, and similarly for the others. In that case, the scattering state \eqref{eq:def_scattering_state} is uniquely defined and moreover $S(k_x,\kappa) \in U(1)$.

The relevance of the scattering amplitude comes from the following result:
\begin{thm}{\cite[Thm.~9]{GrafJudTauber21}}\label{thm:GJT21}
Let $\Gamma$ be any smooth, counter-clockwise and not self-intersecting loop in the half-plane $\{(k_x,\kappa) \in \mathbb R^2, \kappa >0\}$. Then
\begin{equation}\label{eq:buk_scattering}
\dfrac{1}{2\pi \ii}	\int_\Gamma S^{-1} \dd S = C_+
\end{equation}
Moreover,
\begin{equation}\label{eq:S_nb}
\lim_{\kappa \to 0} \int_\mathbb R \dd k_x (S^{-1}\partial_{k_x} S)(k_x,\kappa)	= n_\mathrm{b}
\end{equation}
\end{thm}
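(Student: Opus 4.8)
\section*{Proof proposal}

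The plan is to make the scattering amplitude $S$ explicit, and then to treat the two identities by different means: \eqref{eq:buk_scattering} as a homotopy/winding statement tied to the bulk Berry curvature, and \eqref{eq:S_nb} as a relative Levinson theorem at the band edge. First I would solve the boundary matching. Abbreviating $\Phi_\bullet := (\widehat\psi_\bullet,\ii k_{y,\bullet}\widehat\psi_\bullet)^\top\in\mathbb C^4$ for $\bullet\in\{\mathrm{in},\mathrm{out},\mathrm{ev}\}$ with $k_{y,\mathrm{in}}=-\kappa$, $k_{y,\mathrm{out}}=\kappa$ and $k_{y,\mathrm{ev}}=\kappa_\mathrm{ev}$, substituting \eqref{eq:def_scattering_state} into \eqref{eq:local_boundary_condition_A} gives $A(\Phi_\mathrm{in}+S\,\Phi_\mathrm{out}+T\,\Phi_\mathrm{ev})=0$, two linear equations for the two unknowns $S,T$. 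Since $A$ has rank $2$, Cramer's rule yields the determinant ratio $S=-\det[A\Phi_\mathrm{in}\mid A\Phi_\mathrm{ev}]/\det[A\Phi_\mathrm{out}\mid A\Phi_\mathrm{ev}]$, and self-adjointness together with the unit normalisation of the $\widehat\psi_\bullet$ forces $|S|=1$. The structural fact to record is the gauge covariance: under $\widehat\psi_\mathrm{out}\mapsto\ee^{\ii\alpha}\widehat\psi_\mathrm{out}$ and $\widehat\psi_\mathrm{in}\mapsto\ee^{\ii\beta}\widehat\psi_\mathrm{in}$ one has $S\mapsto\ee^{\ii(\beta-\alpha)}S$, whereas re-phasing $\widehat\psi_\mathrm{ev}$ leaves $S$ invariant.

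For \eqref{eq:buk_scattering} the starting point is that, $S$ being scalar and unimodular, the one-form $S^{-1}\dd S$ is automatically closed, so $\int_\Gamma S^{-1}\dd S$ is a homotopy invariant on the set where $S$ is smooth and non-vanishing. By the covariance above this set is precisely where the bulk eigenstates admit a smooth unit-norm representative. Because the upper band carries Chern number $C_+=1$ over the compactified plane $S^2$, no such global representative exists, and the obstruction forces $S$ to develop an isolated vortex in the interior of the half-plane. Introducing the Berry connection $\mathcal A=\braket{\widehat\psi}{\dd\widehat\psi}$ of the upper band and its pullbacks $\mathcal A_\mathrm{out},\mathcal A_\mathrm{in}$ along the outgoing map $(k_x,\kappa)\mapsto(k_x,\kappa)$ and the incoming map $(k_x,\kappa)\mapsto(k_x,-\kappa)$, the determinant formula gives $S^{-1}\dd S=\mathcal A_\mathrm{out}-\mathcal A_\mathrm{in}+\dd\theta$ with $\theta$ a single-valued gauge-invariant phase. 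The winding of $S$ around the vortex is then the net Berry phase of $\widehat\psi_\mathrm{out}$ minus that of $\widehat\psi_\mathrm{in}$, which by the clutching description of a line bundle of first Chern number $C_+$ equals $\tfrac{1}{2\pi\ii}\int_{S^2}\dd\mathcal A=C_+$. As every simple counter-clockwise loop encircling the obstruction is homotopic, the common value is $C_+$.

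For \eqref{eq:S_nb} I would set up a relative Levinson theorem at the lower band edge $\kappa\to0$. Edge modes are the bound states of $\He(k_x)$ below the band and appear as the values of $(k_x,\omega)$ at which the meromorphic continuation of $S$ past the threshold $\omega=\omega_+(k_x,0)$ develops a pole; equivalently, a branch touches the band edge exactly when $S$ has a threshold resonance at $\kappa=0$. With the signed convention defining $n_\mathrm{b}$, the phase of $S(\cdot,\kappa)$ along the edge jumps by $2\pi$ at each crossing, with sign given by whether a branch emerges or disappears as $k_x$ increases. Hence $\lim_{\kappa\to0}\int_{\mathbb R}\dd k_x\,(S^{-1}\partial_{k_x}S)$ collects exactly these jumps and equals $2\pi\ii\,n_\mathrm{b}$, while the boundary contributions at $k_x\to\pm\infty$ vanish because the regularisation $\epsilon>0$ drives $S$ to a constant there.

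The main obstacle in both parts is the bookkeeping of the evanescent channel $\widehat\psi_\mathrm{ev}$ and of the orientation/reflection conventions. For \eqref{eq:buk_scattering} the delicate points are proving that the remainder $\dd\theta$ carries no extra winding and that, after the reflection $\kappa\mapsto-\kappa$ relating the incoming and outgoing sheets, the contributions of $\mathcal A_\mathrm{out}$ and $\mathcal A_\mathrm{in}$ combine into a single quantum $C_+$ rather than $0$ or $2C_+$; this is exactly where the single-valuedness of $P_+$ at $\bvec k\to\infty$, hence the compactification to $S^2$, is indispensable. For \eqref{eq:S_nb} the subtlety is the interchange of the limit $\kappa\to0$ with the non-compact $k_x$-integration and the proof that threshold resonances contribute integer phase jumps; both require uniform control of $S$ near the band edge, which the exact solvability of \eqref{eq:edge_Hamiltonian} supplies.
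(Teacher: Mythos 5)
Your Cramer's-rule derivation of $S$ and its gauge covariance are correct and coincide with what this paper actually proves (Proposition~\ref{prop:S_explicit}); note, however, that the paper does \emph{not} prove Theorem~\ref{thm:GJT21} itself --- it is imported from \cite[Thm.~9]{GrafJudTauber21} --- and the only place it is used is Lemma~\ref{lem:C+nbomegainfty}, where it is applied to the circle $\mathcal C_\delta$, which touches the boundary of the half-plane at the origin. That detail is not cosmetic: it is exactly where your proof of \eqref{eq:buk_scattering} breaks down. You claim the Chern obstruction ``forces $S$ to develop an isolated vortex in the interior of the half-plane'' and compute the winding around it. No such interior vortex exists. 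By Proposition~\ref{prop:S_explicit}, $S(k_x,\kappa)=-g(k_x,-\kappa)/g(k_x,\kappa)$ with $g$ built from $\widehat\psi^0$ (singular only at $\bvec k=0$, a \emph{boundary} point of the half-plane) and $\widehat\psi^\infty$ evaluated at $\kappa_{\mathrm{ev}}$, so $S$ is smooth and unimodular at every interior point where $g\neq 0$. Moreover, at an interior zero of $g$, self-adjointness (the vanishing of the boundary form $\Psi^*\Omega\Psi$ of Lemma~\ref{lem:Omega}, i.e.\ current conservation applied to a putative ``outgoing plus evanescent'' solution) forces the outgoing coefficient to vanish, so the zero can only come from $A\widehat\Psi^\infty(\kappa_{\mathrm{ev}})=0$; since $\kappa_{\mathrm{ev}}$ depends on $(k_x,\kappa)$ only through $\kappa^2+2k_x^2$, this column vanishes along a curve and enters numerator and denominator through a common real scalar factor, which cancels: the singularity of $S$ is removable and carries no vorticity. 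Consequently $\oint_\Gamma S^{-1}\dd S=0$ for \emph{every} loop contained in the open half-plane, and your argument, run to completion, would prove $C_+=0$.

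The actual mechanism is a boundary vortex, and a correct proof must engage it. Near $(k_x,\kappa)=(0,0)$ one has $g\sim \ee^{-\ii C_+\phi}H$ with $\phi=\arg(k_x+\ii\kappa)$ and $H$ continuous and non-vanishing, hence $S\sim-\ee^{2\ii C_+\phi}\,(H\circ\iota)/H$ with $\iota(k_x,\kappa)=(k_x,-\kappa)$; along a contour passing through the origin, such as $\mathcal C_\delta$, the angle $\phi$ sweeps from $0$ to $\pi$, and the factor $2$ produced by the reflection $\kappa\mapsto-\kappa$ converts this half-turn into the full winding $C_+$, while for loops avoiding the origin the contributions cancel. (Equivalently, your own Berry-connection identity $S^{-1}\dd S=\mathcal A_{\mathrm{out}}-\mathcal A_{\mathrm{in}}+\dd\theta$, integrated with Stokes, gives the curvature flux through $D\cup\iota(D)$, which is not quantized for an interior loop and only tends to $\frac{1}{2\pi\ii}\int_{S^2}F=C_+$ when $D$ exhausts the half-plane --- again a statement about boundary-hugging contours, not about a vortex.) This is precisely the ``delicate point'' you flagged but did not resolve, and it is where the compactification at $\bvec k\to\infty$ and the vortex of $\widehat\psi^0$ at $\bvec k=0$ actually enter. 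Your sketch of \eqref{eq:S_nb} is the right idea (the relative Levinson theorem of \cite{GrafPorta13}), but it too omits the fact that makes the bookkeeping work: $S(k_x,0)=-g(k_x,0)/g(k_x,0)=-1$ away from merging points, so that in the limit $\kappa\to 0$ all phase variation concentrates at the zeros of $g(\cdot,0)$, each contributing a quantized jump; without this, the claim that the integral ``collects exactly these jumps'' is unsupported.
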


The scattering amplitude appears as a pivotal point to understand the bulk-edge correspondence. On the one hand, $S$ can be interpreted as a transition function between two bulk sections, and its winding is naturally related to the Chern number. Eq.~\eqref{eq:buk_scattering} is sometimes called the bulk-scattering correspondence. On the other hand, $\omega(k_x,\kappa)\to \omega(k_x,0)$ as $\kappa \to 0$ so that Eq.~\eqref{eq:S_nb} is the change of argument of $S$ near the bottom of the bulk spectrum. This quantity detects merging events of edge mode branches with the bulk band, thank to a relative version of Levinson's theorem, originally developped in \cite{GrafPorta13}.

\subsection{Detecting anomalies: winding number}

One sees from Theorem~\ref{thm:GJT21} above that in order for the bulk-edge correspondence to hold, the scattering amplitude needs to be regular and not winding at infinity. In that case, we can deform the loop $\Gamma$ to the real line $\kappa =0$ and recover $C_+=n_\mathrm{b}$. In order to investigate the infinite part of the spectrum, we need more on the scattering matrix. For a bulk section $\widehat \psi(k_x,k_y)$, solution of $H\widehat \psi = \omega \widehat\psi$, see \eqref{eq:bulk_eigenvalue_equation}, we denote by 
\begin{equation}\label{eq:widehatpsi}
\widehat \Psi(k_x,k_y) = \begin{pmatrix}
	\widehat \psi(k_x,k_y) \\ \ii k_y \widehat \psi(k_x,k_y)
\end{pmatrix} \in \mathbb C^4
\end{equation}
 and notice that $A \widehat\Psi$ is a column vector of size two for $A \in \mathrm{M}_{2,4}(\mathbb C)$.

\begin{prop}\label{prop:S_explicit}
	Let $\widehat \psi^0$ a bulk section that is regular and non-vanishing in $\mathbb R^2\setminus\{0\} \cup \{\infty\}$ and let $\widehat \psi^\infty$ be a bulk section that is regular and non-vanishing in $\mathbb R^2$. Let $A \in \mathrm{M}_{2,4}(\mathbb C)$ of rank 2 be a boundary condition as in \eqref{eq:local_boundary_condition_A}. The scattering amplitude reads:
	\begin{equation}\label{eq:S_explicit}
		S(k_x,\kappa) = - \dfrac{g(k_x,-\kappa)}{g(k_x,\kappa)}, \qquad g(k_x,\kappa) = \det(
			[A \widehat \Psi^0(k_x,\kappa) , A \widehat \Psi^\infty(k_x,\kappa_{\mathrm{ev}})] ),
	\end{equation}
    and is regular in the limit $k_x^2+\kappa^2 \to \infty$. In particular, equivalent boundary conditions have the same scattering amplitude $S$. 
\end{prop}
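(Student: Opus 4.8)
The plan is to read the formula straight off the boundary condition: I would substitute the scattering ansatz \eqref{eq:def_scattering_state} into \eqref{eq:local_boundary_condition_A}, turn this into a $2\times2$ linear system for the coefficients, solve it by Cramer's rule, and then harvest $\mathrm{GL}_2$-invariance and regularity from the determinant structure. The only subtle point, I expect, is the regularity at infinity.

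The computation goes as follows. For a single normal mode $\widehat\psi(k_x,k_y)\ee^{\ii k_y y}$, the pair formed by its value and its $y$-derivative at $y=0$ is exactly $\widehat\Psi(k_x,k_y)$ of \eqref{eq:widehatpsi}, since $\partial_y$ produces the factor $\ii k_y$. Applying this to the three terms of $\widetilde\psi_\mathrm{s}$, whose transverse momenta are $k_y=-\kappa,\kappa,\kappa_\mathrm{ev}$, the boundary condition \eqref{eq:local_boundary_condition_A} reduces to
\begin{equation*}
A\widehat\Psi_\mathrm{in} + S\, A\widehat\Psi_\mathrm{out} + T\, A\widehat\Psi_\mathrm{ev} = 0,
\end{equation*}
a homogeneous system of two scalar equations (recall $A\widehat\Psi\in\mathbb C^2$) for the coefficients $(1,S,T)$. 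Identifying the incoming and outgoing modes with the section $\widehat\psi^0$ at $k_y=\mp\kappa$ and the evanescent mode with $\widehat\psi^\infty$ at $k_y=\kappa_\mathrm{ev}$, the three columns $A\widehat\Psi^0(k_x,-\kappa)$, $A\widehat\Psi^0(k_x,\kappa)$, $A\widehat\Psi^\infty(k_x,\kappa_\mathrm{ev})$ are three vectors in $\mathbb C^2$, hence linearly dependent, and their dependence is spanned by the signed $2\times2$ minors. Cramer's rule then yields $S=-\det[A\widehat\Psi^0(k_x,-\kappa),A\widehat\Psi^\infty(k_x,\kappa_\mathrm{ev})]/\det[A\widehat\Psi^0(k_x,\kappa),A\widehat\Psi^\infty(k_x,\kappa_\mathrm{ev})]$, which is exactly \eqref{eq:S_explicit}: by \eqref{eq:kappa_ev_div} the quantity $\kappa_\mathrm{ev}$ depends on $\kappa$ only through $\kappa^2$, so the evanescent argument is unchanged under $\kappa\mapsto-\kappa$ and the two determinants are $g(k_x,\mp\kappa)$.

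Two bookkeeping remarks then dispose of the secondary claims. The evanescent column is common to numerator and denominator, so any rescaling of $\widehat\psi^\infty$ cancels; and for an equivalent boundary condition $\widetilde A=BA$ with $B\in\mathrm{GL}_2(\mathbb C)$ every column is premultiplied by $B$, so $g$ picks up a factor $\det B$ which cancels in the ratio, giving $\widetilde S=S$. The one honest caveat is normalization: \eqref{eq:def_scattering_state} uses unit-amplitude in/out modes, whereas $\widehat\psi^0$ need not be normalized. Here I would invoke the symmetry $H(k_x,-k_y)=\overline{H(k_x,k_y)}$ visible in \eqref{eq:bulk_eigenvalue_equation}, which allows the choice $\widehat\psi^0(k_x,-\kappa)\propto\overline{\widehat\psi^0(k_x,\kappa)}$; the in/out norms then coincide and the coefficient ratio is precisely the physical $S\in U(1)$. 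For a general section the formula differs only by a positive real factor, which is immaterial for the winding integrals of Theorem~\ref{thm:GJT21}.

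The genuine obstacle is the regularity of $S$ as $k_x^2+\kappa^2\to\infty$. Both lifts carry diverging prefactors, namely $\ii k_y=\mp\ii\kappa$ in $\widehat\Psi^0$ and $\ii\kappa_\mathrm{ev}\to\ii\infty$ in $\widehat\Psi^\infty$, so each determinant blows up and one must show the divergences cancel in the quotient while the denominator stays bounded away from zero. The plan is to factor $A\widehat\Psi=(A^{(1)}+\ii k_y A^{(2)})\widehat\psi$ with $A=[A^{(1)}\,|\,A^{(2)}]$, whence at large momenta the columns align with the directions of $A^{(2)}\widehat\psi^0$ and $A^{(2)}\widehat\psi^\infty$ evaluated near infinity. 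The whole purpose of the two complementary regularity domains, $\widehat\psi^0$ regular on $S^2\setminus\{0\}$ and $\widehat\psi^\infty$ regular on $\mathbb R^2$, is to guarantee that these limiting directions are well defined and remain linearly independent, so that $g(k_x,\kappa)$ does not vanish at infinity and $S$ extends regularly. Turning this heuristic cancellation into a quantitative statement is exactly the asymptotic expansion of the scattering amplitude built in Section~\ref{sec:asym_S}, to which I would defer the fine analysis.
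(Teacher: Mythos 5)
Your derivation of the formula itself is correct and is exactly the paper's argument: insert the ansatz \eqref{eq:def_scattering_state} into \eqref{eq:local_boundary_condition_A}, note that $\Psi_\mathrm{s}|_{y=0}=\widehat\Psi_\mathrm{in}+S\widehat\Psi_\mathrm{out}+T\widehat\Psi_\mathrm{ev}$, solve the resulting $2\times2$ system by Cramer's rule, and identify the in/out modes with $\widehat\psi^0(k_x,\mp\kappa)$ and the evanescent mode with $\widehat\psi^\infty(k_x,\kappa_\mathrm{ev})$; the evenness of $\kappa_\mathrm{ev}$ in $\kappa$ (which the paper leaves implicit) and the $\det B$ cancellation for $\mathrm{GL}_2$-equivalence are both handled correctly, and your normalization remark via $\widehat\psi^0(k_x,-\kappa)\propto\overline{\widehat\psi^0(k_x,\kappa)}$ is sound and consistent with the explicit section \eqref{eq:bulk_eigensection_zero}.

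The gap is in your final paragraph on regularity at infinity. First, the claimed alignment of the columns with $A^{(2)}\widehat\psi^0$ and $A^{(2)}\widehat\psi^\infty$ fails outright when $A^{(2)}=B_2=0$ (Dirichlet), where both of your ``limiting directions'' are zero yet $g\to1$ and nothing is singular; and when the columns really are dominated by divergent multiples of fixed directions, the determinant's leading behavior is governed by cross terms and subleading corrections --- this is precisely why the paper must expand the sections to order $k^{-3}$ in \eqref{eq:expanded_sections}. Second, and more fundamentally, your conclusion proves too much: if $A\widehat\Psi^0$ and $A\widehat\Psi^\infty$ had direction-independent, linearly independent limits at infinity, then $g(k_x,\kappa)$ and $g(k_x,-\kappa)$ would converge to one common nonzero value, $S$ would extend continuously to the point at infinity of the compactified plane with value $-1$, and $w_\infty$ would vanish for \emph{every} boundary condition, contradicting Theorem~\ref{thm:classification} and Table~\ref{tab:anomalies}. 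What the paper actually establishes is weaker: exhibiting the explicit sections \eqref{eq:bulk_eigensection_zero} and \eqref{eq:bulk_eigensection_infty} and checking that $\widehat\psi^\infty(k_x,\kappa_\mathrm{ev}(k_x,\kappa))$ is well defined for all $(k_x,\kappa)$ up to and including infinity, it concludes that $g$ is a finite smooth expression near infinity whose leading part $g_\infty$ remains direction-dependent (and may even grow, e.g.\ through $\kappa\kappa_\mathrm{ev}$ terms in class $\mathfrak A_{3,4}$); ``regular'' means smooth on a neighborhood of infinity, not convergent there. Non-vanishing of $g$ along the relevant paths is not part of this proposition at all --- it is the hypothesis $|g_0|>M$ of Lemma~\ref{lem:asymptotic_winding}, verified class by class in Section~\ref{sec:anomaly_classification}. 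So deferring the fine analysis to Section~\ref{sec:asym_S} is legitimate, but the mechanism you propose for that deferred step is not the one that works.
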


This proposition is proved in Section~\ref{sec:bulksections}. The last sentence is obvious from the expression of $g$ above and ensures that the boundary condition classification from Theorem~\ref{thm:SABC} is relevant for the investigation of the anomalous bulk-edge correspondence. To study the winding phase of $S$ near infinity, we consider the following path for $\lambda>0$ and $\delta >0$:
\begin{equation}\label{eq:defGammadelta}
\Gamma_{\delta}(\lambda) =\Big\{ \Big(k_x = \dfrac{-\lambda_x}{\lambda_x^2 + \delta^2},  \kappa = \dfrac{\delta}{\lambda_x^2 + \delta^2}\Big),  \lambda_x \in [-\lambda,\lambda] \Big\} 
\end{equation}
where $\lambda_x,\delta$ are called the dual variables to $k_x,\kappa$.
%Notice that $\Gamma_0(\lambda) = \{(-\lambda_x^{-1},0), \lambda_x \in [-\lambda,\lambda] \}$ corresponds to  $k_x \to +\infty$ (resp. $-\infty$) as $\lambda_x \to 0^-$ (resp. $0^+$). On the other hand, 

\begin{defn}
We say that $S$ winds at $\infty$ if $w_\infty\neq 0$ with 
$$
w_\infty := \lim_{\lambda \to 0} \lim_{\delta \to 0} \int_{\Gamma_{\delta}(\lambda)} S^{-1}\dd S 
$$
\end{defn}

For small $\delta$, $\Gamma_{\delta}(\lambda)$ explores the region near $k_x=0$, $\kappa = \infty$. The limit $\lambda \to 0$ ensures that we are not counting other edge modes at finite but large $k_x$. 

%In practice, to compute $w_\infty$ it is enough to consider $g_0$, the first order expansion of $g$ near $\lambda_x=0$. Then, from Prop.~\ref{prop:S_explicit} we deduce the winding of $S$ from twice the change of argument of $g_0$, up to a sign. Then, the change of argument of $g_0$ when $\delta \to 0$ and $\lambda_x$ finite can be inferred as the relative argument between $\lambda_x = -\infty$ to $\lambda_x=+\infty$ with $\delta$ finite. 
%\begin{exmp}
%	The boundary condition \eqref{eq:boundary_condition_example} leads to $g\sim g_0$ with
%	\begin{equation}
%		g_0(\lambda_x)= \frac{\ii}{\lambda_x^2+\delta^2}(\ii \beta \lambda_x+\delta)\,,
%	\end{equation}
%  see Section~REF for more details. Thus for any $\beta \neq 0$ and $\delta>0$
%  \begin{equation}
%  	\lim_{\lambda_x \to \infty} \arg\left[\dfrac{g_0(-\lambda_x)}{g_0(\lambda_x)}\right]= \pi
%  \end{equation}
%from which we infer $w_\infty = +1$.
%\end{exmp}

\begin{lem}\label{lem:C+nbomegainfty}
	 $C_+ = n_\mathrm b + w_\infty$.
\end{lem}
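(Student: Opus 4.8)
The plan is to read the claim off the bulk--scattering correspondence (Theorem~\ref{thm:GJT21}) by applying it to a single, carefully chosen closed loop that splits, after taking limits, into exactly the two pieces appearing on the right-hand side: the asymptotic arc $\Gamma_\delta(\lambda)$, which by definition produces $w_\infty$, and a near-boundary horizontal segment, which reproduces the relative Levinson integral \eqref{eq:S_nb} defining $n_{\mathrm b}$. The only input beyond Theorem~\ref{thm:GJT21} is that $S\in U(1)$ is regular along the contour and near infinity, which is guaranteed by Proposition~\ref{prop:S_explicit}.

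First I would rewrite the reparametrization \eqref{eq:defGammadelta} in complex form: with $z=k_x+\ii\kappa$ and $w=\lambda_x+\ii\delta$ it is exactly the M\"obius map $z=-1/w$, so $\Gamma_\delta(\lambda)$ is the image of the horizontal segment $\{\lambda_x+\ii\delta:\lambda_x\in[-\lambda,\lambda]\}$ under $w\mapsto -1/w$. A direct computation shows that its endpoints $\lambda_x=\mp\lambda$ land at $(\pm k_0,\kappa_0)$ with $k_0=\lambda/(\lambda^2+\delta^2)$ and $\kappa_0=\delta/(\lambda^2+\delta^2)$, i.e.\ \emph{both at the same height} $\kappa_0>0$, while the midpoint $\lambda_x=0$ is sent to the top $(0,1/\delta)$. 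I can therefore close $\Gamma_\delta(\lambda)$ by the horizontal segment at height $\kappa_0$ running from $(-k_0,\kappa_0)$ to $(k_0,\kappa_0)$. Since $\kappa=\delta/(\lambda_x^2+\delta^2)\in[\kappa_0,1/\delta]$ on the arc and $1/\delta>\kappa_0$ for $\lambda>0$, the resulting loop $\Gamma_\delta^{\mathrm{cl}}(\lambda)$ is simple, lies in $\{\kappa>0\}$, and is counter-clockwise. After rounding the two corners and choosing generic $(\lambda,\delta)$ so that the contour avoids the (discrete) zero set of $g$, Theorem~\ref{thm:GJT21} applies and, by additivity of the line integral,
\[
C_+ \;=\; \frac{1}{2\pi\ii}\int_{\Gamma_\delta(\lambda)} S^{-1}\dd S \;+\; \frac{1}{2\pi\ii}\int_{-k_0}^{k_0}\bigl(S^{-1}\partial_{k_x}S\bigr)(k_x,\kappa_0)\,\dd k_x ,
\]
where the second integral runs in the direction of increasing $k_x$, matching the convention in \eqref{eq:S_nb}.

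Next I would take the limits in the order prescribed by the definition of $w_\infty$, namely $\delta\to0$ at fixed $\lambda$ followed by $\lambda\to0$. The first term converges to $w_\infty$ by definition. For the second term, as $\delta\to0$ one has $\kappa_0\to0$ and $k_0\to 1/\lambda$, so the segment integral tends to $\tfrac{1}{2\pi\ii}\int_{-1/\lambda}^{1/\lambda}(S^{-1}\partial_{k_x}S)(k_x,0)\,\dd k_x$; letting then $\lambda\to0$ exhausts the real line and, by \eqref{eq:S_nb}, yields $n_{\mathrm b}$. Combining the two gives $C_+=n_{\mathrm b}+w_\infty$.

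The step I expect to be the main obstacle is this last identification, that the near-boundary segment reproduces \emph{exactly} $n_{\mathrm b}$. The definition \eqref{eq:S_nb} takes $\lim_{\kappa\to0}$ of the integral over all of $\mathbb R$, whereas my construction first sends $\kappa_0\to0$ on the finite window $[-k_0,k_0]$ and only afterwards lets the window exhaust $\mathbb R$. Justifying this interchange needs uniform control of $S^{-1}\partial_{k_x}S$ for small $\kappa$ together with decay of the tails as $|k_x|\to\infty$; both follow from the regularity of $S$ near infinity in Proposition~\ref{prop:S_explicit}, which forces $\partial_{k_x}S\to0$ there. One must also check that rounding the corners does not change the integer winding and that the genericity hypothesis (avoiding the zeros of $g$) is compatible with the limits, i.e.\ that the limiting values are independent of the approximating contours. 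Finally, for the numerical identity to balance, all three windings are understood with the common normalization $\tfrac{1}{2\pi\ii}\oint S^{-1}\dd S$, which is the convention implicit in the definitions of $n_{\mathrm b}$ and $w_\infty$.
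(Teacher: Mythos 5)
Your proof is correct and follows essentially the same route as the paper: both arguments close the asymptotic arc $\Gamma_\delta(\lambda)$ by a curve hugging the real axis (the paper uses the complementary arc of the M\"obius circle $\mathcal C_\delta$, you use a horizontal chord at height $\kappa_0$), apply Theorem~\ref{thm:GJT21} to the resulting simple counterclockwise loop, and identify the two pieces with $w_\infty$ and $n_\mathrm{b}$ in the double limit $\delta\to 0$, $\lambda\to 0$. The interchange-of-limits point you flag as the main obstacle is present in exactly the same form in the paper's own proof (its complementary arc likewise only exhausts the real line $\kappa=0$ in the double limit), so your treatment is no less rigorous than the original.
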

\begin{proof}
	According to Theorem~\ref{thm:GJT21} one has
	$$
	C_+=\dfrac{1}{2\pi \ii}	\int_\Gamma S^{-1} \dd S
	$$
	for any simple and counterclockwise loop $\Gamma$. Fix $\delta>0$ and consider 
	$$
	\mathcal C_{\delta} =\Big\{ \Big(k_x = \dfrac{-\lambda_x}{\lambda_x^2 + \delta^2},  \kappa = \dfrac{\delta}{\lambda_x^2 + \delta^2}\Big),  \lambda_x \in \mathbb R \Big\} 
	$$
	This loop naturally splits into $\mathcal C_{\delta} = \Gamma_{\delta}(\lambda) \cup (\Gamma_{\delta}(\lambda))^c$ for any $\lambda>0$, with
	$$
	(\Gamma_{\delta}(\lambda))^c =\Big\{ \Big(k_x = \dfrac{-\lambda_x}{\lambda_x^2 + \delta^2},  \kappa = \dfrac{\delta}{\lambda_x^2 + \delta^2}\Big),  \lambda_x \in]\lambda,+\infty[ \cup ]-\infty,\lambda[\Big\}
	$$
	In particular, $(\Gamma_\delta(\lambda))^c$ becomes the  real line $\kappa =0$, $ k_x \in \mathbb R$  as $\delta\to 0$ and $\lambda \to 0$, and $S$ is regular in the upper half-plane $\kappa >0$ so that 
	$$
	\lim_{\lambda \to 0} \lim_{\delta \to 0} \int_{(\Gamma_{\delta}(\lambda))^c} S^{-1}\dd S = \lim_{\kappa \to 0} \int_\mathbb R \dd k_x (S^{-1}\partial_{k_x} S)(k_x,\kappa)	= n_\mathrm{b}
	$$
	The remaining part of the integral along $\Gamma_{\delta}$ leads to $w_\infty$.
 \end{proof}

Consequently, $w_\infty$ detects when a boundary condition is anomalous at infinity, but also predicts the number of edge modes below the upper bulk band by $n_\mathrm{b}=C_+-w_\infty$ since $C_+=1$ is fixed. The main idea to compute $w_\infty$ is to expand $S$ and $g$ from Proposition~\ref{prop:S_explicit} near $k_x^2+\kappa^2 \to \infty$, then extract the leading term $\mathcal S$ along the path $\Gamma_{\delta}(\lambda)$. Ultimately, the winding number is reduced to 
$$
w_\infty = \int_{-\infty}^{+\infty} \mathcal S^{-1}(u) \mathcal S'(u) \dd u, \qquad \mathcal S(u):= \dfrac{G_-(u)}{G_+(u)}.
$$
with $G_\pm : \mathbb R \to \mathbb C^*$ two complex curves whose winding phases can be computed explicitly.

\begin{exmp}\label{example_continuated}
	We continue Example~\ref{example}. 
	\begin{itemize}
		\item Dirichlet boundary condition: one has $G_+=G_-=1$ so that $w_\infty =0$.
		\item Condition a: one has $G_-(u) = -u -\ii$ and $G_+(u)=G_-^*(u)$. As $u$ goes from $-\infty$ to $+\infty$, $G_-$ goes from $+\infty-\ii$ to $-\infty -\ii$ so that its change of argument is $-\pi$. Consequently $G_+$ argument change is $\pi$, so that $S$ winds by $-2\pi$ and $w_\infty=-1$. We recover $n_\mathrm{b}=1-(-1)=2$.
		\item Condition b: one has
		$$
		G_-(u) = -15 u^2 + 4u \sqrt{2u^2+1} + \ii \left(\sqrt{2u^2+1}-4u\right), \qquad G_+(u)=G^*_-(u).
		$$
		A detailed analysis of this complex curve (see Section~\ref{sec:anomalyA34}) shows that it is dominated by its real part as $u \to \pm \infty$. $\Re(G_-)$ goes from $-\infty$ (with $\Im(G_-)>0$) to $-\infty$ (with $\Im(G_-)<0$). Its imaginary part is strictly decreasing and vanishes at $u_0$ with $\Re(G_-(u_0))>0$. Thus $G_-$ winds once clockwise around zero as $u \in \mathbb R$, and $G_+=G_-^*$ winds once anti-clockwise. Thus $w_\infty=-2$, and we recover $n_\mathrm{b}=3$.
	\end{itemize}
\end{exmp}

What is remarkable for this problem is that $G_\pm$, $\mathcal S$ and $w_\infty$ can be computed not only for specific examples, but actually for any boundary condition from Table~\ref{tab:self-adjoint_classes}. Moreover, among the many free parameters in each class, only a few of them survive near infinity. The anomalous nature (and the value of $w_\infty$) is driven by one or two parameters only, so that an exhaustive classification can be formulated in a rather compact way. This is the central result of this paper. 

\begin{thm}\label{thm:classification}
	$w_\infty$ can be systematically computed for almost every self-adjoint boundary condition. Its value is given in Table~\ref{tab:anomalies}. Each   expression for $\mathcal S$ can be found in Section~\ref{sec:anomaly_classification}.%, and only depends on a few parameters of each class from Table~\ref{tab:self-adjoint_classes}.
\end{thm}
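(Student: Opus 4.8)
The plan is to prove the theorem by exhausting the seven classes of Table~\ref{tab:self-adjoint_classes}, applying to each the reduction scheme announced above and illustrated on Examples~\ref{example}--\ref{example_continuated}. For a fixed class I would first record explicit bulk sections. Since the gap never closes, the upper-band eigenprojection $P_+(\bvec k)$ is smooth, and because the associated line bundle over $S^2$ is nontrivial ($C_+=1$) no single global eigenvector exists; as in Proposition~\ref{prop:S_explicit} I therefore pick two local eigenvectors of $H(\bvec k)$ for $\omega_+$, namely $\widehat\psi^\infty$ regular and non-vanishing on the finite plane $\mathbb R^2$ and $\widehat\psi^0$ regular and non-vanishing near infinity, on $\mathbb R^2\setminus\{0\}\cup\{\infty\}$, their transition function carrying $C_+$. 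Promoting them to $\widehat\Psi^0,\widehat\Psi^\infty\in\mathbb C^4$ by \eqref{eq:widehatpsi} and inserting the class representative $A=A_0+\ii k_x A_1$ into Proposition~\ref{prop:S_explicit} produces $g(k_x,\kappa)$ as an explicit---if lengthy---function of $k_x$ and $\kappa$ (entering also through $\kappa_{\mathrm{ev}}$) and the class parameters, which I would expand with Mathematica.

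Second, I would pass to the dual variables along $\Gamma_\delta(\lambda)$. Setting $u:=\lambda_x/\delta$, the parametrization \eqref{eq:defGammadelta} becomes $k_x=-u/(\delta(u^2+1))$ and $\kappa=1/(\delta(u^2+1))$, so that the inner limit $\delta\to 0$ drives $(k_x,\kappa)$ to infinity along the fixed direction $u=-k_x/\kappa$, while the outer limit $\lambda\to 0$ lets $u$ sweep all of $\mathbb R$. Substituting this into $g(k_x,\pm\kappa)$---and using that $\kappa_{\mathrm{ev}}$ depends only on $\kappa^2$, hence is unchanged under $\kappa\mapsto-\kappa$---I would expand each in powers of $\delta$. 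The leading coefficients define $G_\pm$ through $g(k_x,\pm\kappa)\sim\delta^{-N}G_\pm(u)$ for a common exponent $N$; the factor $\delta^{-N}$ and the overall sign cancel in the ratio $S=-g(k_x,-\kappa)/g(k_x,\kappa)$, leaving, up to a winding-irrelevant sign, $\mathcal S(u)=G_-(u)/G_+(u)$. Unitarity of the reflection amplitude forces $|G_-|=|G_+|$, and a conjugation symmetry inherited from $H(k_x,-k_y)=\overline{H(k_x,k_y)}$, tracked together with the self-adjointness structure of $A$, yields $G_+=G_-^*$; hence the winding of $\mathcal S$ is twice that of $G_-$ and the whole computation collapses to following a single planar curve $G_-:\mathbb R\to\mathbb C^*$.

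Third, for each class I would read off from the dominant scale which of the many parameters actually survive---generically only one or two---and then determine $w_\infty$ as the winding number of $\mathcal S$ around the origin. The examples already display the two mechanisms: a curve $G_-$ affine in $u$ turns by half a revolution and gives $w_\infty=-1$, whereas a curve whose real part is a definite quadratic encircles the origin once and gives $w_\infty=-2$, the unexpected ghost charge $2$. Collecting the outcomes over all seven classes yields the entries of Table~\ref{tab:anomalies} together with the list of expressions for $\mathcal S$.

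The main obstacle is the second stage: isolating a single dominant scale valid uniformly for all $u\in\mathbb R$. Generically $N$ is constant in $u$ and $G_-$ has no real zeros, but for exceptional parameter values the would-be leading coefficient can vanish identically or acquire zeros on the axis; the dominant balance then changes and one must descend to the next order. This is exactly where the qualifier ``almost every'' is needed and where the ghost-charge-$2$ family is concealed. One must also justify discarding the subleading terms when passing from $S^{-1}\partial_u S$ to $\mathcal S^{-1}\mathcal S'$, i.e. that the convergence is strong enough to exchange the iterated limit with the $u$-integral. Finally, the branch of the square root in $\kappa_{\mathrm{ev}}$ has to be followed consistently, and the up-to-sixty-term determinants simplified; these are the residual, purely computational difficulties that the symbolic software is meant to absorb.
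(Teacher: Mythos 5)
Your high-level pipeline is the same as the paper's (insert each class representative into Proposition~\ref{prop:S_explicit}, pass to dual variables along $\Gamma_\delta(\lambda)$, extract a dominant scale, and compute $w_\infty$ class by class as the winding of a ratio of two complex curves), but the two structural shortcuts that make your computation tractable are false, not merely unproven. First, the symmetry $G_+=G_-^*$ does not follow from unitarity plus $H(k_x,-k_y)=\overline{H(k_x,k_y)}$: that argument would require the boundary matrix $A=A_0+\ii k_x A_1$ to be \emph{real}, which self-adjointness does not impose. Indeed, in class $\mathfrak A_{1,2}$ and in class $\mathfrak B$ the paper finds $G_-=G_+$ identically (with $G_\pm$ genuinely complex), so $\mathcal S=1$ and $w_\infty=0$, and your rule ``$w_\infty$ is twice the winding of $G_-$'' then gives concretely wrong answers. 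For example, in class $\mathfrak A_{1,2}$ take $B=\mathrm{diag}(-1,-1/2)$, i.e.\ $\tr B=-3/2$, $\det B=1/2$: then $G_+(u)=G_-(u)=1-\ii\tr(B)u-\det(B)u^2=-\tfrac12(u-\ii)(u-2\ii)$ has both roots in the upper half-plane and winds by $2\pi$ along $\mathbb R$, so your rule predicts $w_\infty=2$, while the correct value (Table~\ref{tab:anomalies}) is $0$; similarly in class $\mathfrak B$, where $G_\pm(u)=-u-\ii(a_1+a_2\mu)$, your rule generically predicts $w_\infty=\pm1$ instead of $0$. The collapse to a single curve is a feature of particular classes whose self-adjointness constraints force the conjugation (e.g.\ $\mathfrak A_{3,4}$, where $b_{11}=\ii\beta_1$, $b_{21}=b_{12}^*$), not a general principle; one must analyze the ratio $G_-/G_+$ as such.

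Second, freezing the dual variable at $u=\lambda_x/\delta$ (scaling exponent $s=1$) fails already for \emph{generic} parameters, not only exceptional ones: the paper must take $u=\lambda_x/\delta^s$ with $s=2$ (generic $\mathfrak A_{1,2}$, $\mathfrak A_{2,3}$ with $\det B\neq0$, class $\mathfrak B$), $s=1/2$ or $s=1/3$ (parts of $\mathfrak A_{1,4}$, $\mathfrak A_{3,4}$, $\mathfrak C$), etc. With the wrong $s$ the pointwise limit $G_\pm(u)$ still exists, but the change of argument escapes to $u\to\pm\infty$, where the convergence is not uniform, and one gets a wrong number. Concretely, in class $\mathfrak C$ with $a_2=0$, $a_4\neq0$, the scaling $s=1$ yields constant limits $G_\pm$, hence $w_\infty=0$, whereas the correct scale $s=1/3$ yields $G_+(u)=1-\ii(2u+\sqrt{2}\,|u|)u^2/(4\epsilon^2)$ and $w_\infty=1$. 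The uniformity you defer to the end is exactly what selects $s$: Assumption 2 of Proposition~\ref{lem:dominant_scale}, $R/G\sim Cu^\gamma\delta^{s\gamma}=C\lambda_x^\gamma$, holds for the correct scale and demonstrably fails for $s=1$ in this example ($R/G\sim Cu^3\delta^2$), so it cannot be treated as an afterthought. Two further mischaracterizations: common real zeros of $G_\pm$ are handled in the paper (Proposition~\ref{lem:dominant_scale2}) by proving the error ratio $F\to1$ and computing the winding of $\mathcal S=G_-/G_+$ directly, not by ``descending to the next order''; and the $w_\infty=\pm2$ family is not concealed in degenerate scales --- it occurs in the generic sub-case $b_{12}\neq0$, $B_\pm<0$ of class $\mathfrak A_{3,4}$, at the same scale $s=1$, because there the limit curve fully encircles the origin.
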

\begingroup
\setlength{\tabcolsep}{20pt}
\renewcommand{\arraystretch}{1.2}
\begin{table}[htb]
	\centering
	\begin{tabular}{|c|c|c|}
		\hline
		Class &	$w_\infty$ & Condition	\\
		\hline
		\hline

		$\mathfrak{A}_{1,2}$ &	$0$	& None	 \\

		\hline

		$\mathfrak{A}_{1,4}$ & $\sign_0(\beta)$	&	None \\%$\beta \neq 0$ \\
		%& $0$ &$\beta=0$\\
			\hline  
		
		 &  $-\sign(\beta)$ & $\beta \neq 0$ \\ 	
		
		$\mathfrak{A}_{2,3}$ & $1$ & $\beta = 0$ and $|\alpha -\epsilon^{-1}| <1/\sqrt{2}$ \\
		& $0$ & $\beta = 0$ and $|\alpha -\epsilon^{-1}| >1/\sqrt{2}$ \\
		\hline  
		$\mathfrak{A}_{2,4}$ &	$\sign(\beta)$ & $\beta |a_{11}|^2 > \sqrt{2}$ or $\beta |a_{11}|^2 <- \sqrt{2}$\\ 	
		& $0$& $-\sqrt{2}<\beta |a_{11}|^2 < \sqrt{2}$ \\
	%	& $-1$ & $\beta |a_{11}|^2 <- \sqrt{2}$\\
		
		\hline  
		
		& $\sign(B_+)$	& $b_{12} \neq 0$ and $B_+ B_- <0$  \\ 	
		$\mathfrak{A}_{3,4}$  & $0$ & $b_{12} \neq 0$ and $B_\pm>0$\\
		 &$2 \times \sign(\sqrt{2}-\beta_1)$ & $b_{12} \neq 0$ and $B_\pm<0$\\
		& $\sign_0(\beta)$ & $b_{12}=0$ and  $\beta_1^2\neq 2$\\
		\hline  	\hline  
		$\mathfrak B$ 	& $0$ & None \\ 	\hline  	\hline  
		
		$\mathfrak C$ & $1$	& $a_2=0$ \\%and $a_4 \neq 0$ \\
		& $0$ & $a_2 \neq 0$\\
		\hline
	\end{tabular}
	\caption{Anomaly classification for any self-adjoint boundary condition. The class and parameters refer to Table~\ref{tab:self-adjoint_classes}. For class $\mathfrak{A}_{3,4}$ we also define $B_{\pm} :=\beta_2(\beta_1\pm\sqrt{2}) +|b_{12}|^2$. We distinguish the sign function $\sign : \mathbb R\setminus\{0\} \to \{\pm 1\}$ from its extended version $\sign_0: \mathbb R \to \{0,\pm 1\}$ which sends $0$ to $0$.   \label{tab:anomalies}}
\end{table}
\endgroup

\begin{rem}\label{rem:mainthm} 
	Beyond the classification in itself, several important consequences can be inferred:
	\begin{itemize}
		\item Class $\mathfrak{A}_{1,2}$ and $\mathfrak{B}$ are never anomalous: $w_\infty=0$, whereas all other classes are of both nature with $w_\infty \in \{0,\pm1,\pm2\}$. By Lemma~\ref{lem:C+nbomegainfty}, we deduce that $n_\mathrm{b} \in \{-1,0,1,2,3\}$.
		\item Among the many parameters in each class, only a few (up to three) actually drive the value of $w_\infty$.
		 Thus each phase is rather stable: in both cases (anomalous or not), several free parameters can vary without changing $w_\infty$. In particular, anomalies are not fine-tuned effects among the boundary condition, but are as stable as the non-anomalous regime.
		\item The possibility that $w_\infty =\pm 2$ was unexpected from previous works and could be revealed only through the full classification.
		\item There is another possibility for the upper band to be anomalous, with $w_\infty=0$ and an edge mode branch merging exactly at infinity, see \cite{GrafJudTauber21} for an example. The detection of asymptotic edge modes is another distinct classification that we postpone to future work. 
		\item Class $\mathfrak{A}_{1,2}, \mathfrak{A}_{1,4}, \mathfrak{B}$ and $\mathfrak{C}$ are completely classified. Some threshold cases remain: in $\mathfrak{A}_{2,3}$ ($\alpha - \epsilon^{-1} = \pm 1/\sqrt{2}$), $ \mathfrak{U}_{2,4}$ ($\beta |a_{11}|^2 =\pm\sqrt{2}$) and $\mathfrak{U}_{3,4}$ ($B_+=0$ or $B_-=0$), where we suspect $w_\infty$ to be ill-defined. In that sense, the classification is almost exhaustive. 
	\end{itemize}
\end{rem}

The rest of the paper is organized as follows. In Section~\ref{sec:proof_sabc} we prove Theorem~\ref{thm:SABC} and establish Table~\ref{tab:self-adjoint_classes} of self-adjoint boundary conditions. In Section~\ref{sec:asym_S} we prove Proposition~\ref{prop:S_explicit} and expand the scattering amplitude $S$ near infinity. Then we develop a general formalism to extract the dominant scale $\mathcal S$ of $S$. There, the main technical results are Proposition~\ref{lem:dominant_scale} and \ref{lem:dominant_scale2}. Finally, we apply this strategy to each class in Section~\ref{sec:anomaly_classification}. Due to the large number of parameters in some classes, some expression for $S$ are very large and tedious to manipulate (up to 60 terms, see e.g. \eqref{eq:longuest_g0}). Thus, some computation where assisted with a symbolic computation software (Mathematica), but without any numerical approximation. We also give a detailed example in Section~\ref{sec:casestudy} where all computations can be done by hand to facilitate the reading. 

%\subsection{Detecting anomalies: edge modes}
%
%Even if $w_\infty = 0$ there is another possibility for the system to be anomalous. It can happen that there is an edge mode branch that merges exactly at $k_x =+\infty$ or $-\infty$, yet with $w_\infty = 0$. For finite $k_x$, the merging of an edge mode branch always corresponds to a winding of $S$, according to the relative Levinson theorem from REF, see also \cite[Section  REF]{Graf2021}. However, this may not hold at $k_x \to\infty$ due to a singularity.  Nevertheless, asymptotic edge modes can be detected by studying the poles of $S$ after analytic continuation of $\kappa$ near $k_x \to \infty$. 
%\tocheck{Add a proper definition of $n_\infty$ here}
%Hence the total number of edge mode is $n_\mathrm{tot} = n_\mathrm b + n_\infty$, leading to a generalized bulk-edge correspondence:
%\begin{equation}
%	n_\mathrm{tot}- n_\infty + w_\infty= C_+ 
%\end{equation}
%We shall say that the system is anomalous if either $n_\infty\neq0$ or $w_\infty \neq0$. Otherwise, the system is anomaly free, in which case it satisfies the usual bulk edge correspondence $n_\mathrm{tot}=n_\mathrm{b} = C_+$, and moreover there is no edge mode branch merging at $k_x \to \pm \infty$. In principle the anomalies could cancel each other at infinity by having $n_\infty = w_\infty \neq 0$ but this situation was never encountered along the analysis. In each anomalous case one has either $n_\infty=1$ or $w_\infty=1$, the other being 0.

\section{Self-adjoint boundary conditions \label{sec:proof_sabc}}

In this section we prove Theorem~\ref{thm:SABC} and establish Table~\ref{tab:self-adjoint_classes}. We adapt the method from Appendix~B of \cite{GrafJudTauber21} for shallow-water wave operator to the Dirac Hamiltonian. We study the edge operator $H^\sharp$ from \eqref{eq:edge_Hamiltonian} associated to states $\tilde\psi(y;k_x)$ from \eqref{eq:states_halfspace}. In this section, we drop the $k_x$ dependence as well as $\sharp$ and $\tilde{}$ notations, so that we work instead with $H$ and $\psi(y)$.

\begin{lem}\label{lem:Omega}
	Self-adjoint realizations of the Hamiltonian H correspond to subspaces $M\subset \mathbb{C}^4$ with
	\begin{equation}\label{eq:lem_sa_bdry_cond}
		\Omega M = M^\perp\,,
	\end{equation}
	where 
	\begin{equation}\label{eq:def_omega_thm}
		\Omega = \begin{pmatrix}
			0 & 1 & \epsilon & 0\\
			-1 & 0 & 0 & -\epsilon\\
			-\epsilon & 0 & 0 & 0\\
			0 & \epsilon & 0 & 0
		\end{pmatrix}\,.
	\end{equation}
	The domain $\mathcal{D}(H)$ is a subspace of the Sobolev space $H^2\oplus H^2$ which is characterized by
	\begin{equation}\label{eq:Psi_boundary_values}
		 \mathcal{D}(H) := \left\{\psi \in H^2\oplus H^2\, | \, \Psi = \begin{pmatrix}
		 	\psi(0)\\
		 	\psi'(0) 
		 \end{pmatrix} \in M\right\}.
	\end{equation}
\end{lem}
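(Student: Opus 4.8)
The plan is to use the standard symplectic (boundary-form) characterization of self-adjoint extensions of a symmetric second-order ODE system, exactly in the spirit of Appendix~B of \cite{GrafJudTauber21}. Two ingredients are needed: an identification of the maximal domain, and an explicit Green's identity whose boundary term produces the matrix $\Omega$ of \eqref{eq:def_omega_thm}. Throughout, $\Phi=(\phi_1(0),\phi_2(0),\phi_1'(0),\phi_2'(0))$ and $\Psi=(\psi_1(0),\psi_2(0),\psi_1'(0),\psi_2'(0))$ denote the boundary data of $\phi,\psi$.

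First I would pin down the maximal operator. The principal part of $H$ is $\mathrm{diag}(\epsilon,-\epsilon)\partial_y^2$ with $\epsilon>0$, so the system is elliptic, and elliptic regularity up to the boundary gives that $\psi,H\psi\in L^2\oplus L^2$ forces $\psi\in H^2\oplus H^2$. Hence the maximal operator has domain exactly $H^2\oplus H^2$, as in \eqref{eq:Psi_boundary_values}, and each such $\psi$, together with $\psi'$, has well-defined traces at $y=0$ and decays at $y=+\infty$ by the embedding $H^2(\mathbb R_+)\hookrightarrow C^1$ with vanishing at infinity. This is precisely where $\epsilon\neq0$ is essential: without the regularization the operator would be first order and the extension theory would be entirely different.

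Next I would compute the boundary form $\mathcal B(\phi,\psi):=\langle H\phi,\psi\rangle-\langle\phi,H\psi\rangle$ by integrating by parts on $\mathbb R_+$. The zeroth- and first-order multiplication terms $m-\epsilon k_x^2$ and $-k_x$ are real, so they are Hermitian and contribute no boundary terms; in particular $\mathcal B$ is $k_x$-independent, which is exactly what makes the classification $k_x$-free (cf.\ the Remark following Theorem~\ref{thm:SABC}). Only the derivative terms $\pm\epsilon\partial_y^2$ and $\pm\partial_y$ survive: integrating the second-order parts by parts twice and the first-order parts once, the bulk integrals cancel by formal symmetry and the contributions at $+\infty$ vanish by the decay above, leaving only traces at $y=0$,
\[
\mathcal B(\phi,\psi)=\big[\phi_1\overline{\psi_2}-\phi_2\overline{\psi_1}\big]+\epsilon\big[\phi_1\overline{\psi_1'}-\phi_1'\overline{\psi_1}-\phi_2\overline{\psi_2'}+\phi_2'\overline{\psi_2}\big]\Big|_{y=0}.
\]
Reading off the coefficients in the ordering $(\,\cdot_1,\cdot_2,\cdot_1',\cdot_2'\,)$ identifies a matrix $N$ with $\mathcal B(\phi,\psi)=\langle\Psi,N\Phi\rangle$ and $N=-\Omega$, where $\Omega$ is the matrix in \eqref{eq:def_omega_thm}; one checks $\Omega$ is real, skew-symmetric and invertible ($\det\Omega=\epsilon^4$). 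Since $M$ is a subspace, $(-\Omega)M=\Omega M$ as sets, so the overall sign is immaterial and I may work with $\Omega$ directly.

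Finally I would pass from the boundary form to self-adjointness. Let $H_M$ be the restriction of the maximal operator to $\{\psi\in H^2\oplus H^2:\Psi\in M\}$ for a subspace $M\subset\mathbb C^4$ (here $M=\ker A$). Symmetry of $H_M$ is the isotropy condition $\langle\Psi,\Omega\Phi\rangle=0$ for all $\Phi,\Psi\in M$, i.e.\ $\Omega M\subseteq M^\perp$. For the adjoint, $\psi\in\mathcal D(H_M^*)$ iff $\psi\in H^2\oplus H^2$ and $\mathcal B(\phi,\psi)=0$ for every $\phi$ with $\Phi\in M$; since $\mathcal B(\phi,\psi)=\langle\Psi,N\Phi\rangle$ this says precisely $\Psi\in(\Omega M)^\perp$, so $\mathcal D(H_M^*)$ is the subspace $(\Omega M)^\perp$. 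Hence $H_M$ is self-adjoint iff $M=(\Omega M)^\perp$, which on taking orthogonal complements in the finite-dimensional space $\mathbb C^4$ is equivalent to $\Omega M=M^\perp$, establishing \eqref{eq:lem_sa_bdry_cond}. Comparing dimensions, $\dim M=4-\dim M$ forces $\dim M=2$, matching the rank-$2$ requirement on $A$ in Theorem~\ref{thm:SABC}. The only genuinely delicate points are the two analytic inputs: elliptic regularity up to the boundary to identify the maximal domain as $H^2\oplus H^2$ (crucially using $\epsilon>0$) and the vanishing of the boundary contributions at $+\infty$; the algebra yielding $\Omega$ and the Lagrangian characterization are then routine.
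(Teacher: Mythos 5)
Your proposal is correct and takes essentially the same route as the paper: integration by parts produces the boundary form $\Phi^*\Omega\Psi$ (your $N=-\Omega$ sign comes only from using $\langle H\phi,\psi\rangle-\langle\phi,H\psi\rangle$ instead of the paper's $\langle\phi,H\psi\rangle-\langle H\phi,\psi\rangle$, and is immaterial as you note), and self-adjointness is then characterized as $M=(\Omega M)^\perp$, i.e.\ $\Omega M=M^\perp$, with your adjoint-domain computation matching the paper's conditions (i) and (ii). Your extra points — elliptic regularity to identify the maximal domain $H^2\oplus H^2$, decay at infinity, and the dimension count $\dim M=2$ — are refinements the paper leaves implicit rather than a different method.
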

\begin{proof}
	Let $\psi,\phi \in H^2\oplus H^2$. A partial integration of $\braket{\phi}{H\psi}$ yields a boundary term at $y=0$:
	\begin{align}\label{eq:psisHpsi-Hpsispsi}
		\braket{\phi}{H\psi}-\braket{H\phi}{\psi} &= \phi_1^*\psi_2-\phi_2^*\psi_1 + \epsilon \left(\phi_1^*\psi_1'-\phi_2^*\psi_2'-\psi_1(\phi_1^*)'+\psi_2(\phi_2^*)'\right)\\
		&= \Phi^*\Omega\Psi\,,\label{eq:psisHpsi-Hpsispsi_f}
	\end{align}
	with $\Psi$ and $\Phi$ as in~\eqref{eq:Psi_boundary_values}, $\Omega$ as in~\eqref{eq:def_omega_thm}. In order for $H$ to be self-adjoint, we require $\Psi \in M \subset \mathbb C^4$ with the following properties
	\begin{enumerate}[i)]
	\item $\Phi^*\Omega\Psi = 0$,~$\Psi\in M$ $\Rightarrow$~$\Phi\in M$,
	\item $\Psi\in M$ $\Rightarrow$~$\Phi^*\Omega\Psi=0$ for all $\Phi\in M$.
\end{enumerate}
By~\eqref{eq:psisHpsi-Hpsispsi_f}, the properties (i) and (ii) imply $H^*\subset H$ and $H\subset H^*$ respectively, and thus $H=H^*$.

Moreover these two properties imply $(\Omega M)^\perp=M$. Indeed, let $\Phi \in (\Omega M)^\perp$ and $\Psi \in M$. One has $\Phi^* \Omega \Psi=0$ so that $\Phi \in M$ by (i). Thus $(\Omega M)^\perp \subset M$. Conversely, let $\Psi \in M$. For any $\Phi \in M$ one has $\Phi^* \Omega \Psi =0$ by (ii), which implies $\Psi^* \Omega \Phi =0$ since $\Omega^*=-\Omega$. Thus $\Psi \in (\Omega M)^\perp$ and $M \subset (\Omega M)^\perp$. 

Reciprocally, it is not hard to check that $(\Omega M)^\perp=M$ implies (i) and (ii), so that $H$ is self-adjoint if and only if $(\Omega M)^\perp=M$, namely $\Omega M = M^\perp$.
\end{proof}
\begin{rem}
	The matrix $\Omega$ defines a symplectic form on $\mathbb C^4$ and the subspace $M$  is sometimes called a Lagrangian space, a general structure that  naturally appears for self-adjoint extensions of elliptic operators, see e.g. \cite{Gontier23}.
\end{rem}

\begin{prop}\label{prop:sa_bdry_cond}
	\begin{enumerate}[i)]
		\item Subspaces $M\subset\mathbb{C}^4$ with $\dim M =2$ are determined by $2\times 4$ matrices $A$ of maximal rank, i.e.~$\rk A =2$ by means of
		\begin{equation}\label{eq:prop_M_kerA}
			M = \{\Psi \in \mathbb{C}^4\lvert A\Psi =0\}=\ker A\,,
		\end{equation}
		and conversely. Two such matrices $A$, $\widetilde{A}$ determine the same subspace if and only if $A=B\widetilde{A}$ with $B\in\mathrm{GL}_2(\mathbb C)$.
		\item Self-adjoint boundary conditions as in Lemma~\ref{lem:Omega} are characterized by matrices $A$ as in~i) with
		\begin{equation}\label{eq:prop_AOAs}
			A\Omega^{-1}A^*=0\,.
		\end{equation}
	\end{enumerate}
\end{prop}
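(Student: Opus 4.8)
The statement has two parts, and my plan is to treat them as essentially linear-algebraic facts about the Grassmannian $\mathrm{Gr}(2,4)$ together with the symplectic structure encoded by $\Omega$. For part (i), I would establish the standard correspondence between $2$-dimensional subspaces $M\subset\mathbb C^4$ and rank-$2$ matrices $A\in\mathrm{M}_{2,4}(\mathbb C)$ via $M=\ker A$. The forward direction is immediate: if $\rk A = 2$ then by rank-nullity $\dim\ker A = 4-2 = 2$. For the converse, given a $2$-dimensional subspace $M$, I would pick a basis of the $2$-dimensional orthogonal complement (or simply two linearly independent functionals vanishing on $M$) and stack them as the rows of $A$; rank $2$ is exactly the statement that these functionals are independent, and $\ker A = M$ by construction.

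\textbf{The $\mathrm{GL}_2$-ambiguity.} The key refinement in part (i) is that $A$ and $\widetilde A$ determine the same $M$ iff $A = B\widetilde A$ for some $B\in\mathrm{GL}_2(\mathbb C)$. One direction is trivial: if $A = B\widetilde A$ with $B$ invertible, then $A\Psi = 0 \Leftrightarrow \widetilde A \Psi = 0$, so $\ker A = \ker\widetilde A$. For the other direction, suppose $\ker A = \ker\widetilde A = M$. Then the row spaces of $A$ and $\widetilde A$ both equal $M^\perp$ (the annihilator of $M$, viewed appropriately), which is $2$-dimensional; since the two rows of each matrix form a basis of this common $2$-dimensional space, the rows of $A$ are an invertible linear combination of the rows of $\widetilde A$, which is precisely $A = B\widetilde A$ with $B\in\mathrm{GL}_2(\mathbb C)$. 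This is exactly the change-of-basis freedom defining the $\mathrm{GL}_2$-equivalence from the Definition preceding Theorem~\ref{thm:SABC}.

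\textbf{Translating the Lagrangian condition.} For part (ii), the task is to rewrite the self-adjointness condition $\Omega M = M^\perp$ from Lemma~\ref{lem:Omega} in terms of $A$. With $M = \ker A$, note that $M^\perp$ is the row space of $A$, i.e.\ $M^\perp = \{A^* c : c \in \mathbb C^2\} = \ran A^*$. The condition $\Omega M = M^\perp$ then reads $\Omega\ker A = \ran A^*$. Since $\Omega$ is invertible (indeed $\det\Omega = \epsilon^4 \neq 0$, and $\Omega^{-1}$ is well defined), I would argue as follows: $\Omega\ker A = \ran A^*$ means every vector $\Omega v$ with $Av=0$ lies in $\ran A^*$, equivalently is orthogonal to $\ker A$. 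The cleanest route is to show $\Omega M \subseteq M^\perp$ is equivalent to $A\Omega^{-1}A^* = 0$, and then upgrade $\subseteq$ to equality by a dimension count. Concretely, $\Omega M \subseteq M^\perp = \ran A^*$ says $A(\Omega M) = 0$, i.e.\ $A\Omega w = 0$ for all $w \in M = \ker A$; writing a general element of $M^\perp$ as $A^* c$ and using $M = \Omega^{-1}(\ran A^*)$ after inverting, one finds the bilinear pairing condition collapses to $A\Omega^{-1}A^* = 0$. Since both $\Omega M$ and $M^\perp$ have dimension $2$ (as $\Omega$ is invertible and $\dim M = 2$), the inclusion automatically becomes equality, so no separate argument for the reverse inclusion is needed.

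\textbf{Main obstacle.} The routine parts are the rank--nullity bookkeeping and the $\mathrm{GL}_2$ argument. The step requiring the most care is the precise algebraic manipulation turning $\Omega\ker A = \ran A^*$ into the compact matrix identity $A\Omega^{-1}A^* = 0$, making sure the conjugate-transpose (rather than plain transpose) appears correctly since the pairing in Lemma~\ref{lem:Omega} is the Hermitian form $\Phi^*\Omega\Psi$, and verifying that invertibility of $\Omega$ is what lets the inclusion be promoted to equality without loss. I expect this identity to follow by expressing $M^\perp = \ran A^*$, substituting an arbitrary $\Psi = \Omega^{-1}A^* c \in \Omega^{-1}M^\perp$ and imposing $\Psi \in \ker A$, which directly yields $A\Omega^{-1}A^* c = 0$ for all $c$, hence $A\Omega^{-1}A^* = 0$; the converse implication reverses these steps using the dimension count.
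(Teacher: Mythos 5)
Your overall route coincides with the paper's: part (i) via rank--nullity plus the identification of the $\mathrm{GL}_2$-freedom (the paper defines $B$ directly by $Av\mapsto\widetilde{A}v$, well defined because $\ker A=\ker\widetilde{A}$; your row-space argument is an equivalent variant), and part (ii) by using invertibility of $\Omega$ to recast $\Omega M=M^\perp$ as the inclusion $\Omega^{-1}M^\perp\subseteq M=\ker A$ together with a dimension count, which upon substituting a general element $\Psi=\Omega^{-1}A^*c$ of $\Omega^{-1}M^\perp$ yields $A\Omega^{-1}A^*c=0$ for all $c$. Your closing paragraph states exactly this, and it is exactly the paper's proof.

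One intermediate sentence, however, is false and must be struck: ``$\Omega M \subseteq M^\perp = \ran A^*$ says $A(\Omega M)=0$, i.e.\ $A\Omega w=0$ for all $w\in M$.'' The condition $A\Omega w=0$ means $\Omega w\in\ker A=M$, so it encodes $\Omega M\subseteq M$, not $\Omega M\subseteq M^\perp$. These are not merely different but incompatible in the present setting: if $\Omega M=M^\perp$, then $\Omega M\cap M=M^\perp\cap M=\{0\}$ (the Hermitian inner product has no nonzero isotropic vectors), so in fact $A\Omega w\neq 0$ for every nonzero $w\in M$. The point is that applying $A$ tests membership in $\ker A$, not in $\ran A^*$; this is precisely why the correct manipulation---the one you give at the end---applies $A$ to $\Omega^{-1}A^*c$ rather than to $\Omega w$. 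With that sentence deleted, the remainder of your argument is correct and complete, including the dimension-count upgrade from inclusion to equality in the converse direction.
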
%
\begin{proof}
	\begin{enumerate}[i)]%$\mathrm{Gr}(4,2)$
		\item The first part is clear as $A\Psi =0$ gives two conditions if $A$ is of $\rk$ 2. In other words $\Omega$ is invertible and thus $\ker\Omega=\{0\}$. For the second part (the last sentence) suppose that $A$ and $\widetilde{A}$ determine the same subspace, then there is a map $B: \mathbb{C}^2\to\mathbb{C}^2$, which is well-defined by $Av\mapsto \widetilde{A}v$, $v\in \mathbb{C}^4$ as $\ker A = \ker \widetilde A$. If the kernel is to be preserved then $B\in\mathrm{GL}(2)$.
		\item Self-adjoint boundary conditions correspond to subspaces $M\subset\mathbb{C}^4$ with $\Omega M = M^\perp$. 
		Moreover, $\Omega$ is invertible so that $\Omega M=M^\perp$ is equivalent to  $\Omega^{-1}M^\perp \subset M$ and $\dim M =2$. Thus, by (i), $M=\ker A$ for some $A \in \mathrm{M}_{2,4}(\mathbb C)$ of rank 2. Moreover one has $M^\perp=\ran A^* =\{A^*v\lvert v\in\mathbb{C}^2\}$. Let $v_1, v_2 \in \mathbb C^2$. One has
		$$
		\langle v_1, A\Omega^{-1} A^* v_2 \rangle = \langle A^* v_1, \Omega^{-1} A^* v_2 \rangle =0
		$$
		since $ A^* v_1 \in M^\perp$ and $\Omega^{-1} A^* v_2 \in \Omega^{-1}M^\perp \subset M$. 
		
		Reciprocally, if $M = \ker A$ with $A \in \mathrm{M}_{2,4}(\mathbb C)$ of rank 2 then $\dim M=2$ and one can check that $\Omega^{-1}M^\perp \subset (M^\perp)^\perp =M$.
	\end{enumerate}
\end{proof}

\subsection{Grassmannian}

Self-adjoint boundary conditions are given by $2\times 4$ matrices of rank 2 up to $\mathrm{GL}_2$-invariance. This set of matrices corresponds to the Grassmannian $\mathrm{Gr}(4,2)$, which parametrizes $2$-dimensional subspaces in the $4$-dimensional vector space $\mathbb{C}^4$:
\begin{equation}
	\mathrm{Gr}(4,2)\cong \frac{\{A \in \mathrm{M}_{2,4}(\mathbb C) \, | \, \rk(A) =2 \}}{\mathrm{GL}_2(\mathbb C)}\,.
\end{equation}
Grassmannians are complex manifolds for which a lot is known, see e.g.~\cite{GriffithsHarris14}. We shall use the following facts: since $\rk(A)=2$ there are at least 2 columns among the 4 of $A \in \mathrm{M}_{2,4}(\mathbb C)$  which are linearly independent vectors in $\mathbb C^2$. Then one can use the $\mathrm{GL}_2(\mathbb C)$ invariance to set the first one to $\begin{pmatrix}
	1 \\0 
\end{pmatrix}$ and the second to $\begin{pmatrix}
0 \\1 
\end{pmatrix}$ and reduce the matrix $A$ to its row echelon form (from the right). This provides the following partition 
\begin{equation}
	\mathrm{Gr}(4,2) = \bigsqcup_{\bvec{j}}\mathcal{C}_{\bvec{j}}\,.
\end{equation}
where $\bvec{j}=\{1,2\}$, $\{1,3\}$, $\{1,4\}$, $\{2,3\}$, $\{2,4\}$ and $\{3,4\}$, with 
\begin{equation}\label{eq:Schubert_cells}
	\begin{aligned}
		\mathcal{C}_{\{1,2\}}: \begin{pmatrix}
			1 & 0 & 0 & 0\\
			0 & 1 & 0 & 0
		\end{pmatrix}\,,\quad &\mathcal{C}_{\{1,3\}}: \begin{pmatrix}
			1 & 0 & 0 & 0\\
			0 & \star & 1 & 0
		\end{pmatrix}\,,\quad \mathcal{C}_{\{1,4\}}: \begin{pmatrix}
			1 & 0 & 0 & 0\\
			0 & \star & \star & 1
		\end{pmatrix}\,,\\
		\mathcal{C}_{\{2,3\}}: \begin{pmatrix}
			\star & 1 & 0 & 0\\
			\star & 0 & 1 & 0
		\end{pmatrix}\,,\quad &\mathcal{C}_{\{2,4\}}: \begin{pmatrix}
			\star & 1 & 0 & 0\\
			\star & 0 & \star & 1
		\end{pmatrix}\,,\quad \mathcal{C}_{\{3,4\}}: \begin{pmatrix}
			\star & \star & 1 & 0\\
			\star & \star & 0 & 1
		\end{pmatrix}\,.
	\end{aligned}
\end{equation}
The $\mathcal{C}_{\bvec{j}}$ are called the Schubert cells of $\mathrm{Gr}(4,2)$. Other Schubert cells of $\mathrm{Gr}(4,2)$ exist but the six above are enough to describe each element uniquely. Notice that $\dim(\mathcal{C}_{\bvec{j}})=\sum_{i=1}^2 j_i -i$ and $\mathcal{C}_{\bvec{j}}\cong \mathbb{C}^{\dim(\mathcal{C})}$ so that the Schubert cells above have complex dimensions $0,1,2,2,3$ and $4$, respectively.

\subsection{Local boundary conditions} According to \eqref{eq:local_boundary_condition} we are interested in local boundary condition of the form $A \Psi =0$ with $A = A_0+\ii k_x A_1 \in \mathrm{M}_{2,4}(\mathbb C)$. We shall start with
\begin{equation}
	A_0 = \begin{pmatrix}
		a_{11} & a_{12} & a_{13} & a_{14} \\ a_{21} & a_{22} & a_{23} & a_{24}
	\end{pmatrix} = [A_1 \quad A_2], \qquad 
	A_1 = \begin{pmatrix}
		b_{11} & b_{12} & 0 & 0 \\ b_{21} & b_{22} & 0 & 0
	\end{pmatrix} = [B \quad 0]
\end{equation}
with $A_1, A_2$ and $B \in \mathrm{M}_2( \mathbb C)$.  According to Proposition~\ref{prop:sa_bdry_cond}, $A$ has to be a rank-2 matrix. Thus we consider the various cases with respect to the rank of $A_0$ and $A_1$. Depending on their rank, such matrices can be simplified further using the $\mathrm{GL}_2(\mathbb C)$-invariance. 

Moreover, $A$ satisfies $$A\Omega^{-1} A^* = 0$$ with 
\begin{equation}\label{eq:omeginv}
	\Omega^{-1} = \epsilon^{-2} \begin{pmatrix}
		0 & 0 & -\epsilon & 0\\
		0 & 0 & 0 & \epsilon\\
		\epsilon & 0 & 0 & -1\\
		0 & -\epsilon & 1 & 0
	\end{pmatrix}:= \epsilon^{-2} \begin{pmatrix}
	0 & -\Omega_1 \\  \Omega_1 & \Omega_2
\end{pmatrix}\,.
\end{equation}
with $\Omega_1,\Omega_2 \in \mathrm{M}_2(\mathbb C)$ and $\Omega_1^*=\Omega_1$ and  $\Omega_2^*=\Omega_2$. The condition $A\Omega^{-1} A^* = 0$ becomes
$$
- A_1 \Omega_1 A_2^* + A_2 \Omega_1 A_1^* + A_2 \Omega_2 A_2^* - \ii k_x (B \Omega_1 A_2^* + A_2 \Omega_1 B^*) = 0.
$$
This relation has to be valid for every $k_x \in \mathbb R$ so we infer
\begin{align}\label{eq:sa_constraint2x2}
	 - A_1 \Omega_1 A_2^* + A_2 \Omega_1 A_1^* + A_2 \Omega_2 A_2^* =0, \qquad 
	 B \Omega_1 A_2^* + A_2 \Omega_1 B^* =0.
\end{align}

\subsection{\texorpdfstring{Class $\mathfrak A$: $\rk(A_0)=2$}{Clas A: rk(A0)=2}} 

In this section we assume $\rk(A_0)=2$ and $A_1$ arbitrary:
$$	A_1 = \begin{pmatrix}
	b_{11} & b_{12} & 0 & 0 \\ b_{21} & b_{22} & 0 & 0
\end{pmatrix} = [B \quad 0]$$
The $\mathrm{GL}_2(\mathbb C)$-invariance allows to reduce $A_0$ to one of the six Schubert cells from \eqref{eq:Schubert_cells}. For each of them we investigate \eqref{eq:sa_constraint2x2} and possibly restrict some parameters. 

\paragraph{Class $\mathfrak{A}_{1,2}$.} In that case $A_0 \in \mathcal C_{\{1,2\}}$, namely
$$
A_0 = \begin{pmatrix}
	1 & 0 & 0 & 0\\
	0 & 1 & 0 & 0
\end{pmatrix}
$$
One can check that \eqref{eq:sa_constraint2x2} is satisfied for any $B \in \mathbb \mathrm{M}_2(\mathbb C)$. 

\paragraph{Class $\mathfrak{A}_{1,3}$.} In that case $A_0 \in \mathcal C_{\{1,3\}}$, namely
$$
A_0 = \begin{pmatrix}
	1 & 0 & 1& 0\\
	0 & a_{22} & 0 & 0
\end{pmatrix}.
$$
In that case we have 
$$
- A_1 \Omega_1 A_2^* + A_2 \Omega_1 A_1^* + A_2 \Omega_2 A_2^* = \begin{pmatrix}
	0 & -\epsilon \\\epsilon & 0
\end{pmatrix} 
$$
which never vanishes for $\epsilon >0$ so that \eqref{eq:sa_constraint2x2} is never satisfied. Thus there is no self-adjoint boundary condition in this class, this is why it is not appearing in Table~\ref{tab:self-adjoint_classes}.

\paragraph{Class $\mathfrak{A}_{1,4}$.} In that case $A_0 \in \mathcal C_{\{1,4\}}$, namely
$$
A_0 = \begin{pmatrix}
	1 & 0 & 0& 0\\
	0 & a_{22} & a_{23} & 1
\end{pmatrix}.
$$
In that case we have 
$$
- A_1 \Omega_1 A_2^* + A_2 \Omega_1 A_1^* + A_2 \Omega_2 A_2^* = \left(
\begin{array}{cc}
	0 & -\epsilon  \left(a_{23}\right){}^* \\
	a_{23} \epsilon  & -\epsilon  \left(a_{22}\right){}^*+\left(a_{23}\right){}^*+a_{22} \epsilon -a_{23} \\
\end{array}
\right),
$$
from which we infer $a_{23}=0$ and then $a_{22} \in \mathbb R$. Moreover, with that knowledge, we compute
$$
 B \Omega_1 A_2^* + A_2 \Omega_1 B^* = \left(
 \begin{array}{cc}
 	0 & -\epsilon  b_{12}  \\
 	-\epsilon  \left(b_{12}\right){}^* & -\epsilon (b_{22} +  \left(b_{22}\right){}^*) \\
 \end{array}
 \right),
 $$
 from which we infer $b_{12}=0$ and $b_{22} \in \ii \mathbb R$. Denoting $a_{22}=\alpha$ and $b_{22}= \ii \beta$ we deduce the self-adjoint matrices in that case
 $$
 A_0 = \begin{pmatrix}
 	1 & 0 & 0& 0\\
 	0 & \alpha & 0 & 1
 \end{pmatrix}, \qquad  A_1 = \begin{pmatrix}
 b_{11} & 0 & 0& 0\\
 b_{21} & \ii \beta & 0 & 0
\end{pmatrix},
 $$
 with $\alpha,\beta \in \mathbb R$ and $b_{11}, b_{21} \in \mathbb C$.

\paragraph{Class $\mathfrak{A}_{2,3}$.} In that case $A_0 \in \mathcal C_{\{2,3\}}$, namely
$$
A_0 = \begin{pmatrix}
	a_{11} & 1& 0& 0\\
	a_{21} &0& 1& 0
\end{pmatrix}.
$$
In that case we have 
$$
- A_1 \Omega_1 A_2^* + A_2 \Omega_1 A_1^* + A_2 \Omega_2 A_2^* = \left(
\begin{array}{cc}
	0 & -\epsilon a_{11} \\
	\epsilon  \left(a_{11}\right){}^* & \epsilon  \left(a_{21}\right){}^*-a_{21} \epsilon  \\
\end{array}
\right),
$$
from which we infer $a_{11} =0$ and $a_{21} = \alpha \in \mathbb R$. Moreover one has
$$
B \Omega_1 A_2^* + A_2 \Omega_1 B^*  = \left(
\begin{array}{cc}
	0 & b_{11} \epsilon  \\
	\epsilon  \left(b_{11}\right){}^* & \epsilon  \left(b_{21}\right){}^*+b_{21} \epsilon  \\
\end{array}
\right)
$$
from which we infer $b_{11} =0$ and $b_{21}=\ii \beta \in \ii \mathbb R$. The self-adjoint matrices in that case are
$$
A_0 = \begin{pmatrix}
	0 & 1 & 0& 0\\
	\alpha & 0 & 1 & 0
\end{pmatrix}, \qquad  A_1 = \begin{pmatrix}
	0 & b_{12} & 0& 0\\
	\ii \beta & b_{22} & 0 & 0
\end{pmatrix},
$$
with $\alpha,\beta \in \mathbb R$ and $b_{12}, b_{22} \in \mathbb C$.

\paragraph{Class $\mathfrak{A}_{2,4}$.} In that case $A_0 \in \mathcal C_{\{2,4\}}$, namely
$$
A_0 = \begin{pmatrix}
	a_{11} & 1& 0& 0\\
	a_{21} &0& a_{23}& 1
\end{pmatrix}.
$$
In that case we have 
$$
- A_1 \Omega_1 A_2^* + A_2 \Omega_1 A_1^* + A_2 \Omega_2 A_2^* = \left(
\begin{array}{cc}
	0 & \epsilon(1  -a_{11}   \left(a_{23}\right){}^*) \\
	-\epsilon (1- a_{23}   \left(a_{11}\right){}^*)  &   \epsilon ( a_{23} \left(a_{21}\right){}^* - a_{21}  \left(a_{23}\right){}^*)+\left(a_{23}\right){}^*-a_{23} \\
\end{array}
\right)
$$
from which we infer $1  -a_{11}   (a_{23})^* =0$, that can be rewritten $a_{11} \neq 0$ and $a_{23} = (a_{11}^{-1})^*$. The lower right coefficient of the matrix above then implies
$$
\epsilon \left(  \left(\dfrac{a_{21}}{a_{11}}\right)^* - \dfrac{a_{21}}{a_{11}}\right)+\dfrac{1}{a_{11}}-\dfrac{1}{(a_{11})^*} = 0
$$
which can be rephrased as
$$
\Im\left(\dfrac{a_{21}-\epsilon^{-1}}{a_{11}}\right)= 0.
$$
Thus we have $a_{21}= \alpha a_{11}+\epsilon^{-1}$ with $\alpha \in \mathbb R$.

The second part of \eqref{eq:sa_constraint2x2} reads
$$
B \Omega_1 A_2^* + A_2 \Omega_1 B^* = \left(
\begin{array}{cc}
	0 &\epsilon  (b_{11} a_{11}^{-1}-b_{12}) \\
	\epsilon  (b_{11} a_{11}^{-1}-b_{12})^* &  \epsilon \left( \left(b_{21}a_{11}^{-1}\right)^*+b_{21}   a_{11}^{-1}-  \left(b_{22}\right){}^*-b_{22} \right)  \\
\end{array}
\right)
$$
which implies $b_{12} = b_{11} a_{11}^{-1}$ and 
$$
\Re\left(b_{21}   a_{11}^{-1}-b_{22} \right) =0.
$$
Thus we have $b_{21}   a_{11}^{-1}-b_{22} = \ii \beta$ with $\beta \in \mathbb R$.

 The self-adjoint matrices in that case are
$$
A_0 =  \begin{pmatrix}
	a_{11} & 1& 0& 0\\
	a_{21} &0& (a_{11}^{-1})^*& 1
\end{pmatrix}, \qquad  A_1 = \begin{pmatrix}
	b_{11} & b_{11}a_{11}^{-1} & 0& 0\\
	b_{21} & b_{22} & 0 & 0
\end{pmatrix},
$$
with $a_{11} \in \mathbb C\setminus\{0\}$, $a_{21}, b_{11}, b_{21}, b_{22} \in \mathbb C$ and $a_{21}= \alpha a_{11}+\epsilon^{-1}$, $\alpha \in \mathbb R$ as well as $b_{21}   a_{11}^{-1}-b_{22} = \ii \beta$, $\beta \in \mathbb R$.

\paragraph{Class $\mathfrak{A}_{3,4}$.} In that case $A_0 \in \mathcal C_{\{3,4\}}$, namely
$$
A_0 = \begin{pmatrix}
	a_{11} & a_{12}& 1& 0\\
	a_{21} &a_{22}& 0& 1
\end{pmatrix}.
$$
In that case we have 
$$
0 = - A_1 \Omega_1 A_2^* + A_2 \Omega_1 A_1^* + A_2 \Omega_2 A_2^* = \left(
\begin{array}{cc}
	\epsilon  \left(a_{11}\right){}^*-a_{11} \epsilon  & \epsilon  \left(a_{21}\right){}^*+a_{12} \epsilon -1 \\
	-\epsilon  \left(a_{12}\right){}^*-a_{21} \epsilon +1 & a_{22} \epsilon -\epsilon  \left(a_{22}\right){}^* \\
\end{array}
\right),
$$
from which we infer $a_{11} = \alpha_1 \in \mathbb R$, $a_{22} = \alpha_2 \in \mathbb R$ and $a_{21}=\epsilon^{-1}-(a_{12})^*$. Moreover, we have
$$
0 = B \Omega_1 A_2^* + A_2 \Omega_1 B^*  = \left(
\begin{array}{cc}
	\epsilon  \left(b_{11}\right){}^*+b_{11} \epsilon  & \epsilon  \left(b_{21}\right){}^*-b_{12} \epsilon  \\
	b_{21} \epsilon -\epsilon  \left(b_{12}\right){}^* & -\epsilon  \left(b_{22}\right){}^*-b_{22} \epsilon  \\
\end{array}
\right),
$$
from which we infer $b_{11} = \ii \beta_1 \in \ii \mathbb R$, $b_{22} = \ii \beta_2 \in \ii \mathbb R$ and $b_{21}= (b_{12})^*$.

 The self-adjoint matrices in that case are
$$
A_0 =  \begin{pmatrix}
	\alpha_1 & a_{12}& 1& 0\\
	\epsilon^{-1}-(a_{12})^* &\alpha_2 & 0& 1
\end{pmatrix}, \qquad  A_1 = \begin{pmatrix}
	\ii \beta_1 & b_{12}& 0& 0\\
	(b_{12})^* & \ii \beta_2 & 0 & 0
\end{pmatrix},
$$
with $\alpha_1, \alpha_2, \beta_1, \beta_2 \in \mathbb R$ and $a_{12}, b_{12} \in \mathbb C$.

\subsection{Class $\mathfrak{B}$: $\rk(A_1) = 2$}

In this section we assume $\rk(A_1) =2$. Since $A_1 = [B\, |\, 0]$ this means that $\rk(B)=2$ and thus by the $\mathrm{GL}_2(\mathbb C)$-invariance we can reduce the study to
$$
A_1 = \begin{pmatrix}
	1 & 0 & 0& 0\\
	0 & 1 & 0 & 0
\end{pmatrix}
$$
To avoid any overlap with class $\mathfrak A$ it is sufficient to consider the case where $\rk(A_0) \leq 1$. Thus we write
$$
A_0 = \begin{pmatrix}
	a_1 & a_2 & a_3 & a_4\\
	\mu a_1 & \mu a_2 & \mu a_3 & \mu a_4
\end{pmatrix}
$$
with $a_1, a_2, a_3, a_4, \mu \in \mathbb C$. 

In that case we have 
$$
0 = B \Omega_1 A_2^* + A_2 \Omega_1 B^* = \left(
\begin{array}{cc}
	\epsilon  \left(a_3\right){}^*+a_3 \epsilon  & \epsilon  \left(\mu  a_3\right){}^*-a_4 \epsilon  \\
	a_3 \mu  \epsilon -\epsilon  \left(a_4\right){}^* & -\epsilon  \left(\mu  a_4\right){}^*-a_4 \mu  \epsilon  \\
\end{array}
\right),
$$
from which we infer $a_3 = \ii \alpha \in \ii \mathbb R$ and $a_4 = \ii \alpha \mu^*$ (in particular $\mu a_4 \in \ii \mathbb R$). Moreover, we have
\begin{align*}
0& = - A_1 \Omega_1 A_2^* + A_2 \Omega_1 A_1^* + A_2 \Omega_2 A_2^* \cr &= \alpha \left( \alpha (\mu^*-\mu) + \ii \epsilon \alpha (a_1^*+a_1 - a_2 \mu - a_2^*\mu^*)\right) \begin{pmatrix}
1 & \mu^* \\ \mu  & |\mu|^2
\end{pmatrix},
\end{align*}
from which we infer 
$$
\alpha \Big(\alpha \Im(\mu) - \epsilon \Re(a_1-a_2\mu)\Big)=0. 
$$

The self-adjoint matrices in that case are
$$
A_0 = \begin{pmatrix}
	a_1 & a_2 & \alpha & \alpha \mu^* \\
	\mu a_1 & \mu a_2 & \alpha \mu & \alpha |\mu|^2
\end{pmatrix}, \qquad A_1 = \begin{pmatrix}
1 & 0 & 0& 0\\
0 & 1 & 0 & 0
\end{pmatrix},
$$
with $a_1, a_2, \mu \in \mathbb C$, $\alpha \in \mathbb R$ and $\alpha \big(\alpha \Im(\mu) - \epsilon \Re(a_1-a_2\mu)\big)=0$. This last condition is equivalent to one of the three cases:
\begin{enumerate}
	\item $\alpha =0$
	\item $\alpha \neq 0,\, \mu \in \mathbb R, \, a_1 =a_2\mu + \ii\beta, \, \beta \in \mathbb R$
	\item $ \mu \in \mathbb C \setminus \mathbb R$ and $\alpha=\tfrac{\epsilon}{\Im(\mu)}  \Re(a_1-a_2\mu) \in \mathbb R$
\end{enumerate}
but we shall not use them explicitly below so we keep the general constraint instead.

\subsection{Class $\mathfrak{C}$: $\rk(A_0) = \rk(A_1) =1 $}

In this section we consider the case where $A_0$ and $A_1$ are exactly of rank 1, with $\rk(A_0+\ii k_x A_1)=2$.  Since $A_1 = [B\, |\, 0]$ this means that $\rk(B)=1$  and thus by the $\mathrm{GL}_2(\mathbb C)$-invariance we can reduce the study to
$$
A_1 = \begin{pmatrix}
	1 & 0 & 0& 0\\
	0 & 0 & 0 & 0
\end{pmatrix}
$$
Then, since $\rk(A_0)=1$ we write
$$
A_0 = \begin{pmatrix}
	a_1 & a_2 & a_3 & a_4\\
	\mu a_1 & \mu a_2 & \mu a_3 & \mu a_4
\end{pmatrix}
$$
with $a_1, a_2, a_3, a_4, \mu \in \mathbb C$. Moreover one has $\mu \neq 0$ and $(a_2,a_3,a_4)\neq 0$, otherwise $\rk(A_0 + \ii k_x A_1)=1<2$.

In that case we have 
$$
0 = B \Omega_1 A_2^* + A_2 \Omega_1 B^* = 
\left(
\begin{array}{cc}
	\epsilon  \left(a_3\right){}^*+a_3 \epsilon  & \epsilon  \left(\mu  a_3\right){}^* \\
	a_3 \mu  \epsilon  & 0 \\
\end{array}
\right)
$$
from which we infer $a_3=0$. Moreover, we have
\begin{align*}
	0& = - A_1 \Omega_1 A_2^* + A_2 \Omega_1 A_1^* + A_2 \Omega_2 A_2^* = \epsilon  \left(a_2 \left(a_4\right){}^*-a_4 \left(a_2\right){}^*\right)\left(
	\begin{array}{cc}
		1 & \mu ^* \\
		\mu  & |\mu|^2   \\
	\end{array}
	\right)
\end{align*}
from which we infer $\Im(a_2 a_4^*) = 0$. 

The self-adjoint matrices in that case are
$$
A_0 = \begin{pmatrix}
	a_1 & a_2 & 0 & a_4 \\
	\mu a_1 & \mu a_2 & 0 & \mu a_4
\end{pmatrix}, \qquad A_1 = \begin{pmatrix}
	1 & 0 & 0& 0\\
	0 & 0 & 0 & 0
\end{pmatrix},
$$
with $a_1, a_2, a_4, \mu \in \mathbb C$ such that $\mu \neq 0$, $(a_2,a_4)\neq 0$ and $\Im(a_2 a_4^*) = 0$. 

\section{Asymptotic expansion of the scattering amplitude \label{sec:asym_S}}

\subsection{Bulk eigensections \label{sec:bulksections}}

Here we prove Proposition~\ref{prop:S_explicit} and also provide explicit expression for bulk sections which are used further below. Boundary condition \eqref{eq:local_boundary_condition_A} reads
\begin{equation}
	A\Psi|_{y=0} = 0\,,\quad \Psi = \begin{pmatrix}
		\widetilde \psi\\
		\widetilde \psi'
	\end{pmatrix},
\end{equation}
 for $A \in \mathrm{M}_{2,4}(\mathbb C)$. 
We apply it to the scattering state from \eqref{eq:def_scattering_state} at $y=0$, 
\begin{align}
&\widetilde\psi_s = \widehat \psi_\mathrm{in}+S\widehat\psi_\mathrm{out}+T\widehat\psi_\mathrm{ev}\cr
& \widetilde\psi_s^\prime = -\ii \kappa \widehat \psi_\mathrm{in}+\ii \kappa S\widehat\psi_\mathrm{out}+ \ii \kappa_\mathrm{ev} T\widehat\psi_\mathrm{ev}
\end{align}
where $\widehat \psi_\mathrm{in/out/ev}$ are bulk solution at momentum $k_x$ and $k_y=-\kappa,\kappa,\kappa_{\mathrm{ev}}$ respectively.  Thus at $y=0$ we have
$$
\Psi_s  = \widehat \Psi_\mathrm{in} + S \widehat \Psi_\mathrm{out} + T  \widehat \Psi_\mathrm{ev}
$$
as in \eqref{eq:widehatpsi}. This amounts to
\begin{equation}
	A\Psi_s = \begin{pmatrix}
		-\, a_1\, -\\
		-\,a_2\,-
	\end{pmatrix}\Psi_s = \begin{pmatrix}
		a_1\cdot \Psi_s\\
		a_2\cdot \Psi_s
	\end{pmatrix} =\begin{pmatrix}
		0\\ 
		0
	\end{pmatrix}\,,
\end{equation}
where $a_i \in M_{1,4}(\mathbb C)$ is the $i$-th row of $A$ for $i=1,2$. This means that we get a $2\times 2$ linear system of equations to be solved for $S$ and $T$
\begin{align}
	a_1\cdot(\widehat \Psi_\mathrm{in}+S\widehat\Psi_\mathrm{out}+T\widehat\Psi_\mathrm{ev}) &= 0\,,\\
	a_2\cdot(\widehat\Psi_\mathrm{in}+S\widehat\Psi_\mathrm{out}+T\widehat\Psi_\mathrm{ev}) &= 0\,.
\end{align}
Its solution for $S$ is 
\begin{equation}
	S = -\frac{\begin{vmatrix}
			a_1\cdot \widehat\Psi_\mathrm{in} & a_1\cdot \widehat\Psi_\mathrm{ev}\\
			a_2\cdot \widehat\Psi_\mathrm{in} & a_2\cdot \widehat\Psi_\mathrm{ev}
	\end{vmatrix}}{\begin{vmatrix}
			a_1\cdot \widehat\Psi_\mathrm{out} & a_1\cdot \widehat\Psi_\mathrm{ev}\\
			a_2\cdot \widehat\Psi_\mathrm{out} & a_2\cdot \widehat\Psi_\mathrm{ev}
	\end{vmatrix}} = - \frac{\begin{vmatrix}
	A \widehat\Psi_\mathrm{in} & A\widehat \Psi_\mathrm{ev}
\end{vmatrix}}{\begin{vmatrix}
A \widehat\Psi_\mathrm{out} & A \widehat \Psi_\mathrm{ev}
\end{vmatrix}}
%\,,\quad T = -\frac{\begin{vmatrix}
%			a_1\cdot \Psi_\mathrm{out} & a_1\cdot \Psi_\mathrm{in}\\
%			a_2\cdot \Psi_\mathrm{out} & a_2\cdot \Psi_\mathrm{in}
%	\end{vmatrix}}{\begin{vmatrix}
%			a_1\cdot \Psi_\mathrm{out} & a_1\cdot \Psi_\mathrm{ev}\\
%			a_2\cdot \Psi_\mathrm{out} & a_2\cdot \Psi_\mathrm{ev}
%	\end{vmatrix}}\,.
\end{equation}
Let $\widehat \psi^0$ a bulk section that is regular and non-vanishing in $\mathbb R^2\setminus\{0\} \cup \{\infty\}$ and let $\widehat \psi^\infty$ be a bulk section that is regular and non-vanishing in $\mathbb R^2$. Then for $k_x \in \mathbb R$ and $\kappa >0$ we set $\widehat \psi_\mathrm{in}(k_x,-\kappa) = \widehat\psi_0(k_x,-\kappa)$, $\widehat \psi_\mathrm{out}(k_x,\kappa) = \widehat\psi_0(k_x,\kappa)$ and $\widehat\psi_\mathrm{ev}(k_x,\kappa_\mathrm{ev}) =  \widehat\psi_\infty(k_x,\kappa_\mathrm{ev})$, which leads to the expression \eqref{eq:S_explicit} for $S$ and $g$ from Proposition~\ref{prop:S_explicit}.

To expand $S$ near $\infty$ we provide explicit expressions for the sections $\widehat\psi_0$ and $\widehat\psi_\infty$. We concentrate on the upper part of the spectrum and only discuss the eigensections corresponding to $\omega_+(k)=+\sqrt{k^2+(m-\epsilon k^2)^2}$. Assuming $m$, $\epsilon>0$ the corresponding Chern number is $C_+=1$, and therefore it is impossible to find a global bulk eigensection $\widehat{\psi}$ regular for all points on the compactified $k$-plane, $\mathbb{C}\cup\{\infty\}\cong S^2$. To cover the sphere we need at least two distinct ones overlapping in a region and regular where they are defined. Explicitly, one section satisfying $H\widehat{\psi}=\omega_+\widehat{\psi}$ is given by
\begin{equation}\label{eq:bulk_eigensection_zero}
	\widehat\psi_0(k_x,k_y) = \frac{1}{\sqrt{2}\sqrt{1-q(\bvec{k})}}\begin{pmatrix}
		\frac{k_x-\ii k_y}{\omega_+(k)}\\
		q(\bvec{k})-1
	\end{pmatrix}\,,\quad q(\bvec{k}):=\frac{m-\epsilon \bvec{k}^2}{\omega_+}\,.
\end{equation}
One can check that $\norm{\widehat\psi_0(k_x,k_y)}=1$. Notice that $\omega_+(k)\sim \epsilon k^2$ and $q\to -\epsilon/\abs{\epsilon}=-1$ as $k \to\infty$. On the other hand, $\omega_+(k) \to m$ and
$$
q(k) = 1  - \dfrac{k^2}{2m^2} + o(k^2)
$$
as $k \to 0$. Thus, writing $k_x+\ii k_y = k \ee^{\ii\phi}$, 
\begin{equation}\label{eq:bulk_eigensection_used_zero}
	\lim_{k\to 0} \widehat\psi_0(k_x,k_y) = 		\begin{pmatrix}
		\ee^{-\ii\phi}\\
		0
	\end{pmatrix}, \qquad 	\lim_{k\to \infty} \widehat\psi_0(k_x,k_y) = 		\begin{pmatrix}
	0\\
	-1
\end{pmatrix}\,.
\end{equation}
Therefore, $\psi_0$ is regular at $k=\infty$ but not at $k=0$. There, it has a removable phase singularity up to a gauge transformation. We define
$
\psi_\infty = \lambda\psi_0$ with $ \lambda = \ee^{\ii\phi}=k^{-1}(k_x+\ii k_y)
$, or explicitly
\begin{equation}\label{eq:bulk_eigensection_infty}
	\widehat\psi_\infty(k_x,k_y) = \frac{1}{k\sqrt{2}\sqrt{1-q(\bvec{k})}}\begin{pmatrix}
		\frac{k^2}{\omega_+(k)}\\
		(k_x+\ii k_y)(q(\bvec{k})-1)
	\end{pmatrix}\,.
\end{equation}
In particular, $\psi_\infty$ is regular at $k=0$ but not at $k=\infty$:
$$
	\lim_{k\to 0} \widehat\psi_\infty(k_x,k_y) = 		\begin{pmatrix}
		1\\
		0
	\end{pmatrix}, \qquad 	\lim_{k\to \infty}\widehat \psi_\infty(k_x,k_y) = 		\begin{pmatrix}
		0\\
		-\ee^{\ii \phi}
	\end{pmatrix}\,.
$$

Now for $k_x \in \mathbb R$ and $\kappa >0$ consider $k_y=\kappa_{\mathrm{ev}}(k_x,\kappa) \in \ii \mathbb R$ , see \eqref{eq:kappa_ev_div}. Define
$$
\tilde k^2 = k_x^2+\kappa_{\mathrm{ev}}^2 = -(k_x^2 + \kappa^2) - \dfrac{1-2 \epsilon m}{\epsilon^2}<0,
$$
as well as $\tilde \omega_+ = \sqrt{\tilde k^2 + (m-\epsilon \tilde k^2)^2}$ and $\tilde q = (m-\epsilon \tilde k^2)\tilde \omega_+^{-1}$. One can check that $\tilde q >1$ so that $\tilde k^2(1-\tilde q)>0$ and thus $\tilde k\sqrt{1-\tilde q}$ is well defined, and so is $\widehat \psi_\infty(k_x,\kappa_{\mathrm{ev}}(k_x,\kappa))$. In particular 
$$
\lim_{(k_x,\kappa)\to \infty} \tilde q = 1, \qquad \lim_{(k_x,\kappa)\to \infty} \tilde k \sqrt{1-\tilde q} = \dfrac{1}{\sqrt{2} \epsilon},
$$
so that
$$
\lim_{(k_x,\kappa)\to \infty} \widehat\psi_\infty(k_x,\kappa_{\mathrm{ev}}(k_x,\kappa)) = \begin{pmatrix}
	1\\
	0
\end{pmatrix}\,.
$$ 
Notice that $\kappa_{\mathrm{ev}} \to \ii \epsilon^{-1}\sqrt{1-2\epsilon m} \neq 0$ as $(k_x,\kappa)\to 0$ so that $ \widehat\psi_\infty(k_x,\kappa_{\mathrm{ev}}(k_x,\kappa)) $ is defined for all $\kappa>0$ and $k_x \in \mathbb R$, including at $\infty$.  We then use $\widehat\psi_\mathrm{in}(k_x,-\kappa)=\widehat\psi_0(k_x,-\kappa)$,  $\widehat\psi_\mathrm{out}(k_x,\kappa)=\widehat\psi_0(k_x,\kappa)$ and
$\widehat\psi_\mathrm{ev}(k_x,\kappa_\mathrm{ev}) = \widehat\psi_\infty(k_x,\kappa_{\mathrm{ev}}(k_x,\kappa))$ in the definition \eqref{eq:def_scattering_state} of the scattering state. Then recall that
\begin{equation}\label{eq:explicit_Psi}
\widehat\Psi_0(k_x,\pm \kappa) = \begin{pmatrix}
	\widehat \psi_0(k_x,\pm \kappa) \\ \pm\ii \kappa \widehat \psi_0(k_x,\pm \kappa)
\end{pmatrix}, \qquad \widehat\Psi_\infty(k_x,\kappa_{\mathrm{ev}}) = \begin{pmatrix}
	\widehat \psi_\infty(k_x,\kappa_\mathrm{ev}) \\ \ii\kappa_\mathrm{ev}\widehat \psi_\infty(k_x,\kappa_\mathrm{ev})
\end{pmatrix}\,,
\end{equation}
which provides a fully explicit expression for the scattering amplitude $S$. In particular, it is regular near infinity.

\subsection{Expansion near $\infty$} 

In order to expand the scattering amplitude near $\infty$, we focus on the expansion of $g$. The latter requires to expand $\widehat\psi_0(k_x,\kappa)$ and $\widehat\psi_\infty(k_x,\kappa_{\mathrm{ev}})$ near $\infty$. Such an expansion is a  bit tricky. Even though the limits are single-valued, the next terms are not, as they are $k_x,k_y$ dependent. Some of these terms are relevant in the expansion of $g$ and have to be properly included. We shall then distinguish in each section the radial part, which depends only on the variable $k = \sqrt{k_x^2+\kappa^2}$ and can be simply expanded, from the directional part which depends on $k_x, \kappa$ and $\kappa_\mathrm{ev}$. Thus we get
\begingroup
\renewcommand*{\arraystretch}{1.5}
$$
\widehat\psi_0(k_x,\kappa) = \begin{pmatrix}
	(k_x-\ii \kappa) \left(\frac{1}{2\epsilon k^2} + o\left(\frac{1}{k^3}\right)\right)\\
	-1+\frac{1}{8\epsilon k^2} + o\left(\frac{1}{k^3}\right)
\end{pmatrix}
$$
For the expansion of $\widehat\psi_\infty$, we first notice that $\tilde k$, $\tilde q$ and $\tilde \omega_\pm$ are all functions of $k^2=k_x^2+\kappa^2$ only, so that they can be expanded as well as a radial part. The remaining terms depends on $k_x$ and $\kappa_{\mathrm{ev}}$. Explicitly we get
$$
\widehat\psi_\infty(k_x,\kappa_\mathrm{ev}) = \begin{pmatrix}
	1+\frac{1}{8\epsilon k^2} + o\left(\frac{1}{k^3}\right)\\
	(k_x+\ii \kappa_\mathrm{ev}) \left(-\frac{1}{2\epsilon k^2} + o\left(\frac{1}{k^3}\right)\right)\\
\end{pmatrix}
$$
\endgroup
Notice that $\kappa_\mathrm{ev} \sim \sqrt{2 k_x^2 + \kappa^2}$ as $k_x,\kappa \to \infty$. This way one has for example
$$
\dfrac{k_x-\ii \kappa}{k^2}\to 0, \qquad \dfrac{k_x+\ii \kappa_\mathrm{ev}}{k^2} \to 0
$$
as $k_x,\kappa \to \infty$, independent from the direction towards infinity, whereas expressions like 

\noindent $k_x(k_x-\ii \kappa)k^{-2}$ have a limit which depends on the $(k_x,\kappa)$ direction. Such expressions naturally appear in the scattering amplitude through $\widehat\Psi_0$ and $\widehat\Psi_\infty$, see \eqref{eq:explicit_Psi}, which involve $\kappa \widehat\psi_0$ and $\kappa_{\mathrm{ev}} \widehat\psi_\infty$, and through $A\widehat\Psi_0$ and $A\widehat\Psi_\infty$, which involve $k_x \widehat\psi_0$ and $k_x \widehat\psi_\infty$. Consequently, an expansion of the radial part of $\widehat\psi_0$ and $\widehat\psi_\infty$ up to order $k^{-3}$ is required to get and expansion of $g$ up to order 1 in the limit $k_x,\kappa \to \infty$. The non radial part of the expansion will be first kept as it is and specified later when passing to dual variables. Thus we will use
\begingroup
\renewcommand*{\arraystretch}{1.7}
\begin{equation}\label{eq:expanded_sections}
\widehat\psi_0(k_x,\kappa) = \begin{pmatrix}
 \dfrac{k_x-\ii \kappa}{2\epsilon k^2} \\
	-1+\dfrac{1}{8\epsilon k^2} 
\end{pmatrix} + o\left(\dfrac{1}{k^2}\right), \qquad \widehat\psi_\infty(k_x,\kappa_\mathrm{ev}) = \begin{pmatrix}
1+\dfrac{1}{8\epsilon k^2} \\
 -\dfrac{k_x+\ii \kappa_\mathrm{ev}}{2\epsilon k^2} \\
\end{pmatrix}+ o\left(\dfrac{1}{k^2}\right)
\end{equation}
\endgroup

\subsection{Detecting anomalies \label{sec:detecting_anomalies}} 

From the explicit expression of $g$ from Proposition~\ref{prop:S_explicit} we compute its expansion near $\infty$ using the expansion of $\widehat\psi_0$ and $\widehat\psi_\infty$ from the previous section and keeping all the terms that do not vanish in the limit:
\begin{equation}
	g(k_x,\kappa) = g_\infty(k_x,k)+o(1),
\end{equation}
with $o(1)$ in the limit $k_x^2+ \kappa^2 \to \infty$, independent from the direction. The exact expression for $g_\infty$ depends on the boundary condition.  The winding number of $S$ near infinity is thus reduced to the study the winding phase of $g_\infty$ computed along the curve $\Gamma_\delta(\lambda)$ with dual variables $\lambda_x \in \mathbb [-\lambda,\lambda]$ and $\delta>0$, see \eqref{eq:defGammadelta}. Then consider
\begin{equation}\label{eq:defg0}
	g_0(\lambda_x,\delta) =  g_\infty\Big(-\dfrac{\lambda_x}{\lambda_x^2+\delta^2},\dfrac{\delta}{\lambda_x^2+\delta^2}\Big).
\end{equation}

\begin{lem}\label{lem:asymptotic_winding}
	Assume the existence of $M>0$ as well as $\lambda_0>0$ and $\delta_0>0$ sufficiently small such that $|g_0|>M$ and $g_0$ is continuously differentiable on $(0,\delta_0]\times [-\lambda_0,\lambda_0]$. Then 
	$$
	\lim_{\lambda \to 0}\lim_{\delta\to 0} \int_{\Gamma_{\delta}(\lambda)} g^{-1}\dd g = \lim_{\lambda \to 0}\lim_{\delta\to 0} \int_{-\lambda}^\lambda g_0^{-1}\partial_{\lambda_x} g_0 \dd \lambda_x
	$$
\end{lem}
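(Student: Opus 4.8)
The plan is to regard both sides as the total variation of a logarithm along the dual path $\Gamma_\delta(\lambda)$, and to show that replacing the exact $g$ by its leading part $g_\infty$ changes the integral only by boundary terms that the outer limit $\lambda\to0$ suppresses.

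First I would pull everything back to the parameter interval. Writing $\tilde g(\lambda_x,\delta):=g\big(k_x(\lambda_x,\delta),\kappa(\lambda_x,\delta)\big)$ for the restriction of $g$ to $\Gamma_\delta(\lambda)$, the chain rule gives $\int_{\Gamma_\delta(\lambda)}g^{-1}\dd g=\int_{-\lambda}^{\lambda}\tilde g^{-1}\partial_{\lambda_x}\tilde g\,\dd\lambda_x$, while by \eqref{eq:defg0} the right-hand integrand is built from the pullback $g_0(\lambda_x,\delta)=g_\infty(k_x,\kappa)$. Along the path one has $k^2=k_x^2+\kappa^2=(\lambda_x^2+\delta^2)^{-1}$, so $k\ge(\lambda^2+\delta^2)^{-1/2}$ is large uniformly over $\lambda_x\in[-\lambda,\lambda]$ once $\lambda^2+\delta^2$ is small. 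Combined with $g=g_\infty+o(1)$ as $k\to\infty$ (and with $|g_0|>M$, i.e. $|g_\infty|\ge M$), this makes both $\tilde g$ and $g_0$ of class $C^1$ and nonvanishing on the path, so both integrals are well defined.

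The key observation is that the two integrands differ by a total derivative, $\tilde g^{-1}\partial_{\lambda_x}\tilde g-g_0^{-1}\partial_{\lambda_x}g_0=\partial_{\lambda_x}\log(\tilde g/g_0)$, so by the fundamental theorem of calculus the difference of the two integrals equals the net change of a continuous branch of $\log(\tilde g/g_0)$ between the endpoints $\lambda_x=\pm\lambda$. Setting $\rho:=(\tilde g-g_0)/g_0$, I would use $|\tilde g-g_0|=|g-g_\infty|\le\omega(k)$ with a non-increasing modulus $\omega(k)\to0$ (the direction-independent $o(1)$ asserted after \eqref{eq:defg0}) together with $|g_0|\ge M$ to get $|\rho|\le\omega(k)/M$ uniformly on the path. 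For $\lambda,\delta$ small enough this forces $|\rho|<1/2$, so $1+\rho$ stays in the disk $|z-1|<1/2$, on which the principal logarithm is single-valued; the continuous branch of $\log(\tilde g/g_0)$ therefore agrees with $\mathrm{Log}(1+\rho)$ up to a constant that cancels in the endpoint difference. Using $|\mathrm{Log}(1+w)|\le2|w|$ for $|w|\le1/2$, and that $\lambda_x^2+\delta^2$ is maximal (hence $k$ minimal) at the endpoints, I obtain
$$
\left|\int_{\Gamma_\delta(\lambda)}g^{-1}\dd g-\int_{-\lambda}^{\lambda}g_0^{-1}\partial_{\lambda_x}g_0\,\dd\lambda_x\right|=\Big|\big[\mathrm{Log}(1+\rho)\big]_{\lambda_x=-\lambda}^{\lambda_x=\lambda}\Big|\le\frac{4}{M}\,\omega\!\left(\frac{1}{\sqrt{\lambda^2+\delta^2}}\right).
$$

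Finally I would take the iterated limit: for fixed $\lambda<\lambda_0$ the right-hand side tends to $\tfrac4M\omega(1/\lambda)$ as $\delta\to0$, and then to $0$ as $\lambda\to0$ because $1/\lambda\to\infty$ and $\omega\to0$ at infinity; hence the two iterated limits coincide, which is the claim. The main obstacle — and the reason the order $\lim_\lambda\lim_\delta$ is essential — is exactly this boundary term: for fixed $\lambda$ the endpoints of $\Gamma_\delta(\lambda)$ converge as $\delta\to0$ to the finite points $(\mp1/\lambda,0)$, where $k=1/\lambda$ is finite and the approximation $g\approx g_\infty$ is not yet accurate, so the inner limit alone does not remove the error; only the outer limit $\lambda\to0$, which drives these endpoints to infinity, kills it. The remaining care is to guarantee that one never leaves a single branch of the logarithm along the path, which is precisely what the uniform bound $|\rho|<1/2$ — and through it the hypotheses $|g_0|>M$ and $g_0\in C^1$ — secure.
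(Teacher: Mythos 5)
Your proof is correct and follows essentially the same route as the paper's: both pull the integral back to the interval $[-\lambda,\lambda]$, factor $\tilde g = g_0\left(1+\rho\right)$ using the hypothesis $|g_0|>M$ together with the direction-independent $o(1)$ estimate, and show that the factor $1+\rho$ contributes no winding in the iterated limit. Your formulation of that last step as an endpoint evaluation of $\mathrm{Log}(1+\rho)$ via the fundamental theorem of calculus is just a more quantitative bookkeeping of the paper's uniform bound on the argument of $1+R/g_0$, and it correctly isolates why the outer limit $\lambda\to 0$ (rather than the inner limit $\delta \to 0$ alone) is what suppresses the error.
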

\begin{proof} The winding of $g$ reads
	$$
	\int_{\Gamma_\delta(\lambda)} g^{-1}\dd g = \int_{-\lambda}^\lambda \tilde g^{-1} \partial_{\lambda_x} \tilde g \dd \lambda_x, \qquad  \tilde g(\lambda_x,\delta) = g\Big(-\dfrac{\lambda_x}{\lambda_x^2+\delta^2},\dfrac{\delta}{\lambda_x^2+\delta^2}\Big).
	$$
	Since $k_x^2+\kappa^2 = (\lambda_x^2+\delta^2)^{-1}$, one has $\tilde g(\lambda_x,\delta) = g_0(\lambda_x,\delta)+o(1)$ with $o(1)$ in the limit $\lambda_x^2+\delta^2\to 0$, independent from the direction. 
	 Thus we decompose
	$$
	\tilde g=g_0+R=g_0\left(1+\dfrac{R}{g_0}\right).
	$$
	with $R(\lambda_x,\delta)\to 0$ as $\lambda_x^2+\delta^2\to 0$. Since $|g_0|>M$ on $(0,\delta_0]\times [-\lambda,\lambda]$ then $\tfrac{R}{G}\to 0$ as $\lambda_x^2+\delta^2\to 0$.
	
	Moreover, the multiplicative property of winding phase gives
	$$
 \int_{-\lambda}^\lambda \tilde g^{-1} \partial_{\lambda_x} \tilde g \dd \lambda_x = \int_{-\lambda}^\lambda g_0^{-1} \partial_{\lambda_x} g_0\dd \lambda_x + \int_{-\lambda}^\lambda \left(1+\dfrac{R}{g_0}\right)^{-1}  \partial_{\lambda_x}\left(1+\dfrac{R}{g_0}\right)\dd \lambda_x.
	$$
	Let $\varepsilon>0$ and $\delta_1,\lambda_1$  such that $\left|\tfrac{R}{g_0}\right|<\varepsilon$ on $(0,\delta_1]\times [-\lambda_1,\lambda_1]$.
	
	Consequently,  for $|\lambda|< \min(\lambda_0,\lambda_1)$ and $\delta < \min(\delta_0,\delta_1)$ one has
	$$ 
	\left|\arg\left(1+\dfrac{R}{g_0}\right)\right| \leq \dfrac{\varepsilon}{1-\varepsilon}
	$$
	so that
	$$
	\left|\int_{-\lambda}^\lambda \left(1+\dfrac{R}{g_0}\right)^{-1}  \partial_{\lambda_x} \left(1+\dfrac{R}{g_0}\right)\dd \lambda_x\right|  \leq \dfrac{2\varepsilon}{1-\varepsilon}
	$$
	Thus the winding phase of $\left(1+\tfrac{R}{g_0}\right) $ goes to $0$ as $\delta\to 0$ and $\lambda \to 0$.
\end{proof}

One can check that the assumption of the lemma above is satisfied for all explicit $g_0$ below by checking that  the limits of $|g_0|$ as $\delta\to 0$ or $\lambda \to 0$ are either finite and non vanishing or infinite. Thus, to compute $w_\infty$ we can simply replace $S$ by $S_0$ with
$$
S_0(\lambda_x,\delta) = \dfrac{g_0(\lambda_x,-\delta)}{g_0(\lambda_x,\delta)}.
$$

The next step is to extract the dominant scale from $g_0$ in the limit $\delta\to 0$ and $\lambda \to 0$. First, due to the expression of $S_0$ we can replace $g_0$ by any function that is even in $\delta$, typically $g_0 \mapsto g_0 f$ with $f(\delta,\lambda) = (\lambda_x^2+\delta^2)^2$ or $(\lambda_x^2+\delta^2)^2$, see other examples below. We shall keep the same symbol for the simplified expression of $g_0$.

Finally, we use the following proposition to extract the leading contribution from $g_0$.
\begin{prop}\label{lem:dominant_scale}
Let $g_0 : \mathbb R \times \mathbb R_+^* \to \mathbb C^*$ continuously differentiable. Let $s >0$,  $u=\frac{\lambda_x}{\delta^s}$ and $r>0$. Decompose 
$$
\delta^{-r} g_0(u \delta^s,\delta) = G(u) + R(u,\delta)
$$
with $G : \mathbb R \to  \mathbb C^*$ and $R:  \mathbb R \times \mathbb R_+^* \to \mathbb C$ continuously differentiable and such that 
\begin{enumerate}
	\item For any $u \in \mathbb R$, $\displaystyle \lim_{\delta \to 0} R(u,\delta) =0$,
	\item For $\delta >0$, $$\displaystyle \dfrac{R(u,\delta)}{G(u)} \sim C u^\gamma \delta^{s \gamma}$$ as $u \to +\infty$, with $C \in \mathbb C^*$ independent from $\delta$, and $\gamma >0$.
	\item Assumption 2 also holds as $u \to -\infty$ with possibly distinct constants $\widetilde C$ and $\widetilde \gamma$.
\end{enumerate}
Then, 
\begin{equation}\label{eq:lem_dominant_scale}
	\lim_{\lambda \to 0}\lim_{\delta \to 0} \int_{-\lambda}^\lambda g_0^{-1} \partial_{\lambda_x} g_0 \dd \lambda_x = \int_{-\infty}^{\infty} G^{-1}(u)G'(u) \dd u.
\end{equation}
\end{prop}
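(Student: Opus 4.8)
The plan is to turn the scaled logarithmic-derivative integral into a boundary evaluation by the change of variables $u=\lambda_x\delta^{-s}$, which stretches the shrinking interval $[-\lambda,\lambda]$ onto all of $\mathbb R$ as $\delta\to0$. First I would substitute $\lambda_x=u\delta^s$, $\dd\lambda_x=\delta^s\dd u$, and use that $g_0^{-1}\partial_{\lambda_x}g_0\,\dd\lambda_x=\partial_u\log\!\big(g_0(u\delta^s,\delta)\big)\dd u$ is invariant under this rescaling. Since the constant prefactor in $\delta^{-r}g_0(u\delta^s,\delta)=G(u)+R(u,\delta)$ contributes nothing to the logarithmic derivative, one obtains the exact identity
\[
\int_{-\lambda}^\lambda g_0^{-1}\partial_{\lambda_x}g_0\,\dd\lambda_x
=\int_{-\Lambda}^{\Lambda}\frac{G'(u)+\partial_u R(u,\delta)}{G(u)+R(u,\delta)}\,\dd u,\qquad \Lambda:=\lambda\delta^{-s},
\]
where $\Lambda\to+\infty$ as $\delta\to0$ for each fixed $\lambda>0$.

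Next I would factor $G+R=G\,(1+h)$ with $h:=R/G$, which is legitimate because $G$ never vanishes, and split the integrand via multiplicativity of the logarithmic derivative as $G'/G+(1+h)'/(1+h)$. The first piece yields $\int_{-\Lambda}^\Lambda G^{-1}G'\,\dd u$, whose limit as $\delta\to0$ (hence $\Lambda\to\infty$) is by definition the symmetric improper integral $\int_{-\infty}^{\infty}G^{-1}(u)G'(u)\,\dd u$ appearing on the right of \eqref{eq:lem_dominant_scale}, and this limit is already independent of $\lambda$. The entire content of the proposition is thus that the error term $W(\delta,\lambda):=\int_{-\Lambda}^\Lambda (1+h)^{-1}\partial_u(1+h)\,\dd u$, i.e.\ the change of argument of $u\mapsto 1+h(u,\delta)$, vanishes in the iterated limit.

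To control $W$ I would first establish that $h$ is uniformly small on $[-\Lambda,\Lambda]$ once $\delta$ and $\lambda$ are small. On a fixed compact $[-A,A]$ this follows from the pointwise vanishing in hypothesis (1) together with continuity (uniform convergence on compacta); on the tails $A\le|u|\le\Lambda$ the uniform form of hypotheses (2)--(3) gives $|h(u,\delta)|\lesssim |C|\,u^\gamma\delta^{s\gamma}$, an expression increasing in $u$ and therefore bounded by its endpoint value $|C|\,\Lambda^\gamma\delta^{s\gamma}=|C|\,\lambda^\gamma$, which is small for small $\lambda$. Consequently $1+h$ stays in a small disc about $1$, avoids $0$, the principal logarithm applies along the whole path, and the winding collapses to its boundary values $W(\delta,\lambda)=\log\!\big(1+h(\Lambda,\delta)\big)-\log\!\big(1+h(-\Lambda,\delta)\big)$.

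Finally I would evaluate the endpoints. Since $u=\Lambda=\lambda\delta^{-s}\to+\infty$ as $\delta\to0$, hypothesis (2) applies there and $h(\Lambda,\delta)\sim C\Lambda^\gamma\delta^{s\gamma}=C\lambda^\gamma$, so $\lim_{\delta\to0}h(\Lambda,\delta)=C\lambda^\gamma$; likewise $h(-\Lambda,\delta)\to\widetilde C\lambda^{\widetilde\gamma}$ by hypothesis (3). Hence $\lim_{\delta\to0}W(\delta,\lambda)=\log(1+C\lambda^\gamma)-\log(1+\widetilde C\lambda^{\widetilde\gamma})$, which tends to $0$ as $\lambda\to0$ because $\gamma,\widetilde\gamma>0$, and \eqref{eq:lem_dominant_scale} follows. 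The delicate step I expect to fight with is precisely this joint control at the moving endpoints: substituting the $u\to\infty$ asymptotics at the $\delta$-dependent point $u=\lambda\delta^{-s}$ requires the asymptotics of (2)--(3) to be uniform as $\delta\to0$, and one must simultaneously rule out any winding of $1+h$ on the intermediate range so that $W$ genuinely reduces to boundary terms.
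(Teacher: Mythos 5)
Your proposal is correct and follows essentially the same route as the paper's proof: the same change of variables $u=\lambda_x\delta^{-s}$, the same factorization $G\,(1+R/G)$ with multiplicative splitting of the logarithmic derivative, and the same two-region control (compact core via hypothesis (1), tails via the asymptotics of (2)--(3) evaluated at the moving endpoint $u=\lambda\delta^{-s}$, giving the bound $|C|\lambda^\gamma$) --- including the same implicit requirement, which you rightly flag, that these asymptotics be uniform in $\delta$. The only cosmetic difference is the final step: you collapse the error winding to principal logarithms at the endpoints, whereas the paper bounds $\left|\arg\left(1+\tfrac{R}{G}\right)\right|$ pointwise by $\varepsilon/(1-\varepsilon)$; both rest on the identical observation that a curve confined to a small disc about $1$ cannot wind.
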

\begin{proof}
	First notice that for $\delta>0$ the winding phase of $g_0$ and $\delta^{-r} g_0$ are the same. We change the integration variable $\lambda_x$ to $u=\tfrac{\lambda_x}{\delta^s}$ and replace $g_0$ by $g_0\delta^{-r} = G+R$. Then we rewrite $G+R=G(1+\tfrac{R}{G})$ and use the additive property of winding phase:
	\begin{align}\label{eq:proof_lem_dominant_scale}
		\int_{-\lambda}^\lambda g_0^{-1} \partial_{\lambda_x} g_0 \dd \lambda_x &= \int_{-\frac{\lambda}{\delta^s}}^{\frac{\lambda}{\delta^s}} g_0^{-1} \partial_{u} g_0 \dd u\cr
		& = \int_{-\frac{\lambda}{\delta^s}}^{\frac{\lambda}{\delta^s}} G^{-1}(u)  G'(u) \dd u + \int_{-\frac{\lambda}{\delta^s}}^{\frac{\lambda}{\delta^s}} \Big(1+\frac R G\Big)^{-1}(u,\delta)  \partial_u \Big(1+\frac R G\Big)(u, \delta) \dd u.
	\end{align}
	The integrand of the first term does not depend on $\delta$, so in the limit $\delta \to 0$ we get the right hand side of \eqref{eq:lem_dominant_scale}. 
	
	Then we claim that for all $\varepsilon>0$ there exist $\delta_0 >0$ and $\lambda_0>0$ such that 
	$$
	\forall u \in \left[-\frac{\lambda}{\delta^s}, \frac{\lambda}{\delta^s}\right], \quad \forall \delta < \delta_0, \quad \forall \lambda < \lambda_0, \qquad \left|\dfrac{R(u,\delta)}{G(u)}\right| \leq \varepsilon\,,
	$$
	which implies, similarly to the proof of the Lemma above, that the winding phase of $u \mapsto 1+\tfrac{R}{G}$ is arbitrarily small for $u \in \left[-\tfrac{\lambda}{\delta^s}, \tfrac{\lambda}{\delta^s}\right]$ if  $\delta$, $\lambda$ are sufficiently small. Consequently, the second term in the right hand side of \eqref{eq:proof_lem_dominant_scale} vanishes in the limit $\delta \to 0$, $\lambda\to 0$, and we get \eqref{eq:lem_dominant_scale}.

	 Let $\varepsilon>0$. In order to prove the claim above, we split the study of the ratio by into the asymptotic and finite regions. First, by Assumption 2, there exists a $u_1>0$ such that, 
	$$
	\forall u>u_1, \qquad  \left|\dfrac{R(u,\delta)}{G(u)}\right| \leq (1+\varepsilon) C u^\gamma \delta^{s \gamma}
	$$
	Let $\delta_1>0$ such that $\tfrac{\lambda}{\delta_1^\alpha}>u_1$. For $\delta <\delta_1$ and $u$ such that $u_1 < u \leq  \tfrac{\lambda}{\delta^s}$ one has
	$$
	 \left|\dfrac{R(u,\delta)}{G(u)}\right| \leq (1+\varepsilon) C \lambda^\gamma,
	$$
	so that for $\lambda<\lambda_1$ sufficiently small the right hand side is smaller than $\varepsilon$. 
	
	Similarly, at $-\infty$ there exists $u_2>0$ and $\delta_2>0$ such that for $\delta <\delta_2$ and $u$ such that $- \tfrac{\lambda}{\delta^s}\leq u < -u_2$ one has
	$$
	\left|\dfrac{R(u,\delta)}{G(u)}\right| \leq (1+\varepsilon) \widetilde C \lambda^{\widetilde \gamma}\,.
	$$
	So that for $\lambda<\lambda_2$ sufficiently small the right hand side is smaller than $\varepsilon$. 
	
	Finally, $G$ and $R$ are continuous on $[-u_2,u_1]$ and $G(u)\neq 0$. Let $M >0$ such that $|G(u)|>M$ for $u \in [-u_2,u_1]$. Moreover, by Assumption 1 one has $R(u,\delta)\to 0$ as $\delta\to 0$. For $u \in [-u_2,u_1]$, this limit is uniform in $\delta$. So there exists a $\delta_3>0$ such that 
	$$
	\forall 0 <\delta < \delta_3, \forall  u \in [-u_2,u_1], \qquad |R(u,\delta)|< \dfrac{\varepsilon}{M}\,,
	$$
	which implies
	$$
	\forall 0< \delta < \delta_3, \forall  u \in [-u_2,u_1], \qquad \left|\dfrac{R(u,\delta)}{G(u)}\right|< \varepsilon\,.
	$$
	Putting all together, we get the claim if $\delta < \min(\delta_1,\delta_2,\delta_3)$ and $\lambda<\min(\lambda_1,\lambda_2)$.
\end{proof}

This lemma drastically simplifies the computation of the winding number since the leading contribution $G$ usually consists of only a few terms of the numerous ones of $g_0$.
\begin{cor}\label{cor:dominant_scale}
	Assume 
	\begin{align*}
	\delta^{-r} g_0(u \delta^s,\delta) = G_+(u) + R_+(u,\delta), \qquad \delta^{-r} g_0(u \delta^s,-\delta) = G_-(u) + R_-(u,\delta)
	\end{align*}
	 with $G_\pm,R_\pm$ satisfying the assumption of Proposition~\ref{lem:dominant_scale}. Then
	$$
	w_\infty = \int_{-\infty}^{\infty}  \mathcal S^{-1}(u) \mathcal S'(u) \dd u, \qquad \mathcal S(u) = \dfrac{G_-(u)}{G_+(u)}\,.
	$$
\end{cor}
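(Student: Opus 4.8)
The plan is to deduce the corollary directly from Proposition~\ref{lem:dominant_scale}, exploiting the additive structure of the logarithmic derivative of a quotient. First I would recall that, by Proposition~\ref{prop:S_explicit} together with the reduction $S \mapsto S_0$ established just above, computing $w_\infty$ amounts to computing the winding phase of
$$
S_0(\lambda_x,\delta) = \frac{g_0(\lambda_x,-\delta)}{g_0(\lambda_x,\delta)}
$$
over $\lambda_x \in [-\lambda,\lambda]$, in the iterated limit $\delta \to 0$ then $\lambda \to 0$. The overall constant $-1$ appearing in $S = -g(\cdot,-\kappa)/g(\cdot,\kappa)$ is irrelevant, since a multiplicative constant does not contribute to the logarithmic derivative.

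Second, I would use that the logarithmic derivative of a quotient splits as a difference:
$$
S_0^{-1}\partial_{\lambda_x} S_0 = [g_0(\cdot,-\delta)]^{-1}\partial_{\lambda_x}g_0(\cdot,-\delta) - [g_0(\cdot,\delta)]^{-1}\partial_{\lambda_x}g_0(\cdot,\delta),
$$
so that the integral over $[-\lambda,\lambda]$, and hence the iterated limit, separates into two contributions. I would then apply Proposition~\ref{lem:dominant_scale} to each of them. The term carrying $g_0(\cdot,\delta)$ has dominant scale $G_+$ by hypothesis, yielding $\int_{-\infty}^{\infty} G_+^{-1}G_+'\,\dd u$. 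For the term carrying $g_0(\cdot,-\delta)$, I would introduce $\tilde g_0(\lambda_x,\delta):=g_0(\lambda_x,-\delta)$, a continuously differentiable function on $\mathbb R\times\mathbb R_+^*$ whose dominant scale is exactly $G_-$ by the corollary's hypothesis $\delta^{-r}g_0(u\delta^s,-\delta)=G_-(u)+R_-(u,\delta)$; Proposition~\ref{lem:dominant_scale} then gives $\int_{-\infty}^{\infty} G_-^{-1}G_-'\,\dd u$. Subtracting the two contributions produces
$$
w_\infty = \int_{-\infty}^{\infty} G_-^{-1}G_-'\,\dd u - \int_{-\infty}^{\infty} G_+^{-1}G_+'\,\dd u.
$$

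Finally, I would recombine the two integrals: since $\mathcal S = G_-/G_+$, one computes directly
$$
\mathcal S^{-1}\mathcal S' = \frac{G_-'}{G_-} - \frac{G_+'}{G_+},
$$
and integrating over $\mathbb R$ recovers the claimed formula $w_\infty = \int_{-\infty}^{\infty}\mathcal S^{-1}\mathcal S'\,\dd u$. The only point that requires a little care, and which I regard as the main (minor) obstacle, is the legitimacy of distributing the iterated limit across the difference of the two integrals: this is justified precisely because Proposition~\ref{lem:dominant_scale} guarantees that \emph{each} of the two limits exists separately and is finite, so the limit of the difference equals the difference of the limits. No further estimate is needed beyond verifying, as in the surrounding discussion, that $G_\pm$ and $R_\pm$ meet the hypotheses of Proposition~\ref{lem:dominant_scale} for the boundary condition at hand.
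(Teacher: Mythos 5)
Your proof is correct and is exactly the argument the paper intends (the corollary is stated without a separate proof, as an immediate consequence of Proposition~\ref{lem:dominant_scale}): split the logarithmic derivative of $S_0$ into the difference of the numerator and denominator contributions, apply the proposition once to $g_0(\cdot,\delta)$ and once to $\tilde g_0(\cdot,\delta)=g_0(\cdot,-\delta)$, and recombine via $\mathcal S^{-1}\mathcal S' = G_-^{-1}G_-' - G_+^{-1}G_+'$. The point you flag about distributing the iterated limit over the difference is indeed the only thing to check, and it holds for the reason you give: both iterated limits exist separately by the proposition.
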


A detailed application of this result with a specific boundary condition is given in Section~\ref{sec:casestudy} below. Before to move on with the full classification, we need to anticipate one possible issue: It may occasionally happen for specific boundary conditions below that $G_+(u_0) = 0 = G_-(u_0)$ for some $u_0 \in \mathbb R$. In that case Proposition~\ref{lem:dominant_scale} does not apply for two reasons: 
	\begin{enumerate}
		\item The winding phases of $G_+$ and $G_-$ are ill-defined  
	  \item $\tfrac{R_+}{G_+}$  and $\tfrac{R_-}{G_-}$  may be very large near $u_0$.
	  	\end{enumerate}
	  However, if $G_+$ and $G_-$ are such that the ratio $\mathcal S= \tfrac{G_-}{G_+}$ has a well defined the winding phase then point 1 is actually not an issue to compute $w_\infty$. Thus we only need to check that $R_-$ and $R_+$ do not contribute to the winding phase of $S$ in that case. The proposition below provides another criterion and generalizes Proposition~\ref{lem:dominant_scale}.
	
%	To resolve point 2, we replace $g_0$ by $g_0 + z_0$ for some $z_0 \in \mathbb C^*$ independent of $\lambda_x$ and $\delta$, with $|z_0|$ sufficiently small so that it does not change the winding phase of $g_0$ (typically $z_0 = \pm \eta$ or $z_0 = \pm \ii \eta$ with $\eta>0$). In that case we have
%	$$g_0(\lambda_x,\delta) + z_0 = G_+(u)+ z_0 +R_+(u,\delta), \qquad g_0(\lambda_x,-\delta) + z_0 = G_-(u) + z_0 +R_-(u,\delta)$$. Moreover $z_0$ can be chosen such that $G_+(u)+z_0 \in \mathbb C^*$ and $G_-(u)+z_0 \in \mathbb C^*$ for all $u \in \mathbb R$. The asymptotic properties as $u \to \pm \infty$  of are preserved so that Proposition~\ref{lem:dominant_scale} now applies to $(G_++z_0,\, R_+)$ and $(G_-  +z_0,\, R_-)$. Thus, the winding number of $S$ is the winding phase of the ratio $\tfrac{G_- + z_0}{G_+ +z_0}$, which is the same as the one of $\tfrac{G_-}{G_+}$ if $|z_0|$ is sufficiently small. 

\begin{prop}\label{lem:dominant_scale2}
	Let $g_0 : \mathbb R \times \mathbb R_+^* \to \mathbb C^*$ continuously differentiable. Let $s >0$, $u=\tfrac{\lambda_x}{\delta^s}$ and $r>0$. Decompose 
	$$
	\delta^{-r} g_0(u \delta^s,\delta) = G_+(u) + R_+(u,\delta), \qquad \delta^{-r} g_0(u \delta^s,-\delta) = G_-(u)+R_-(u,\delta)
	$$
	with $G_+,G_- : \mathbb R \to  \mathbb C$ and $R_+, R_-:  \mathbb R \times \mathbb R_+^* \to \mathbb C$ continuously differentiables and such that Assumptions 1, 2 and 3 of Proposition~\ref{lem:dominant_scale} hold. Assume moreover that $G_+(u_0) = 0 = G_-(u_0)$ for some $u_0 \in \mathbb R$, but such that the ratios 
	$$
	\mathcal S(u) = \dfrac{G_-(u)}{G_+(u)}, \qquad F(u,\delta)=\dfrac{1+\frac{R_-(u,\delta)}{G_-(u)}}{1+\frac{R_+(u,\delta)}{G_+(u)}}
	$$
	remain continuously differentiables near $u_0$ and that 
	$$
	\lim_{\delta \to 0} \lim_{u \to u_0} F(u,\delta) = 1.
	$$
%	\begin{enumerate}
%		\item For any $u \in \mathbb R$, $\displaystyle \lim_{\delta \to 0} R_\pm(u,\delta) =0$
%		\item For $\delta >0$, $$\displaystyle \dfrac{R_\pm(u,\delta)}{G_\pm(u)} \sim C_\pm u^{\beta_\pm} \delta^{\gamma_\pm}$$ as $u \to +\infty$, with $C_\pm \in \mathbb C^*$ independent from $\delta$, and $\beta_\pm, \gamma_\pm \in \mathbb R$ such that $\gamma_\pm-\alpha \beta_\pm \geq 0$.
%		\item Assumption 2 also holds as $u \to -\infty$ with possibly distinct constants $\widetilde C_\pm$, $\widetilde \beta_\pm$ and $\widetilde \gamma_\pm$.
%		\item The ratios 
%		$$
%		\dfrac{G_-(u)}{G_+(u)}, \qquad F(u,\delta)=\dfrac{1+\frac{R_-(u,\delta)}{G_-(u)}}{1+\frac{R_+(u,\delta)}{G_+(u)}}
%		$$
%		are continuous on $\mathbb R$ and $\mathbb R \times \mathbb R_+^*$.
%	\end{enumerate}
	Then,
	\begin{equation}
	\lim_{\lambda \to 0} 	\lim_{\delta \to 0} \int_{-\lambda}^\lambda S_0^{-1} \partial_{\lambda_x} S_0 \dd \lambda_x = \int_{-\infty}^{\infty} \mathcal S^{-1}(u) \mathcal S'(u)\dd u.
	\end{equation}
\end{prop}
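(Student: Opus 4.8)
The plan is to convert the factorization $S_0=\mathcal S\,F$ built into the hypotheses into a splitting of the winding integral, dispose of the $\mathcal S$-part directly, and reduce the whole statement to showing that $F$ contributes no winding. After the substitution $u=\lambda_x\delta^{-s}$ (so that $\partial_{\lambda_x}S_0\,\dd\lambda_x=\partial_u S_0\,\dd u$ and the range becomes $u\in[-\lambda\delta^{-s},\lambda\delta^{-s}]$), the common normalizing factor $\delta^{r}$ cancels in $S_0=g_0(\lambda_x,-\delta)/g_0(\lambda_x,\delta)$, leaving
\[
S_0(u,\delta)=\frac{G_-(u)+R_-(u,\delta)}{G_+(u)+R_+(u,\delta)}=\mathcal S(u)\,F(u,\delta).
\]
The additive property of the logarithmic derivative then splits $\int S_0^{-1}\partial_u S_0\,\dd u$ into $\int \mathcal S^{-1}\partial_u\mathcal S\,\dd u+\int F^{-1}\partial_u F\,\dd u$. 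The first summand is independent of $\delta$, and as $\delta\to 0$ its bounds run off to $\pm\infty$, producing exactly $\int_{-\infty}^{\infty}\mathcal S^{-1}\mathcal S'\,\dd u$; this is unaffected by the subsequent $\lambda\to 0$ limit. Thus everything reduces to proving that the $F$-contribution vanishes in the iterated limit.

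To control $F$ I would aim for the uniform estimate $\sup_u|F(u,\delta)-1|\to 0$ along the path as $\delta,\lambda\to 0$. This keeps $F$ inside a disc about $1$ that avoids the origin, so a single branch of $\log F$ is available, no spurious winding can occur, and $\int F^{-1}\partial_u F\,\dd u=\log F(\lambda\delta^{-s},\delta)-\log F(-\lambda\delta^{-s},\delta)\to 0$ because $F\to 1$ at the endpoints. I would split the path into three regions. On the asymptotic tails $|u|\in[U,\lambda\delta^{-s}]$, Assumptions~2 and~3 give $R_\pm/G_\pm\sim C u^{\gamma}\delta^{s\gamma}$, which is bounded by a constant times $\lambda^{\gamma}$ on this range, exactly as in the proof of Proposition~\ref{lem:dominant_scale}, so $F\to 1$ there. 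On the compact middle region $u\in[-U,U]$ with $|u-u_0|\geq\eta$, the curves $G_\pm$ are continuous and non-vanishing, hence bounded below, while $R_\pm\to 0$ uniformly by Assumption~1; so $R_\pm/G_\pm\to 0$ and $F\to 1$ uniformly.

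The remaining and genuinely new region is a neighbourhood $[u_0-\eta,u_0+\eta]$ of the common zero $u_0$. There the individual ratios $R_\pm/G_\pm$ need not be small, since both $G_\pm$ (and a priori $R_\pm$) vanish, so the argument of Proposition~\ref{lem:dominant_scale} breaks down and one must use the combined object $F$ directly. Here I would invoke the standing hypothesis that $F$ is continuously differentiable near $u_0$ with $\lim_{\delta\to 0}\lim_{u\to u_0}F(u,\delta)=1$: writing $F(u,\delta)=F(u_0,\delta)+\bigl(F(u,\delta)-F(u_0,\delta)\bigr)$, the first term tends to $1$ as $\delta\to 0$, and the second is bounded by $\eta\,\sup|\partial_u F|$ on the neighbourhood. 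Choosing $\eta$ small first and then $\delta$ small makes $|F-1|$ uniformly small on $[u_0-\eta,u_0+\eta]$.

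The hard part will be precisely this last step: upgrading the \emph{iterated} limit $\lim_{\delta\to 0}\lim_{u\to u_0}F=1$ into control that is uniform in $u$ jointly with small $\delta$, which needs $\partial_u F$ to stay bounded as $(u,\delta)\to(u_0,0)$. In every explicit boundary condition treated in Section~\ref{sec:anomaly_classification} the ratio $F$ extends to a jointly continuous (indeed $C^1$) function up to $\delta=0$ near $u_0$, which supplies exactly this uniformity. Once it is secured, the three regions combine to give $\sup_u|F-1|\to 0$, the winding of $F$ vanishes in the limit, and only the $\mathcal S$-integral survives, yielding the claimed identity.
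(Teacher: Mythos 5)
Your proposal follows essentially the same route as the paper's own proof: the substitution $u=\lambda_x\delta^{-s}$, the factorization $S_0=\mathcal S\,F$ with the logarithmic-derivative splitting, the tails $|u|>M$ controlled by Assumptions 2--3 exactly as in Proposition~\ref{lem:dominant_scale}, and the compact region handled by making $|F-1|$ uniformly small so that $F$ contributes no winding. The only difference is cosmetic: near $u_0$ you make the uniformity explicit via a mean-value bound requiring $\partial_u F$ bounded jointly in $(u,\delta)$ (which you honestly flag as the delicate point, verified in the explicit cases), whereas the paper simply asserts that the pointwise limit $F(u,\delta)\to 1$ "can be taken uniformly" on $[-M,M]$ --- both arguments lean on the same uniformity, not strictly derivable from the stated hypotheses alone.
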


\begin{proof}
	We proceed as in the proof of Proposition~\ref{lem:dominant_scale}: multiply numerator and denominator by $\delta^{-r}$, change the variable and split the integral:
	\begin{align}
		\int_{-\lambda}^\lambda S_0^{-1} \partial_{\lambda_x} S_0 \dd \lambda_x &= \int_{-\frac{\lambda}{\delta^s}}^{\frac{\lambda}{\delta^s}} S_0^{-1} \partial_{u} S_0 \dd u\cr
		& = \int_{-\frac{\lambda}{\delta^s}}^{\frac{\lambda}{\delta^s}}S^{-1}(u) S'(u) \dd u + \int_{-\frac{\lambda}{\delta^s}}^{\frac{\lambda}{\delta^s}} F^{-1}(u,\delta)  \partial_u F(u, \delta) \dd u.
	\end{align}
    The first term gives the expected result and we show that the second has no winding phase in the limit. We split as before between small and large $u$
    \begin{align}
    	&\int_{-\frac{\lambda}{\delta^s}}^{\frac{\lambda}{\delta^s}} F^{-1}(u,\delta)  \partial_u F(u, \delta) \dd u \cr = & 
    	\int_{M}^{\frac{\lambda}{\delta^s}} \Big(1+\frac{R_+}{G_+} \Big)^{-1} \partial_u \Big(1+\frac{R_+}{G_+}\Big) \dd u + \int_{M}^{\frac{\lambda}{\delta^s}} \Big(1+\frac{R_-}{G_-} \Big)^{-1} \partial_u \Big(1+\frac{R_-}{G_-}\Big) \dd u \cr
    	& + \int_{-\frac{\lambda}{\delta^s}}^{M} \Big(1+\frac{R_+}{G_+} \Big)^{-1} \partial_u \Big(1+\frac{R_+}{G_+}\Big) \dd u + \int_{-\frac{\lambda}{\delta^s}}^{M} \Big(1+\frac{R_-}{G_-} \Big)^{-1} \partial_u \Big(1+\frac{R_-}{G_-}\Big) \dd u \cr
    	& + \int_{-M}^M F^{-1}(u,\delta)  \partial_u F(u, \delta).
    \end{align}
	We treat the four first terms as before: the asymptotic properties are the same, so similarly to the proof of Proposition~\ref{lem:dominant_scale}, we show that for any $\varepsilon> 0$ there exist $\delta$ and $\lambda$ sufficiently small and some $M>|u_0|$ such that
	$$
	|u|>M \quad \Rightarrow \quad \left| \dfrac{R_\pm(u,\delta)}{G_\pm(u)}\right| \leq \varepsilon 
	$$
	so that in these regions the associated winding phase of $1+\tfrac{R_\pm}{G_\pm}$ vanish as $\delta \to 0$ and $\lambda \to 0$.
	
	Finally, for the fifth term, we use the fact that $F$ is continuous for all $u \in [-M,M]$ and that  $F(u,\delta) \to 1$ and $\delta \to 0$. Moreover, the limit can be taken uniformly on $u\in [-M,M]$ so that $|F(u,\delta)-1|\leq \epsilon$ for all $u\in [-M,M]$ and $\delta, \lambda$ sufficiently small. Again, $F$ has no winding phase in that region in the limit.
\end{proof}

\subsection{Case study \label{sec:casestudy}}

\begin{exmp}\label{exmp:case_study}
	We illustrate Lemma~\ref{lem:asymptotic_winding}, Proposition~\ref{lem:dominant_scale} and Corollary~\ref{cor:dominant_scale} above in the following simple case of the classification where the boundary condition is
	\begin{equation}\label{eq:Acasestudy}
		A_0 = \begin{pmatrix}
			1 & 0 & 0 & 0\\
			0 & \alpha & 0 & 1
		\end{pmatrix}, \qquad A_1= \begin{pmatrix}
			0 & 0 & 0 & 0\\
			b_{21} & \ii \beta & 0 & 0
		\end{pmatrix} := \begin{pmatrix}
			B & 0 
		\end{pmatrix},
	\end{equation}
	with $b_{12}\in \mathbb C$ and $\alpha, \beta \in \mathbb R$, which is part of class $\mathfrak{A}_{1,4}$. We further assume $\beta \neq 0$ here. Notice that for $b_{12}=\alpha=0$ and $\beta=-1$ we recover condition (b) from Examples~\ref{example} and \ref{example_continuated}. Inserting \eqref{eq:expanded_sections} and \eqref{eq:Acasestudy} into \eqref{eq:S_explicit} we get
	\begin{align}
		g(k_x,\kappa) = \frac{1}{64 k^4 \epsilon ^4}\Big(& \,-\alpha -\ii \kappa +\beta  k_x +64 k^4 \epsilon ^4 \left(\alpha +\ii \kappa -\beta  k_x\right)\cr &+16 \epsilon ^2 \left(\kappa +\ii k_x\right) \left(\kappa _{\text{ev}}-\ii k_x\right) \left(\alpha +\ii \kappa _{\text{ev}}-\beta  k_x\right)\Big) + o(1)
	\end{align}
	Many terms vanish in the limit $k\to \infty$ so this further simplifies to $g(k_x,\kappa) = g_\infty(k_x,\kappa) +o(1)$ with
	$$
	g_\infty(k_x,\kappa) = \alpha + \ii \kappa - \ii k_x \beta.
	$$
	Passing to dual variables we get
	$$
	g_0(\lambda_x,\delta) = g_\infty\Big(-\dfrac{\lambda_x}{\lambda_x^2+\delta^2},\dfrac{\delta}{\lambda_x^2+\delta^2}\Big) =\frac{ \alpha  \lambda_x^2+\beta \lambda_x +\alpha  \delta ^2+\ii \delta }{\delta ^2+\lambda_x^2},
	$$
	and one can check that Lemma~\ref{lem:asymptotic_winding} applies. Then as discussed above, the denominator does not contribute to the winding phase so we replace $g_0$ by $(\delta ^2+\lambda_x^2)g_0$ (and keep the same symbol for $g_0$), leading to
	\begin{equation}
		g_0(\lambda_x,\delta) = \alpha  \lambda_x^2+\beta \lambda_x +\alpha  \delta ^2+\ii \delta
	\end{equation}
	
	Now, the dominant scale is given by $s=r=1$, namely
	$$
	\delta^{-1}g(u\delta,\pm\delta) = G_\pm(u) + R_\pm(u,\delta) 
	$$
	with
	\begin{equation}
		G_+(u) = \ii +\beta u, \qquad G_-(u)=-\ii + \beta u.\qquad R_+(u) = R_-(u) = \alpha \delta (u^2+1)  
	\end{equation}
	One can check that Proposition~\ref{lem:dominant_scale} applies. In particular, since we assume $\beta \neq 0$
	$$
	\dfrac{R_\pm}{G_\pm} \mathop{\sim}_{u \to +\infty} \dfrac{\alpha}{\beta} u \delta,
	$$
	namely $\gamma =1$, and similarly as $u \to -\infty$. Thus we are left with the winding number of 
	$$
	\mathcal S = \dfrac{G_-(u)}{G_+(u)} = \dfrac{-\ii + \beta u}{\ii + \beta u}
	$$
	which can be computed as twice the winding phase of $G_-(u)$ since $G_+(u)= G^*_-(u)$. The real part of $G_-$ is constant negative so that
	the winding phase is simply computed by the limits of $\arg(G_-(u))$ as $u \to \pm \infty$. For $\beta >0$, the winding phase of $G_-$ is $\pi$ and for $\beta<0$ it is $-\pi$. Thus in that example we infer $w_\infty=\sign(\beta)$, regardless of the values of $\alpha$ and $b_{21}$. The relevant quantities of this example are illustrated in Figure~\ref{fig:exmp} below.
	
	\begin{figure}[htb]
		\includegraphics[scale=0.7]{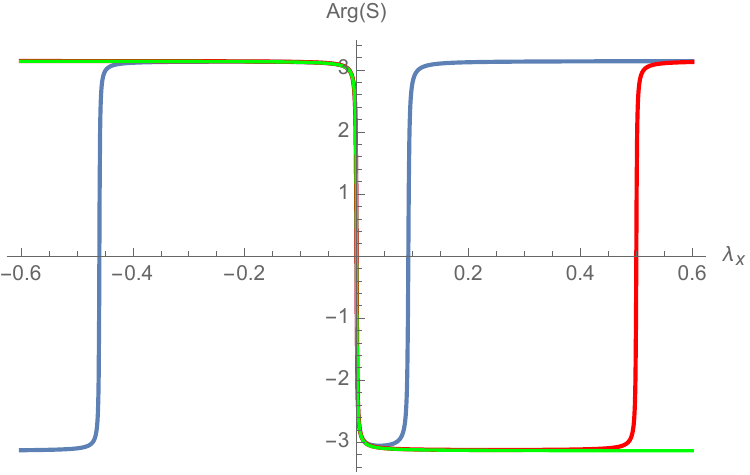}
		\hspace{0.5cm}
		\raisebox{0.5cm}{\includegraphics{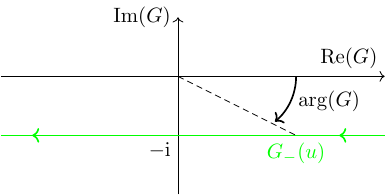}}
		\caption{Argument of the scattering amplitude near $\lambda_x=0$ for $\alpha=2$, $\beta=-1$ and $b_{12}=0$. Moreover, $m=1, \epsilon =0.1$ and $\delta=0.001$. (Left) Argument of $S$ with respect to $\lambda_x$. The blue curve corresponds to the exact $S$, which winds at $\lambda_x=0$ but also at near $-0.5$ and $0.1$, detecting edge modes at finite $k_x \approx 2$ and $-10$ (compare with Figure~\ref{fig:edgemodes}, middle). The red curve corresponds to $S_0$, computed with $g_0$, which approximates $g$ well near $0$ but also keeps track (poorly) of unwanted finite edge modes. The green curve corresponds to $\mathcal S$ computed with $G_\pm$ (actually $\delta G_\pm(\lambda_x/\delta)$ here, to compare with other curves), which exactly captures only the winding of $S$ near infinity. (Right) Complex curve $G_-(u) = -\ii + \beta u$ for $\beta<0$. Its argument changes by $-\pi$ as $u$ goes from $-\infty$ to $+\infty$. \label{fig:exmp}}
	\end{figure}
\end{exmp}

\section{Anomaly classification \label{sec:anomaly_classification}}

\paragraph{General strategy.} We apply the statements from Section~\ref{sec:detecting_anomalies} to each class of self-adjoint  boundary conditions from Table~\ref{tab:self-adjoint_classes}, going from $g$ to its leading contribution $G$, and inferring $w_\infty$. More precisely:
\begin{enumerate}
	\item Pick a matrix $A=A_0+\ii k_x A_1$ from Table~\ref{tab:self-adjoint_classes} and insert it in expression \eqref{eq:S_explicit} for $g$, together with the asymptotic expansion of the bulk eigensections \eqref{eq:expanded_sections}. This leads to $g(k_x,\kappa) = g_\infty(k_x,k)+o(1)$. We shall not detail this step below and give only the final expression. 
	\item Pass from $(k_x,\kappa)$ to dual variables via \eqref{eq:defg0}, compute $g_0(\lambda_x,\delta)$ and simplify it.
	\item Find the dominant scale $\delta^{-r} g_0(u \delta^s, \pm \delta) = G_\pm(u)  + R_\pm(u)$ and check that $G_\pm$ and $R_\pm$ satisfy either the assumptions of Proposition~\ref{lem:dominant_scale} or that of Proposition~\ref{lem:dominant_scale2}. We shall not always give the explicit expressions of $R_\pm$ but they can be inferred from $g_0$ and $G_\pm$. 
	\item Study the complex curves $G_\pm$ and compute their winding phases, from which we deduce $w_\infty$.
	\item For specific values of the parameters in $A$, Assumption 2 of Proposition~\ref{lem:dominant_scale} may fail to hold so that another scale has to be found instead. This leads to various subcases for which we repeat steps 4 and 5.
\end{enumerate}

\subsection{Class $\mathfrak{A}_{1,2}$} In that case we have
$$
A_0 = \begin{pmatrix}
	1 & 0 & 0 & 0\\
	0 & 1 & 0 & 0
\end{pmatrix}, \qquad A_1= \begin{pmatrix}
	b_{11} & b_{12} & 0 & 0\\
	b_{21} & b_{22} & 0 & 0
\end{pmatrix} := \begin{pmatrix}
B & 0 
\end{pmatrix},
$$
where $B \in \mathrm{M}_{2}(\mathbb C)$ is arbitrary. The expansion of $g$ near $\infty$ leads to
$
g(k_x,\kappa) = g_\infty(k_x,k)+o(1)
$
where $o(1)$ in the limit $k_x^2+ \kappa^2 \to \infty$, independent from the direction, and with 
\begin{equation}
g_\infty(k_x,\kappa) = -\det(B) k_x^2 + \ii \tr(B) k_x  + \det(B) \dfrac{k_x^2\big(k_x^2+\kappa \kappa_{\mathrm{ev}} + \ii k_x ( \kappa_{\mathrm{ev}} - \kappa )\big)}{4\epsilon^2 k^4}
+1.
\end{equation}
We then move to the dual variables $\lambda_x,\delta$ via \eqref{eq:defg0}, leading to
%\begin{equation}
%	g_0(\lambda_x,\delta) = \dfrac{(\lambda_x^2+\delta^2)^2}{\delta^4} g_\infty\Big(-\dfrac{\lambda_x}{\lambda_x^2+\delta^2},\dfrac{\delta}{\lambda_x^2+\delta^2}\Big)
%\end{equation}
%which leads to 
\begin{align}\label{eq:g0_Dirichlet}
	g_0(\lambda_x,\delta)= & \, \frac{1}{4 \epsilon ^2 \left(\delta ^2+\lambda _x^2\right){}^2} \Big(4 \delta ^4 \epsilon ^2+\ii \delta  \det(B) \lambda _x^2 \sqrt{\delta ^2+2 \lambda _x^2}+\det(B) \lambda _x^3 \sqrt{\delta ^2+2 \lambda _x^2}\cr &+\ii \delta  \det(B) \lambda _x^3+\det(B) \lambda _x^4-4 \det(B) \epsilon ^2 \lambda _x^2-4 \ii \delta ^2 \tr(B) \epsilon ^2 \lambda _x\cr &-4 \ii \tr(B) \epsilon ^2 \lambda _x^3+8 \delta ^2 \epsilon ^2 \lambda _x^2+4 \epsilon ^2 \lambda _x^4\Big)
\end{align}

%\begin{align}
%g_0(\lambda_x,\delta)= &-\frac{\det(B) \lambda _x^2}{\delta ^4}+\frac{\det(B) \lambda _x^4}{4 \delta ^4 \epsilon ^2}+\frac{\det(B) \lambda _x^3 \sqrt{\delta ^2+2 \lambda _x^2}}{4 \delta ^4 \epsilon ^2}+\frac{\ii \det(B) \lambda _x^3}{4 \delta ^3 \epsilon ^2}\cr&+\frac{\ii \det(B) \lambda _x^2 \sqrt{\delta ^2+2 \lambda _x^2}}{4 \delta ^3 \epsilon ^2}-\frac{\ii \tr(B) \lambda _x^3}{\delta ^4}-\frac{\ii \tr(B) \lambda _x}{\delta ^2}+\frac{\lambda _x^4}{\delta ^4}+\frac{2 \lambda _x^2}{\delta ^2}+1
%\end{align}

\paragraph{Case $\det B=0$.} In that case one has $g(\lambda_x, \delta) = g_0(\lambda_x, -\delta)$ so that $S_0=1$ and $w_\infty=0$. 

\paragraph{Case $\det B\neq 0$.} The denominator does not contribute to $w_\infty$ so we replace $g_0$ by $(\delta ^2+\lambda _x^2)^2g_0$, and we 
%$$
%g(\lambda_x, \delta) = -\frac{i \tr(B)\lambda _x^3}{\delta ^4}-\frac{i\tr(B) \lambda _x}{\delta ^2}+\frac{\lambda _x^4}{\delta ^4}+\frac{2 \lambda _x^2}{\delta ^2}+1 = g(\lambda_x, -\delta).
%$$
%There is no need to extract the dominant scale here since $S_0=1$ and 
we extract the dominant scale by considering
%$$u=\dfrac{\lambda}{\delta^{2}}$$ 
%and we get 
$$
	\delta^{-4} g_0(u \delta^2,\delta) = G_+(u)+R_+(u,\delta)
$$
with
\begin{align}
	G_+(u) = & \, 1- \ii \tr(B) u - \det(B) u^2,\\
	R_+(u,\delta) = &\, \frac{\delta ^4 \det(B) u^4}{4 \epsilon ^2}+\frac{\ii \delta ^3 \det(B) u^3}{4 \epsilon ^2}+\frac{\delta ^2 \det(B) u^3 \sqrt{\delta ^2+2 \delta ^4 u^2}}{4 \epsilon ^2}\cr &+\frac{\ii \delta  \det(B) u^2 \sqrt{\delta ^2+2 \delta ^4 u^2}}{4 \epsilon ^2}-\ii \delta ^2 \tr(B) u^3+\delta ^4 u^4+2 \delta ^2 u^2.
\end{align}
One has $R_+(u,0)=0$ and 
$$
\dfrac{R(u,\delta)}{G(u)} \sim -\dfrac{(1\pm\sqrt{2} )\det(B)+4 \epsilon ^2}{4 \det(B) \epsilon ^2}u^2 \delta ^4 
$$
as $u \to \pm \infty$, so that Proposition~\ref{lem:dominant_scale} applies\footnote{One can check that Proposition~\ref{lem:dominant_scale} also applies in the fine-tuned case where $\det(B)=-\tfrac{1\pm \sqrt{2}}{4\epsilon^2}$, then a lower order computation leads to $\dfrac{R(u,\delta)}{G(u)} \sim -\frac{\ii (\sqrt{2}+1) \delta ^2 u (\delta +\tr (B))}{4 \epsilon ^2}$} when $G_+$ has no zero.

Similarly, replacing $\delta$ by $-\delta$ in \eqref{eq:g0_Dirichlet} one gets $\delta^{-4}g_0(u \delta^2,-\delta) = G_-(u)+R_-(u,\delta)$ with
\begin{align}
	G_-(u) = &\, G_+(u),\\
	R_-(u,\delta) = &\, \frac{\delta ^4 \det(B) u^4}{4 \epsilon ^2}-\frac{\ii \delta ^3 \det(B) u^3}{4 \epsilon ^2}+\frac{\delta ^2 \det(B) u^3 \sqrt{\delta ^2+2 \delta ^4 u^2}}{4 \epsilon ^2}\cr &-\frac{\ii \delta  \det(B) u^2 \sqrt{\delta ^2+2 \delta ^4 u^2}}{4 \epsilon ^2}-\ii \delta ^2 \tr(B) u^3+\delta ^4 u^4+2 \delta ^2 u^2.
\end{align}
and one can check that Proposition~\ref{lem:dominant_scale} also applies.
In particular
$$
\mathcal S(u) = \dfrac{G_-(u)}{G_+(u)}=1
$$
so that $w_\infty=0$.

\begin{rem}[About vanishing cases] $G_+$ and $G_-$ may occasionally vanish at some $u_0 \in \mathbb R$ for exceptional values of $B$, but $w_\infty$ remains the same. Let $\det(B)=d_r + \ii d_i$ and $\tr(B)=t_r+\ii t_i$ with $d_r,d_i,t_r$ and $t_i \in \mathbb R$. 
One has 
$$\Re(G_+(u)) = 1+t_i u -d_r u^2, \qquad  \Im(G_+(u)) = -u(t_r+u d_i).$$
The function $G_+$ vanishes in the following three cases :
\begin{enumerate}
	\item If $d_r=0, t_i \neq 0, d_i \neq 0$ and $d_i=-t_rt_i$ at $u_0=-\tfrac{1}{t_i}$
	\item If $d_r\neq 0$, $\Delta = t_i^2+4d_r >0$ and
	\begin{enumerate}
		\item  $t_r=d_i=0$ at $u_{0\pm }= \tfrac{t_i \pm \sqrt{\Delta}}{2d_r}$,
		\item  $d_i\neq 0$  with $2t_rd_r = - d_i(t_i + \sqrt \Delta)$ or $2t_rd_r = - d_i(t_i - \sqrt \Delta)$, at $u_0 = -\tfrac{t_r}{d_i}$.
	\end{enumerate}
\end{enumerate}

In all cases above one could check that Proposition~\ref{lem:dominant_scale2} applies. However, one still has $G_+=G_-$, so we can instead replace $g_0$ by $g_0 + z_0$ for some $z_0 \in \mathbb C^*$ with $|z_0|$ sufficiently small so that it does not change the winding phase of $g_0$. The computation above leads to 
$$g_0(\lambda_x,\delta) + z_0 = G_+(u)+ z_0 +R_+(u,\delta), \qquad g_0(\lambda_x,-\delta) + z_0 = G_-(u) + z_0 +R_-(u,\delta)$$ with again $G_-(u)=G_+(u)$. Moreover, $z_0$ can be chosen such that $G_+(u)+z_0 \in \mathbb C^*$ for all $u \in \mathbb R$ and, consequently, $G_-(u)+z_0 \in \mathbb C^*$ . The asymptotic properties are preserved as $u \to \pm \infty$ so that Proposition~\ref{lem:dominant_scale} now applies to $(G_++z_0,\, R_+)$ and $(G_-  +z_0,\, R_-)$. Thus $\mathcal S= \tfrac{G_- + z_0}{G_+ +z_0}=1$ and we get  $w_\infty=0$ also in that case.
\end{rem}

 \paragraph{Summarizing.} Thus $S$ does not wind and $w_\infty =0$ for any boundary condition in this class. 
 
\subsection{Class $\mathfrak{A}_{1,4}$} In this case we have

$$
A_0 = \begin{pmatrix}
	1 & 0 & 0 & 0\\
	0 & \alpha & 0 & 1
\end{pmatrix}, \qquad A_1= \begin{pmatrix}
	b_{11} & 0 & 0 & 0\\
	b_{21} & \ii \beta & 0 & 0
\end{pmatrix} := \begin{pmatrix}
	B & 0 
\end{pmatrix},
$$
with $b_{11}, b_{21} \in \mathbb C$ and $\alpha, \beta \in \mathbb R$. We get $g(k_x,\kappa) = g_\infty(k_x,\kappa)+o(1)$ with
\begin{align}\label{eq:ginfA14}
g_\infty(k_x,\kappa) = & -\det(B)k_x^2-b_{11} k_x \kappa +\ii \kappa +\ii k_x \left(\alpha b_{11}+\ii \beta\right)+\alpha  \cr &
 + \det(B) \dfrac{k_x^2\big(k_x^2+\kappa \kappa_{\mathrm{ev}}+ \ii k_x ( \kappa_{\mathrm{ev}} - \kappa )\big)}{4\epsilon^2 k^4}  +b_{11} \frac{ k_x \kappa_\mathrm{ev}\big(k_x^2+\kappa \kappa_{\mathrm{ev}} + \ii k_x ( \kappa_{\mathrm{ev}} - \kappa )\big)}{4 k^4 \epsilon ^2}\,. 
\end{align}
Passing to dual variables via \eqref{eq:defg0} we get
%\begin{equation}
%	g_0(\lambda_x,\delta) = \dfrac{(\lambda_x^2+\delta^2)^2}{\delta^2} g_\infty\Big(-\dfrac{\lambda_x}{\lambda_x^2+\delta^2},\dfrac{\delta}{\lambda_x^2+\delta^2}\Big)
%\end{equation}
%which leads to 
\begin{align}\label{eq:g0A14}
		g_0(\lambda_x,\delta)= &\, \dfrac{1}{4\epsilon^2(\lambda_x^2+\delta^2)^2}\Big(4 \alpha  \delta ^4 \epsilon ^2-4 \ii \alpha  b_{11} \delta ^2 \epsilon ^2 \lambda _x-4 \ii \alpha  b_{11} \epsilon ^2 \lambda _x^3+b_{11} \delta ^3 \lambda _x-\ii b_{11} \delta ^2 \lambda _x^2\cr &+b_{11} \delta  \lambda _x^2 \sqrt{\delta ^2+2 \lambda _x^2}-\ii b_{11} \lambda _x^3 \sqrt{\delta ^2+2 \lambda _x^2}+2 b_{11} \delta  \lambda _x^3+4 b_{11} \delta  \epsilon ^2 \lambda _x-2 \ii b_{11} \lambda _x^4\cr &+4 \ii \delta ^3 \epsilon ^2+\ii \delta  \det(B) \lambda _x^2 \sqrt{\delta ^2+2 \lambda _x^2}+\det(B) \lambda _x^3 \sqrt{\delta ^2+2 \lambda _x^2}+\ii \delta  \det(B) \lambda _x^3+\det(B) \lambda _x^4\cr &-4 \det(B) \epsilon ^2 \lambda _x^2+8 \alpha  \delta ^2 \epsilon ^2 \lambda _x^2+4 \alpha  \epsilon ^2 \lambda _x^4+4 \beta  \delta ^2 \epsilon ^2 \lambda _x+4 \beta  \epsilon ^2 \lambda _x^3+4 \ii \delta  \epsilon ^2 \lambda _x^2\Big)\,.
\end{align}
%\begin{align}
%	g_0(\lambda_x,\delta)= & \alpha  \delta ^2-\frac{\ii \alpha  b_{11} \lambda _x^3}{\delta ^2}-\ii \alpha  b_{11} \lambda _x-\frac{\ii b_{11} \lambda _x^4}{2 \delta ^2 \epsilon ^2}-\frac{\ii b_{11} \lambda _x^3 \sqrt{\delta ^2+2 \lambda _x^2}}{4 \delta ^2 \epsilon ^2}+\frac{b_{11} \lambda _x^2 \sqrt{\delta ^2+2 \lambda _x^2}}{4 \delta  \epsilon ^2}\cr
%	 & +\frac{b_{11} \lambda _x}{\delta }+\frac{b_{11} \lambda _x^3}{2 \delta  \epsilon ^2}+\frac{b_{11} \delta  \lambda _x}{4 \epsilon ^2}-\frac{\ii b_{11} \lambda _x^2}{4 \epsilon ^2}+\ii \delta -\frac{\det(B) \lambda _x^2}{\delta ^2}+\frac{\det(B) \lambda _x^4}{4 \delta ^2 \epsilon ^2}\cr
%	 &+\frac{\det(B) \lambda _x^3 \sqrt{\delta ^2+2 \lambda _x^2}}{4 \delta ^2 \epsilon ^2}+\frac{\ii \det(B) \lambda _x^2 \sqrt{\delta ^2+2 \lambda _x^2}}{4 \delta  \epsilon ^2}+\frac{\ii \det(B) \lambda _x^3}{4 \delta  \epsilon ^2}+\frac{\alpha  \lambda _x^4}{\delta ^2}\cr 
%	 &+2 \alpha  \lambda _x^2+\frac{\beta  \lambda _x^3}{\delta ^2}+\beta  \lambda _x+\frac{\ii \lambda _x^2}{\delta }
%\end{align}
The denominator does not contribute to $w_\infty$ so we replace $g_0$ by $(\delta ^2+\lambda _x^2)^2g_0$. Then the dominant scale of $g_0$ depends on some parameter vanishing or not. 

\paragraph{Case $\det(B)\neq0$.} Since  $\det(B)=\ii \beta b_{11}$ one has $b_{11}\neq0$ and $\beta \neq 0$. The dominant scale is 
$$ 
\delta^{-2} g_0(u \delta,\pm \delta) = G_\pm(u) + R_\pm(u)
$$
with 
\begin{equation}
	G_+(u) = u b_{11}(1-\ii \beta u), \qquad G_-(u) = -ub_{11}(1+\ii \beta u)\,.
\end{equation}
$G_+$ and $G_-$ vanish both at $u=0$ but one can check that Proposition~\ref{lem:dominant_scale2} applies, in particular that the ratio $F(u,\delta) \to 1$ as $u \to 0$ and $\delta \to 0$. Thus we are left with the winding number of 
\begin{equation}
	\mathcal S (u)=\dfrac{G_-(u)}{G_+(u)} = -\dfrac{1+\ii \beta u}{1-\ii \beta u} = \dfrac{\ii - \beta u}{\ii + \beta u}
\end{equation}
Numerator and denominator are conjugated so $w_\infty$ is twice the winding phase of the numerator, which has constant positive imaginary part. Its winding phase is the difference of limit of argument between $+\infty$ and $-\infty$. If $\beta>0$ this winding phase is $\pi$. If $\beta <0$ it is $-\pi$. Thus we infer $w_\infty=\sign(\beta)$.

\paragraph{Case $b_{11}=0$ and $\beta \neq 0$.} This case has been completely treated already in Example~\ref{exmp:case_study}, and also leads to $w_\infty=\sign(\beta)$.
%If $b_{11}=0$ then $\det(B)=0$, so that
%$$
%g_\infty(k_x,\kappa) = \alpha + \ii \kappa - \ii k_x \beta
%$$
%Passing to dual variables (the $\delta$ factor in the denominator only changes $S_0$ by a global sign and has no effect on the winding number):
%\begin{equation}
%	g_0(\lambda_x,\delta) = \dfrac{\lambda_x^2+\delta^2}{\delta} g_\infty\Big(-\dfrac{\lambda_x}{\lambda_x^2+\delta^2},\dfrac{\delta}{\lambda_x^2+\delta^2}\Big)
%\end{equation}
%with 
%\begin{equation}
%g_0(\lambda_x,\delta) = \ii + \frac{\beta  \lambda _x}{\delta }+	\alpha  \delta +\frac{\alpha  \lambda _x^2}{\delta }
%\end{equation}
%The dominant scale is also $u=\tfrac{\lambda_x}{\delta}$ and we get
%\begin{equation}
%G_+(u) = \ii +\beta u, \qquad G_-(u)=\ii - \beta u.
%\end{equation}
%Thus we are left with the winding number of 
%$$
% \dfrac{G_-(u)}{G_+(u)} = \dfrac{1- \ii \beta u}{1+\ii \beta u}
%$$
%from which we infer again $w_\infty=\sign(\beta)$.

\paragraph{Case $b_{11}\neq 0$ and $\beta = 0$.} In that case $\det(B)=0$ and we go back to expression \eqref{eq:g0A14} for $g_0$ and replace it again by $(\delta ^2+\lambda _x^2)^2g_0$. The dominant scale is 
$$
\delta^{-3/2} g_0(u \delta^{1/2},\pm \delta) = G_\pm(u) + R_\pm(u)
$$
%
%
% but in order to extract the dominant scale we replace $g_0(\lambda_x,\delta)$ by $\sqrt{\delta} g_0(\lambda_x,\delta)$ and $g_0(\lambda_x,-\delta)$ by $\sqrt{\delta} g_0(\lambda_x,-\delta)$, which does not change the ratio of the two. The dominant scale is then
%$$
%u=\dfrac{\lambda_x}{\sqrt \delta}
%$$
%and we get
with
\begin{equation}
	G_+(u) = b_{11} u \left(1-\ii \alpha  u^2\right), \qquad G_-(u) = b_{11} u \left(-1-\ii \alpha  u^2\right) = - G^*_+(u)
\end{equation}
$G_+$ and $G_-$ vanish both at $u=0$ but one can check that Proposition~\ref{lem:dominant_scale2} applies, in particular that the ratio $F(u,\delta) \to 1$ as $u \to 0$ and $\delta \to 0$. Thus we are left with the winding number of
$$
\mathcal S(u) = \dfrac{G_-(u)}{G_+(u)} = \dfrac{1-\ii \alpha  u^2}{1+\ii \alpha  u^2}.
$$
Notice that $\Re(1-\ii \alpha  u^2)=1 >0$ so that its argument goes continuously from $\pi/2$ to $\pi/2$ as $u$ goes from $-\infty$ to $\infty$. Consequently, its winding phase vanishes, and similarly for $1+\ii \alpha  u^2$. Thus $w_\infty=0$.

\paragraph{Case $b_{11}=0$ and $\beta = 0$}
In that case we go back to \eqref{eq:ginfA14} which simplifies to $g_\infty(k_x,\kappa) = \alpha + \ii \kappa$ leading to 
$$
g_0(\lambda_x,\delta) = \frac{ \alpha  \lambda_x^2+\alpha  \delta ^2+\ii \delta }{\delta ^2+\lambda_x^2},
$$
Replacing $g_0$ by $g_0(\delta ^2+\lambda_x^2)$, the dominant scale is 
$$
\delta^{-1} g(u \delta^{1/2},\pm \delta) = G_\pm(u) + R_\pm(u,\delta)
$$
with
$
G_\pm(u) = \ii \pm u^2 \alpha
$
and $R_\pm(u,\delta) = \pm \alpha \delta$. Notice that Proposition~\ref{lem:dominant_scale} does not apply here but the situation is simpler since $R_\pm$ do not depend on $u$ and $G_\pm$ do not vanish hence one has $\left|\tfrac{R_\pm}{G_\pm}\right|\to 0$ as $\delta\to 0$, for any $\lambda>0$. Thus we are left with the winding number of 
$$
\mathcal S(u)=\dfrac{G_-(u)}{G_+(u)} = \dfrac{\ii- \alpha  u^2}{\ii+ \alpha  u^2}.
$$
which is 0 as discussed in the case above. Consequently $w_\infty=0$.

\paragraph{Summarizing.} In class $\mathfrak{A}_{1,4}$ one has $w_\infty=\sign(\beta)$ if $\beta \neq 0$ and $w_\infty=0$ if $\beta=0$.

\subsection{Class $\mathfrak{A}_{2,3}$} In this case we have

$$
A_0 = \begin{pmatrix}
	0 & 1 & 0 & 0\\
	\alpha & 0 & 0 & 1
\end{pmatrix}, \qquad A_1= \begin{pmatrix}
	0 & b_{12} & 0 & 0\\
	\ii \beta & b_{22} & 0 & 0
\end{pmatrix} := \begin{pmatrix}
	B & 0 
\end{pmatrix},
$$
with $b_{12}, b_{22} \in \mathbb C$ and $\alpha, \beta \in \mathbb R$. We get $g(k_x,\kappa) = g_\infty(k_x,\kappa)+o(1)$ with
\begin{align}
	g_\infty(k_x,\kappa) =& -\alpha+k_x \left(\beta -\ii \alpha  b_{12}\right)- \det(B) k_x^2 \cr & +\ii \det(B)\frac{k_x^2 \left(k_x-\ii \kappa \right) \left(\kappa _{\text{ev}}-\ii k_x\right)}{4 k^4 \epsilon ^2}-\frac{\ii \left(\kappa _{\text{ev}}-\kappa \right) \left(b_{12} k_x-\ii\right) \left(\kappa _{\text{ev}}-\ii k_x\right)}{2 k^2 \epsilon }
\end{align}
Passing to dual variables via \eqref{eq:defg0} we get
\begin{align}
	g_0(\lambda_x,\delta)= &\, \dfrac{1}{4\epsilon^2(\lambda_x^2+\delta^2)^2} \Big( -4 \alpha  \delta ^4 \epsilon ^2+4 \ii \alpha  b_{12} \delta ^2 \epsilon ^2 \lambda _x+4 \ii \alpha  b_{12} \epsilon ^2 \lambda _x^3-2 \ii b_{12} \delta ^2 \epsilon  \lambda _x\cr &+ 2 b_{12} \delta  \epsilon  \lambda _x \sqrt{\delta ^2+2 \lambda _x^2}-2 \ii b_{12} \epsilon  \lambda _x^2 \sqrt{\delta ^2+2 \lambda _x^2}+2 b_{12} \delta  \epsilon  \lambda _x^2-4 \ii b_{12} \epsilon  \lambda _x^3+2 \delta ^4 \epsilon \cr &+\ii \delta  \det(B) \lambda _x^2 \sqrt{\delta ^2+2 \lambda _x^2}+\det(B) \lambda _x^3 \sqrt{\delta ^2+2 \lambda _x^2}+\ii \delta  \det(B) \lambda _x^3+\det(B) \lambda _x^4-8 \alpha  \delta ^2 \epsilon ^2 \lambda _x^2\cr &-4 \det(B) \epsilon ^2 \lambda _x^2-4 \alpha  \epsilon ^2 \lambda _x^4-4 \beta  \delta ^2 \epsilon ^2 \lambda _x-4 \beta  \epsilon ^2 \lambda _x^3+2 \ii \delta ^3 \epsilon  \sqrt{\delta ^2+2 \lambda _x^2}+2 \ii \delta ^3 \epsilon  \lambda _x+6 \delta ^2 \epsilon  \lambda _x^2\cr &+2 \delta ^2 \epsilon  \lambda _x \sqrt{\delta ^2+2 \lambda _x^2}+2 \ii \delta  \epsilon  \lambda _x^2 \sqrt{\delta ^2+2 \lambda _x^2}+2 \epsilon  \lambda _x^3 \sqrt{\delta ^2+2 \lambda _x^2}+2 \ii \delta  \epsilon  \lambda _x^3+4 \epsilon  \lambda _x^4\Big)
\end{align}
%\begin{equation}
%	g_0(\lambda_x,\delta) = \dfrac{2(\lambda_x^2+\delta^2)^2}{\delta^4} g_\infty\Big(-\dfrac{\lambda_x}{\lambda_x^2+\delta^2},\dfrac{\delta}{\lambda_x^2+\delta^2}\Big)
%\end{equation}
%which leads to 
%\begin{align}
%	g_0(\lambda_x,\delta) =&1-2 \alpha  \epsilon +\frac{2 \ii \alpha  b_{12} \epsilon  \lambda _x^3}{\delta ^4}+\frac{2 \ii \alpha  b_{12} \epsilon  \lambda _x}{\delta ^2}-\frac{2 \ii b_{12} \lambda _x^3}{\delta ^4}+\frac{b_{12} \lambda _x^2}{\delta ^3}-\frac{\ii b_{12} \lambda _x}{\delta ^2}-\frac{\ii b_{12} \lambda _x^2 \sqrt{\delta ^2+2 \lambda _x^2}}{\delta ^4}\cr&+\frac{b_{12} \lambda _x \sqrt{\delta ^2+2 \lambda _x^2}}{\delta ^3}+\frac{\det(B) \lambda _x^4}{2 \delta ^4 \epsilon }-\frac{2 \det(B) \epsilon  \lambda _x^2}{\delta ^4}+\frac{\ii \det(B) \lambda _x^3}{2 \delta ^3 \epsilon }+\frac{\det(B) \lambda _x^3 \sqrt{\delta ^2+2 \lambda _x^2}}{2 \delta ^4 \epsilon }+\cr &\frac{\ii \det(B) \lambda _x^2 \sqrt{\delta ^2+2 \lambda _x^2}}{2 \delta ^3 \epsilon }-\frac{2 \alpha  \epsilon  \lambda _x^4}{\delta ^4}-\frac{4 \alpha  \epsilon  \lambda _x^2}{\delta ^2}-\frac{2 \beta  \epsilon  \lambda _x^3}{\delta ^4}-\frac{2 \beta  \epsilon  \lambda _x}{\delta ^2}+\frac{2 \lambda _x^4}{\delta ^4}+\frac{\ii \lambda _x^3}{\delta ^3}+\frac{3 \lambda _x^2}{\delta ^2}\cr &+\frac{\lambda _x \sqrt{\delta ^2+2 \lambda _x^2}}{\delta ^2}+\frac{\ii \sqrt{\delta ^2+2 \lambda _x^2}}{\delta }+\frac{\lambda _x^3 \sqrt{\delta ^2+2 \lambda _x^2}}{\delta ^4}+\frac{\ii \lambda _x^2 \sqrt{\delta ^2+2 \lambda _x^2}}{\delta ^3}+\frac{\ii \lambda _x}{\delta }
%\end{align}
The denominator does not contribute to $w_\infty$ so we replace $g_0$ by $2(\delta ^2+\lambda _x^2)^2g_0$. Then the dominant scale of $g_0$ depends on some parameter vanishing or not. 
\paragraph{Case $\det(B) \neq 0$.} Since $\det(B)=-\ii \beta b_{12}$ then $\beta\neq0$ and $b_{12}\neq 0$. 
 The dominant scale is 
$$ 
\delta^{-4} g_0(u \delta^2,\pm \delta) = G_\pm(u) + R_\pm(u)
$$
with 
\begin{align}
&	G_+(u) = (1+\ii)-2 \alpha  \epsilon + \left(2 \ii \alpha  b_{12} \epsilon -2 \beta  \epsilon +(1-\ii) b_{12}\right)u -2 \epsilon\det(B) u^2,  \cr
& G_-(u) = (1-\ii)-2 \alpha  \epsilon + \left(2 \ii \alpha  b_{12} \epsilon -2 \beta  \epsilon -(1+\ii) b_{12}\right)u -2 \epsilon \det(B) u^2.
\end{align}
The two expressions above factorize to
\begin{align}
	&	G_+(u) = \big((1+\ii) b_{12} u+(-1+\ii)\big) \big((1+i) \alpha  \epsilon +(1+\ii) \beta  u \epsilon -1\big),  \cr
	& G_-(u) = \big((1+\ii) b_{12} u+(-1+\ii)\big) \big((1+i) \alpha  \epsilon +(1+\ii) \beta  u \epsilon -\ii\big).
\end{align}
The first factor is common to $G_+$ and $G_-$ and vanishes at $u_0 = -\ii/b_{12}$. The second factors do not vanish for $u\in \mathbb R$. Thus if $b_{12} \notin \ii\mathbb R$ then Proposition~\ref{lem:dominant_scale} applies and otherwise Proposition~\ref{lem:dominant_scale2} applies, as long as $\det(B)\neq 0$. In both cases the ratio simplifies to
\begin{equation}
	 \dfrac{G_-(u)}{G_+(u)} =  \dfrac{2\epsilon \beta u +2 \alpha  \epsilon  -1+\ii}{2\epsilon \beta  u +2 \alpha  \epsilon  -1-\ii}\,.
\end{equation}
Finally, the map $u \to 2\epsilon \beta  u +2 \alpha  \epsilon  -1+\ii$ winds by $-\sign(\beta)\pi$ as $y$ goes from $-\infty$ to $+\infty$ whereas $u \to 2\epsilon \beta  u +2 \alpha  \epsilon  -1-\ii$ winds by $\sign(\beta)\pi$, so that 
$$w_\infty=-\sign(\beta).$$

\paragraph{Case $\beta=0$ and $b_{12}\neq 0$.} The dominant scale is 
$$ 
\delta^{-3} g_0(u \delta,\pm \delta) = G_\pm(u) + R_\pm(u)
$$
with
\begin{align}
	&G_+(u)= b_{12} u (u+\ii) \left(2 \alpha  \epsilon-1 -\ii (2 u(1-\alpha  \epsilon)  - \sqrt{1+2 u^2})\right)\cr
	&G_-(u) = b_{12} u (u-\ii) \left(2 \alpha  \epsilon-1 +\ii (2 u(1-\alpha  \epsilon) - \sqrt{1+2 u^2})\right) = G^*_+(u)\,.
\end{align}
These two functions vanish at $u=0$, and we can check that Proposition~\ref{lem:dominant_scale2} applies when $b_{12}\neq 0$. Thus we are left with studying the winding phase of 
$$
G(u) = (u-\ii) \left(2 \alpha  \epsilon-1 +\ii (2 u(1-\alpha  \epsilon) - \sqrt{1+2 u^2})\right) := (u-\ii) \widetilde G(u).
$$
The first factor always winds by $\pi$. As for the second factor $\widetilde G(u)$, its real part is constant. Moreover, one has
$$
\widetilde G(u) \mathop{\sim}_{+\infty}  \ii (2(1-\alpha \epsilon)+\sqrt{2} )u, \qquad \widetilde G(u) \mathop{\sim}_{-\infty}  \ii (2(1-\alpha \epsilon)-\sqrt{2} )u\,.
$$
Thus if $\alpha < \tfrac{1}{\epsilon}\left(1-\tfrac{1}{\sqrt 2}\right)$ or $\alpha > \tfrac{1}{\epsilon}\left(1+\tfrac{1}{\sqrt 2}\right)$ one can check that $\widetilde G$ winds by $-\pi$, so that $G$ does not wind. Otherwise, if
$$
\dfrac{1}{\epsilon}\left(1-\dfrac{1}{\sqrt 2}\right)< \alpha < \dfrac{1}{\epsilon}\left(1+\dfrac{1}{\sqrt 2}\right)
$$
then $\widetilde G$ does not wind, so that $G$ winds by $\pi$. To summarize one has
$$
w_\infty = \begin{cases}
	1 & \text{if}\qquad  \frac{1}{\epsilon}\left(1-\frac{1}{\sqrt 2}\right)< \alpha < \frac{1}{\epsilon}\left(1+\frac{1}{\sqrt 2}\right),\\
	0 & \text{if} \qquad  \alpha < \frac{1}{\epsilon}\left(1-\frac{1}{\sqrt 2}\right) \quad \text{or}\quad  \alpha > \frac{1}{\epsilon}\left(1+\frac{1}{\sqrt 2}\right).
\end{cases}
$$

\paragraph{Case $\beta=0$ and $b_{12}=0$.}  The dominant scale is 
$$ 
\delta^{-4} g_0(u \delta,\pm \delta) = G_\pm(u) + R_\pm(u)
$$
with 
\begin{align}
G_-(u)=&1-2 \alpha  \epsilon -2 \alpha  u^4 \epsilon -4 \alpha  u^2 \epsilon +2 u^4+u^3\sqrt{2 u^2+1} +3 u^2+u\sqrt{2 u^2+1} \cr &+\ii \left(-u^3-u^2\sqrt{2 u^2+1} -\sqrt{2 u^2+1}-u\right)
\end{align}
and $G_+(u)= G^*_-(u)$. It is worth noticing here that $$R_+=R-=0$$ so that Proposition~\ref{lem:dominant_scale} is not even required. $G_-(u)$ never vanishes and its imaginary part is always strictly negative. Moreover,
$$
 G(u) \mathop{\sim}_{+\infty}  (2(1-\alpha \epsilon)+\sqrt{2} )u^4, \qquad  G(u) \mathop{\sim}_{-\infty}  (2(1-\alpha \epsilon)-\sqrt{2} )u^4
$$
A similar analysis may be performed as in the previous case, and we end up again with 
$$
w_\infty = \begin{cases}
	1 & \text{if}\qquad  \frac{1}{\epsilon}\left(1-\frac{1}{\sqrt 2}\right)< \alpha < \frac{1}{\epsilon}\left(1+\frac{1}{\sqrt 2}\right),\\
	0 & \text{if} \qquad  \alpha < \frac{1}{\epsilon}\left(1-\frac{1}{\sqrt 2}\right) \quad \text{or}\quad  \alpha > \frac{1}{\epsilon}\left(1+\frac{1}{\sqrt 2}\right).
\end{cases}
$$

\paragraph{Case $b_{12}=0$ and $\beta\neq 0$.} In that case we go back to 
\begin{align}
	g_\infty(k_x,\kappa) = -\alpha+k_x \beta -\frac{ \left(\kappa _{\text{ev}}-\kappa \right)\left(\kappa _{\text{ev}}-\ii k_x\right)}{2 k^2 \epsilon }
\end{align}
Passing to dual variables via \eqref{eq:defg0} we get
\begin{align}
	g_0(\lambda_x,\delta)= &\, \dfrac{-2 \alpha  \delta ^2 \epsilon +\delta ^2-2 \alpha  \epsilon  \lambda _x^2-2 \beta  \epsilon  \lambda _x+\ii \delta  \sqrt{\delta ^2+2 \lambda _x^2}+\lambda _x \sqrt{\delta ^2+2 \lambda _x^2}+\ii \delta  \lambda _x+2 \lambda _x^2}{2\epsilon(\lambda_x^2+\delta^2)}\,.
\end{align}

%Passing to dual variables:
%\begin{equation}
%	g_0(\lambda_x,\delta) = \dfrac{2(\lambda_x^2+\delta^2) \epsilon}{\delta^2} g_\infty\Big(-\dfrac{\lambda_x}{\lambda_x^2+\delta^2},\dfrac{\delta}{\lambda_x^2+\delta^2}\Big)
%\end{equation}
%which leads to 
%\begin{equation}
%	g_0(\lambda_x,\delta) = -2 \alpha  \epsilon -\frac{2 \alpha  \epsilon  \lambda _x^2}{\delta ^2}-\frac{2 \beta  \epsilon  \lambda _x}{\delta ^2}+\frac{2 \lambda _x^2}{\delta ^2}+\frac{\lambda _x \sqrt{\delta ^2+2 \lambda _x^2}}{\delta ^2}+\frac{\ii \sqrt{\delta ^2+2 \lambda _x^2}}{\delta }+\frac{i \lambda _x}{\delta }+1
%\end{equation}
%The dominant scale is 
%$$
%u= \dfrac{\lambda_x}{\delta^2}
%$$
%We end up with

The denominator does not contribute to $w_\infty$ so we replace $g_0$ by $2(\delta ^2+\lambda _x^2)\epsilon g_0$. The dominant scale is 
$$ 
\delta^{-2} g_0(u \delta^2,\pm \delta) = G_\pm(u) + R_\pm(u)
$$
with 
$$
G_+(u)=-2 \alpha  \epsilon -2 \beta  u \epsilon +1+\ii, \qquad G_-(u)=G^*_+(u).
$$
Proposition~\ref{lem:dominant_scale} applies and we infer that $G_-$ has a winding phase of $-\pi \sign(\beta)$ so that
$$
w_\infty = - \sign(\beta).
$$

\paragraph{Summarizing.} If $\beta \neq 0$ then $w_\infty = - \sign(\beta)$. If $\beta=0$ then 
$$
w_\infty = \begin{cases}
	1 & \text{if}\qquad  \frac{1}{\epsilon}\left(1-\frac{1}{\sqrt 2}\right)< \alpha < \frac{1}{\epsilon}\left(1+\frac{1}{\sqrt 2}\right),\\
	0 & \text{if} \qquad  \alpha < \frac{1}{\epsilon}\left(1-\frac{1}{\sqrt 2}\right) \quad \text{or}\quad  \alpha > \frac{1}{\epsilon}\left(1+\frac{1}{\sqrt 2}\right).
\end{cases}
$$

\begin{rem}
	At the threshold cases $\beta=0$ and $\alpha = \tfrac{1}{\epsilon}\big(1\pm\tfrac{1}{\sqrt 2}\big)$, one can check that the winding phase of $G_+$ is not a multiple of $\pi$ so that $w_\infty \notin \mathbb Z$. We suspect some collapse of edge mode branch occurring at infinity, similarly to the case $a=\pm \sqrt{2}$ in \cite{GrafJudTauber21}. We do not investigate further this fine-tuned case and consider it out of the $w_\infty$-classification. 
\end{rem}

\subsection{Class $\mathfrak{A}_{2,4}$ } In this case we have

$$
A_0 = \begin{pmatrix}
	a_{11} & 1 & 0 & 0\\
	a_{21} & 0 &(a_{11}^*)^{-1}& 1
\end{pmatrix}, \qquad A_1= \begin{pmatrix}
b_{11} & b_{11}(a_{11})^{-1} & 0 & 0\\
b_{21} & b_{22} & 0 & 0
\end{pmatrix} := \begin{pmatrix}
	B & 0 
\end{pmatrix},
$$
with $ a_{11}\in \mathbb C^*$, $b_{11}, b_{21}, b_{22}\in \mathbb C$, such that $a_{21} = \alpha a_{11} + \epsilon^{-1}$ and $ b_{22}-b_{21}a_{11}^{-1}=\ii \beta$ with $\alpha, \beta \in \mathbb R$. We get $g(k_x,\kappa) = g_\infty(k_x,\kappa)+o(1)$ with
\begin{align}
	& g_\infty(k_x,\kappa)= \cr & -\det(B)  k_x^2-\frac{\ii \kappa _{\text{ev}}}{a_{11}^*}+k_x \left(\frac{b_{11} \kappa _{\text{ev}}}{|a_{11}|^2}-\frac{\ii a_{21} b_{11}}{a_{11}}+\ii a_{11} b_{22}+b_{11} (-\kappa )-\ii b_{21}\right)+\ii a_{11} \kappa -a_{21}\cr 
	&-\frac{\left(\kappa _{\text{ev}}-\kappa \right) \left(\kappa _{\text{ev}}-\ii k_x\right)}{2 k^2 \epsilon }+\frac{\left(\kappa _{\text{ev}}-\kappa \right) \left(\kappa +\ii k_x\right) \left(a_{11}+\ii b_{11} k_x\right)}{2 a_{11}^* k^2 \epsilon }-\frac{\ii b_{11} \left(\kappa _{\text{ev}}-\kappa \right) k_x \left(\kappa _{\text{ev}}-\ii k_x\right)}{2 a_{11} k^2 \epsilon }\cr 
	&+\frac{b_{11} k_x \left(k_x-\ii \kappa \right) \left(k_x+\ii \kappa _{\text{ev}}\right) \left(b_{22} k_x+\kappa _{\text{ev}}\right)}{4 k^4 \epsilon ^2}-\frac{b_{11} k_x \left(k_x-\ii \kappa \right) \left(k_x+\ii \kappa _{\text{ev}}\right) \left(b_{21} a_{11}^* k_x+\kappa \right)}{4 |a_{11}|^2 k^4 \epsilon ^2}\,.
\end{align}
Passing to dual variables via \eqref{eq:defg0} we get
\begin{align} \label{eq:longuest_g0}
	& g_0(\lambda_x,\delta)= \cr &\, \dfrac{1}{4 |a_{11}|^2 \epsilon ^2 \left(\delta ^2+\lambda _x^2\right){}^2} \Big( 2 \epsilon  \left| a_{11}\right| {}^2 \delta ^4-2 \epsilon  a_{11}^2 \delta ^4-4 \epsilon ^2 \left| a_{11}\right| {}^2 a_{21} \delta ^4+2 \ii \epsilon  \sqrt{\delta ^2+2 \lambda _x^2} a_{11}^2 \delta ^3\cr &+ 2 \ii \epsilon  \left| a_{11}\right| {}^2 \sqrt{\delta ^2+2 \lambda _x^2} \delta ^3+4 \ii \epsilon ^2 |a_{11}|^2  a_{11} \delta ^3+2 \ii \epsilon  \left| a_{11}\right| {}^2 \lambda _x \delta ^3+2 \ii \epsilon  a_{11}^2 \lambda _x \delta ^3+\left| a_{11}\right| {}^2 b_{11} \lambda _x \delta ^3\cr &+6 \epsilon  \left| a_{11}\right| {}^2 \lambda _x^2 \delta ^2-2 \epsilon  a_{11}^2 \lambda _x^2 \delta ^2-8 \epsilon ^2 \left| a_{11}\right| {}^2 a_{21} \lambda _x^2 \delta ^2-\ii \left| a_{11}\right| {}^2 b_{11} \lambda _x^2 \delta ^2+\ii b_{11} \lambda _x^2 \delta ^2\cr &+4 \epsilon ^2 \sqrt{\delta ^2+2 \lambda _x^2} a_{11} \delta ^2+2 \epsilon  \sqrt{\delta ^2+2 \lambda _x^2} a_{11}^2 \lambda _x \delta ^2+2 \epsilon  \left| a_{11}\right| {}^2 \sqrt{\delta ^2+2 \lambda _x^2} \lambda _x \delta ^2\cr &+\ii \sqrt{\delta ^2+2 \lambda _x^2} b_{11} \lambda _x \delta ^2+2 \ii \epsilon  a_{11} b_{11} \lambda _x \delta ^2+4 \ii \epsilon ^2 \left| a_{11}\right| {}^2 b_{21} \lambda _x \delta ^2-2 \ii \epsilon  b_{11} a_{11}^* \lambda _x \delta ^2\cr &+4 \ii \epsilon ^2 a_{21} b_{11} a_{11}^* \lambda _x \delta ^2-4 \ii \epsilon ^2 a_{11}^2 b_{22} a_{11}^* \lambda _x \delta ^2+2 \ii \epsilon  \left| a_{11}\right| {}^2 \lambda _x^3 \delta +2 \ii \epsilon  a_{11}^2 \lambda _x^3 \delta +2 \left| a_{11}\right| {}^2 b_{11} \lambda _x^3 \delta \cr & +b_{11} \lambda _x^3 \delta +\ii \left| a_{11}\right| {}^2 b_{11} b_{22} \lambda _x^3 \delta -\ii b_{11} b_{21} a_{11}^* \lambda _x^3 \delta +2 \ii \epsilon  \sqrt{\delta ^2+2 \lambda _x^2} a_{11}^2 \lambda _x^2 \delta \cr &+2 \ii \epsilon  \left| a_{11}\right| {}^2 \sqrt{\delta ^2+2 \lambda _x^2} \lambda _x^2 \delta +\left| a_{11}\right| {}^2 \sqrt{\delta ^2+2 \lambda _x^2} b_{11} \lambda _x^2 \delta +\sqrt{\delta ^2+2 \lambda _x^2} b_{11} \lambda _x^2 \delta +2 \epsilon  a_{11} b_{11} \lambda _x^2 \delta \cr &+\ii \left| a_{11}\right| {}^2 \sqrt{\delta ^2+2 \lambda _x^2} b_{11} b_{22} \lambda _x^2 \delta +4 \ii \epsilon ^2 |a_{11}|^2 a_{11} \lambda _x^2 \delta +2 \epsilon  b_{11} a_{11}^* \lambda _x^2 \delta -\ii b_{11} b_{21} a_{11}^* \lambda _x^2 \sqrt{\delta ^2+2 \lambda _x^2} \delta \cr &+4 \epsilon ^2 \left| a_{11}\right| {}^2 b_{11} \lambda _x \delta +2 \epsilon  \sqrt{\delta ^2+2 \lambda _x^2} a_{11} b_{11} \lambda _x \delta +2 \epsilon  \sqrt{\delta ^2+2 \lambda _x^2} b_{11} a_{11}^* \lambda _x \delta +4 \epsilon  \left| a_{11}\right| {}^2 \lambda _x^4\cr &-4 \epsilon ^2 \left| a_{11}\right| {}^2 a_{21} \lambda _x^4-2 \ii \left| a_{11}\right| {}^2 b_{11} \lambda _x^4+\left| a_{11}\right| {}^2 b_{11} b_{22} \lambda _x^4-b_{11} b_{21} a_{11}^* \lambda _x^4+2 \epsilon  \sqrt{\delta ^2+2 \lambda _x^2} a_{11}^2 \lambda _x^3\cr &+2 \epsilon  \left| a_{11}\right| {}^2 \sqrt{\delta ^2+2 \lambda _x^2} \lambda _x^3+4 \ii \epsilon ^2 \left| a_{11}\right| {}^2 b_{21} \lambda _x^3+\left| a_{11}\right| {}^2 \sqrt{\delta ^2+2 \lambda _x^2} b_{11} b_{22} \lambda _x^3-4 \ii \epsilon  b_{11} a_{11}^* \lambda _x^3\cr &+4 \ii \epsilon ^2 a_{21} b_{11} a_{11}^* \lambda _x^3-4 \ii \epsilon ^2 |a_{11}| b_{22}  a_{11} \lambda _x^3-4 \det(B) \epsilon ^2 \left| a_{11}\right| {}^2 \lambda _x^2+4 \epsilon ^2 \sqrt{\delta ^2+2 \lambda _x^2} a_{11} \lambda _x^2\cr& -\ii \left| a_{11}\right| {}^2 b_{11} \lambda _x^3 \sqrt{\delta ^2+2 \lambda _x^2}-b_{11} b_{21} a_{11}^* \lambda _x^3 \sqrt{\delta ^2+2 \lambda _x^2}-2 \ii \epsilon  a_{11} b_{11} \lambda _x^2 \sqrt{\delta ^2+2 \lambda _x^2}\cr &-2 \ii \epsilon  b_{11} a_{11}^* \lambda _x^2 \sqrt{\delta ^2+2 \lambda _x^2}-4 \ii \epsilon ^2 b_{11} \lambda _x \sqrt{\delta ^2+2 \lambda _x^2}\Big).
\end{align}
The denominator does not contribute to $w_\infty$ so we replace $g_0$ by $2\epsilon |a_{11}|^2 (\delta ^2+\lambda _x^2)^2g_0$. Then the dominant scale of $g_0$ depends on some parameter vanishing or not. 

\paragraph{Case $b_{11}\neq0$.} If $b_{11}\neq 0$,
the dominant scale is 
$$ 
\delta^{-2} g_0(u \delta,\pm \delta) = G_\pm(u) + R_\pm(u)
$$
with 
\begin{align}
	&G_+(u) =-2 \det(B) \epsilon |a_{11}|^2 u^2 - 2 \ii \epsilon b_{11} u \sqrt{1+2u^2} + 2 \epsilon b_{11} |a_{11}|^2 u\cr
	&G_-(u) =-2 \det(B) \epsilon |a_{11}|^2 u^2 - 2 \ii \epsilon b_{11} u \sqrt{1+2u^2} - 2 \epsilon b_{11} |a_{11}|^2 u\,.
\end{align}
Recalling that $\det(B) = b_{11} (b_{22}-b_{21}/a_{11})$ and that   $b_{22}-b_{21}/a_{11} = \ii \beta$ with $\beta \in \mathbb R$. $G_+$ and $G_-$ vanish at $u=0$ and we can check that Proposition~\ref{lem:dominant_scale2} applies when $b_{11}\neq 0$. We are left with the ratio 
\begin{equation}
	\mathcal S(u) =  \dfrac{G_-(u)}{G_+(u)} = %- \dfrac{-\ii (\beta |a_{11}|^2 u  + \sqrt{1+u^2}) - |a_{11}|^2}{-\ii (\beta |a_{11}|^2 u  + \sqrt{1+u^2}) + |a_{11}|^2}= 
	 \dfrac{\beta |a_{11}|^2 u  + \sqrt{1+2 u^2} - \ii |a_{11}|^2}{\beta |a_{11}|^2 u  + \sqrt{1+2 u^2} + \ii |a_{11}|^2} := \dfrac{G(u)}{G^*(u)}
\end{equation} 
and we are left with twice the winding of the numerator $G(u) = \beta |a_{11}|^2 u  + \sqrt{1+2 u^2} - \ii |a_{11}|^2$ which has constant negative imaginary part so its winding phase is the difference of limits of argument between $+\infty$ and $-\infty$. If $\beta |a_{11}|^2 > \sqrt{2}$ then
$$
\lim_{u \to + \infty}\frac{G(u)}{G(-u)} = -1
$$
and $G$ winds by $\pi$ so that $w_\infty=1$. Similarly, if $\beta |a_{11}|^2 <- \sqrt{2}$ then
 $G$ winds by $-\pi$ so that $w_\infty=-1$. Otherwise, if $|\beta| |a_{11}|^2 < \sqrt{2}$ then
 $$
 \lim_{u \to + \infty}\frac{G(u)}{G(-u)} = 1
 $$
 and $G$ does not wind, so that $w_\infty =0$. 
 
\paragraph{Case $b_{11}=0$.} If $b_{11}=0$, the dominant scale is 
$$ 
\delta^{-3} g_0(u \delta,\pm \delta) = G_\pm(u) + R_\pm(u)
$$
with 
 \begin{align}
& G_+(u) = 2 a_{11} \left(u^2+1\right) \epsilon  \left(\sqrt{2 u^2+1}+\ii \big(|a_{11}|^2+u (a_{11}^*b_{21}-|a_{11}|^2b_{22} )\big)\right)\cr\,, 
& G_-(u) = 2 a_{11} \left(u^2+1\right) \epsilon   \left(-\sqrt{2 u^2+1}+\ii \big((|a_{11}|^2-u (a_{11}^*b_{21}-|a_{11}|^2b_{22} )\big)\right)
 \end{align}
 The self-adjoint condition implies  $b_{22}-b_{21}/a_{11} = \ii \beta$ with $\beta \in \mathbb R$. $G_+$ and $G_-$ do not vanish and Proposition~\ref{lem:dominant_scale} applies.  The ratio  simplifies to:
 $$
\mathcal S(u) =  \dfrac{G_-(u)}{G_+(u)} = \dfrac{-\sqrt{2 u^2+1}+\ii |a_{11}|^2 - \beta |a_{11}|^2 u}{\sqrt{2 u^2+1}+ \ii |a_{11}|^2 +  \beta |a_{11}|^2 u} = -\dfrac{G(u)}{G^*(u)}\,.
 $$
 We are back to the same numerator and denominator from the case $b_{11}=0$, from which we immediately infer  
 $$
 w_\infty = \begin{cases}
 	1, & \beta |a_{11}|^2 > \sqrt{2},\\
 	0,& -\sqrt{2}<\beta |a_{11}|^2 < \sqrt{2}, \\
 	-1,& \beta |a_{11}|^2 <- \sqrt{2}.
 \end{cases}
 $$
 
 \begin{rem}
 	At the threshold cases $\beta|a_{11}|^2 = \pm \sqrt{2}$, one can check that the winding phase of $G_+$ is not a multiple of $\pi$ so that $w_\infty \notin \mathbb Z$. We suspect some collapse of edge mode branch occurring at infinity, similarly to the case $a=\pm \sqrt{2}$ in \cite{GrafJudTauber21}. We do not investigate further this fine-tuned case and consider it out of the $w_\infty$-classification. 
 \end{rem}
 
\subsection{Class $\mathfrak{A}_{3,4}$ \label{sec:anomalyA34}} In this case we have
$$
A_0 = \begin{pmatrix}
	\alpha_{1} & a_{12} & 1 & 0\\
	\einv -a_{12}^* & \alpha_{2} & 0 & 1
\end{pmatrix}, \qquad A_1= \begin{pmatrix}
\ii \beta_{1} & b_{12} & 0 & 0\\
b_{12}^* & \ii \beta_{2} & 0 & 0
\end{pmatrix} := \begin{pmatrix}
	B & 0 
\end{pmatrix},
$$
with $\alpha_{12}, b_{12} \in \mathbb C$ and $\alpha_1, \alpha_2,  \beta_1, \beta_2 \in \mathbb R$. We get $g(k_x,\kappa) = g_\infty(k_x,\kappa)+o(1)$ with
\begin{align}
	g_\infty(k_x,\kappa) = & -\det(B)  k_x^2+\ii \alpha _2 \kappa _{\text{ev}}-\kappa  \kappa _{\text{ev}}+\ii \alpha _1 \kappa +\det(A) \cr 
	&+\ii k_x \left(\ii \alpha _2 \beta _1+\ii \alpha _1 \beta _2-a_{12} b_{12}^*+b_{12} a_{12}^*-\beta _1 \kappa -\frac{b_{12}}{\epsilon }-\beta _2 \kappa _{\text{ev}}\right) \cr 
	&\frac{\left(\kappa _{\text{ev}}-\kappa \right) \left(a_{12} \epsilon  \left(-\kappa _{\text{ev}}+\ii k_x\right)+b_{12} \epsilon  k_x \left(-k_x-\ii \kappa _{\text{ev}}\right)+\left(k_x-\ii \kappa \right) \left(\ii a_{12}^* \epsilon +b_{12}^* \epsilon  k_x-\ii \right)\right)}{2 k^2 \epsilon ^2}\cr 
	&+\frac{\left(k_x-\ii \kappa \right) \left(-\kappa _{\text{ev}}+\ii k_x\right) \left(-\ii \det(B)  k_x^2-\ii \kappa  \kappa _{\text{ev}}+\beta _1 \kappa _{\text{ev}} k_x+\beta _2 \kappa  k_x\right)}{4 k^4 \epsilon ^2}\,.
\end{align}
Passing to dual variables via \eqref{eq:defg0} we get
\begin{align} 
	& g_0(\lambda_x,\delta)= \cr &\, \dfrac{1}{4\epsilon ^2 \left(\delta ^2+\lambda _x^2\right){}^2} \Big( 4 \det(A) \epsilon ^2 \delta ^4+2 \epsilon  a_{12} \delta ^4-2 \epsilon  a_{12}^* \delta ^4+\delta ^4-2 \ii \sqrt{\delta ^2+2 \lambda _x^2} \delta ^3+2 \ii \epsilon  \sqrt{\delta ^2+2 \lambda _x^2} a_{12} \delta ^3\cr &+2 \ii \epsilon  \sqrt{\delta ^2+2 \lambda _x^2} a_{12}^* \delta ^3+4 \ii \epsilon ^2 \alpha _1 \delta ^3-\ii \lambda _x \delta ^3+2 \ii \epsilon  a_{12} \lambda _x \delta ^3+2 \ii \epsilon  a_{12}^* \lambda _x \delta ^3+\ii \beta _1 \lambda _x \delta ^3+8 \det(A) \epsilon ^2 \lambda _x^2 \delta ^2\cr &+6 \epsilon  a_{12} \lambda _x^2 \delta ^2-2 \epsilon  a_{12}^* \lambda _x^2 \delta ^2+\beta _1 \lambda _x^2 \delta ^2+\beta _2 \lambda _x^2 \delta ^2-4 \epsilon ^2 \alpha _2 \sqrt{\delta ^2+2 \lambda _x^2} \delta ^2-3 \lambda _x \sqrt{\delta ^2+2 \lambda _x^2} \delta ^2\cr &+2 \epsilon  \sqrt{\delta ^2+2 \lambda _x^2} a_{12} \lambda _x \delta ^2+2 \ii \epsilon  b_{12} \lambda _x \delta ^2+2 \epsilon  \sqrt{\delta ^2+2 \lambda _x^2} a_{12}^* \lambda _x \delta ^2-4 \ii \epsilon ^2 b_{12} a_{12}^* \lambda _x \delta ^2-2 \ii \epsilon  b_{12}^* \lambda _x \delta ^2\cr &+4 \ii \epsilon ^2 a_{12} b_{12}^* \lambda _x \delta ^2+4 \epsilon ^2 \alpha _2 \beta _1 \lambda _x \delta ^2+\sqrt{\delta ^2+2 \lambda _x^2} \beta _2 \lambda _x \delta ^2+4 \epsilon ^2 \alpha _1 \beta _2 \lambda _x \delta ^2+\det(B) \ii \lambda _x^3 \delta +2 \ii \epsilon  a_{12} \lambda _x^3 \delta \cr &+2 \ii \epsilon  a_{12}^* \lambda _x^3 \delta +2 \ii \beta _1 \lambda _x^3 \delta -\ii \beta _2 \lambda _x^3 \delta +\det(B) \ii \sqrt{\delta ^2+2 \lambda _x^2} \lambda _x^2 \delta +2 \ii \epsilon  \sqrt{\delta ^2+2 \lambda _x^2} a_{12} \lambda _x^2 \delta +2 \epsilon  b_{12} \lambda _x^2 \delta \cr &+2 \ii \epsilon  \sqrt{\delta ^2+2 \lambda _x^2} a_{12}^* \lambda _x^2 \delta -2 \epsilon  b_{12}^* \lambda _x^2 \delta +4 \ii \epsilon ^2 \alpha _1 \lambda _x^2 \delta +\ii \sqrt{\delta ^2+2 \lambda _x^2} \beta _1 \lambda _x^2 \delta -4 \ii \epsilon ^2 \sqrt{\delta ^2+2 \lambda _x^2} \delta \cr &-\ii \lambda _x^2 \sqrt{\delta ^2+2 \lambda _x^2} \delta -\ii \beta _2 \lambda _x^2 \sqrt{\delta ^2+2 \lambda _x^2} \delta -2 \epsilon  b_{12}^* \lambda _x \sqrt{\delta ^2+2 \lambda _x^2} \delta +2 \epsilon  \sqrt{\delta ^2+2 \lambda _x^2} b_{12} \lambda _x \delta +4 \ii \epsilon ^2 \beta _1 \lambda _x \delta \cr &+4 \det(A) \epsilon ^2 \lambda _x^4+\det(B) \lambda _x^4+4 \epsilon  a_{12} \lambda _x^4+2 \beta _1 \lambda _x^4+\det(B) \sqrt{\delta ^2+2 \lambda _x^2} \lambda _x^3+2 \epsilon  \sqrt{\delta ^2+2 \lambda _x^2} a_{12} \lambda _x^3\cr &+2 \epsilon  \sqrt{\delta ^2+2 \lambda _x^2} a_{12}^* \lambda _x^3-4 \ii \epsilon ^2 b_{12} a_{12}^* \lambda _x^3+4 \ii \epsilon ^2 a_{12} b_{12}^* \lambda _x^3+\sqrt{\delta ^2+2 \lambda _x^2} \beta _1 \lambda _x^3+4 \epsilon ^2 \alpha _2 \beta _1 \lambda _x^3+4 \epsilon ^2 \alpha _1 \beta _2 \lambda _x^3\cr &-4 \det(B) \epsilon ^2 \lambda _x^2+2 \ii \epsilon  \sqrt{\delta ^2+2 \lambda _x^2} b_{12}^* \lambda _x^2-2 \lambda _x^3 \sqrt{\delta ^2+2 \lambda _x^2}-2 \ii \epsilon  b_{12} \lambda _x^2 \sqrt{\delta ^2+2 \lambda _x^2}-4 \epsilon ^2 \alpha _2 \lambda _x^2 \sqrt{\delta ^2+2 \lambda _x^2}\cr &-4 \epsilon ^2 \beta _2 \lambda _x \sqrt{\delta ^2+2 \lambda _x^2}\Big).
\end{align}
The denominator does not contribute to $w_\infty$ so we replace $g_0$ by $(\delta ^2+\lambda _x^2)^2g_0$. Then the dominant scale of $g_0$ depends on some parameter vanishing or not. 
\subsubsection{Case $b_{12}\neq 0$}
If $b_{12}\neq 0$ the dominant scale is 
$$ 
\delta^{-2} g_0(u \delta,\pm \delta) = G_\pm(u) + R_\pm(u)
$$
with 
\begin{align}
	&G_+(u) = - \det(B) u^2-\beta _2 u \sqrt{2 u^2+1} -\ii   \sqrt{2 u^2+1}+ \ii  \beta _1 u \cr
	&G_-(u) =- \det(B)   u^2- \beta _2 u \sqrt{2 u^2+1} +\ii   \sqrt{2 u^2+1}- \ii  \beta _1 u 
\end{align}
Notice that $$\det(B) = -\beta_1\beta_2 - |b_{12}|^2 \in \mathbb R.$$ Since $\beta_1 \in \mathbb R$ then $G_+(u)= G^*_-(u)$. Moreover $G_+$ and $G_-$ do not vanish and Proposition~\ref{lem:dominant_scale} applies as long as $\Delta \neq 0$ with 
$$
\Delta :=
% (\det(B))^2 - 2 \beta_2^2 =
 (\det B-\sqrt{2} \beta_2)(\det B+\sqrt{2} \beta_2) =  B_+B_-%\big((\beta_1+\sqrt{2}) \beta_2 + |b_{12}|^2\big) \big((\beta_1-\sqrt 2)\beta_2 + |b_{12}|^2\big),
$$
with 
$$
B_{\pm} :=\beta_2(\beta_1\pm\sqrt{2}) +|b_{12}|^2.
$$
We are left with studying the argument of 
\begin{equation}
	G(u)= -G_-(u) =  \det(B) u^2 + \beta _2 u \sqrt{2 u^2+1} +  \ii (\beta _1  u-  \sqrt{2 u^2+1}) := G_r(u)+\ii G_i(u).
\end{equation}

\paragraph{Case 1: $\Delta <0$.} In that case one can check that $G'_r(u)$  is nowhere vanishing so that $G_r$ is strictly monotonic. Moreover, the real part dominates the behavior of $G$ asymptotically:
$$
G(u) \mathop{\sim}_{+\infty} (\det(B)+\sqrt{2}\beta_2) u^2, \qquad G(u) \mathop{\sim}_{-\infty} (\det(B)-\sqrt{2}\beta_2) u^2\,.
$$
In particular 
$$
\lim_{u \to \infty} \dfrac{G(u)}{G(-u)} = \dfrac{\Delta}{(\det(B)-\sqrt{2}\beta_2)^2}<0
$$
so that $G$ crosses the whole complex plane near the real line. Consequently, its argument changes by $\pm \pi$.

The exact value of the winding sign of $G$ requires a detailed study of $G_i(u)$, which behaves asymptotically as
$$
G_i(u) \mathop{\sim}_{+\infty} (\beta_1-\sqrt{2})u, \qquad G_i(u) \mathop{\sim}_{-\infty} (\beta_1+\sqrt{2})u.
$$ 
Moreover $G_i(u)$ vanishes only if $\beta_1^2>2$, in which case $G_i(u_0)=0$ with
\begin{equation}\label{eq:A34defu0}
u_0 = \dfrac{\sign(\beta_1)}{\sqrt{\beta_1^2-2}}.
\end{equation}
and
$$
G_r(u_0) = - \dfrac{|b_{12}|^2}{\beta_1^2-2} < 0
$$
since we assume $b_{12}\neq 0$.

%\paragraph{Case 1a: $\det B < \sqrt 2 \beta_2$ and $\det B > -\sqrt{2} \beta_2$.} 
\paragraph{Case 1a: $B_+>0$ and $B_-<0$.} 
This implies $B_+-B_-=  2 \sqrt{2} \beta_2 >0$ so that $\beta_2>0$. Moreover, $B_-<0$ implies $\beta_2 (\beta_1-\sqrt2) < -|b_{12}|^2<0$ and therefore $\beta_1-\sqrt2<0$, so that $G_i(u)<0$ near $+\infty$. Moreover, in this case, one has $G_r(u) >0$ near $+\infty$ and $G_r(u)<0$ near $-\infty$. Finally, if $\beta_1 <-\sqrt{2}$ then $G_i(u)>0$ near $-\infty$ and $G_i(u)$ vanishes at $u_0$ given above, with $G_r(u_0)<0$. Putting all together, $G$ winds by $+\pi$. Otherwise, if $\beta_1>-\sqrt{2}$ then $G_i(u)<0$ near $-\infty$ and $G_i(u)$ never vanishes. Again, $G$ winds by $+\pi$.

%\paragraph{Case 1b: $\det B > \sqrt 2 \beta_2$ and $\det B < -\sqrt{2} \beta_2$.} 
\paragraph{Case 1b: $B_+<0$ and $B_->0$.} 
Similarly $B_+-B_-<0$ so that $\beta_2 < 0$. Moreover $B_+<0$ implies $\beta_2 (\beta_1+\sqrt2) < -|b_{12}|^2<0$ and therefore $\beta_1+\sqrt 2>0$ so that $G_i(u)<0$ near $-\infty$. Moreover, in this case, one has $G_r(u) <0$ near $+\infty$ and $G_r(u)>0$ near $-\infty$. Finally, if $\beta_1 >\sqrt{2}$ then $G_i(u)>0$ near $+\infty$ and $G_i(u)$ vanishes at $u_0$ given above, with $G_r(u_0)<0$. Putting all together, $G$ winds by $-\pi$. Otherwise, if $\beta_1<\sqrt{2}$ then $G_i(u)<0$ near $+\infty$ and $G_i(u)$ never vanishes. Again, $G$ winds by $-\pi$.

Summarizing cases 1a and 1b we get
$$
\int_{-\infty}^\infty G^{-1}(u)\partial_u G(u)\dd u =  \sign(B_+) \pi
$$
so  that
$$
w_\infty = \sign(B_+) = -\sign(B_-)
$$
 
\paragraph{Case 2: $\Delta >0$.} In that case  one can check that $G'_r(u)$ vanishes exactly once. Moreover, similarly to case 1, we compute
$$
\lim_{u \to \infty} \dfrac{G(u)}{G(-u)} = \dfrac{\Delta}{(\det(B)-\sqrt{2}\beta)^2}>0
$$
This time $G$ starts from $+\infty$ or $-\infty$ and comes back to the same direction. 

%\paragraph{Case 2a: $\det B > \sqrt 2 \beta_2$ and $\det B > -\sqrt{2} \beta_2$.} 
\paragraph{Case 2a: $B_+<0$ and $B_-<0$.} These inequalities imply $\beta_2 (\beta_1\pm\sqrt2) < -|b_{12}|^2<0$ which means that $\beta_2\neq 0$ and $\beta_1+\sqrt 2$ and $\beta_1-\sqrt2$ have the same sign. In particular $\beta_1^2-2 >0$ and $G_i(u)$ vanishes at $u_0$ given in \eqref{eq:A34defu0}, with $G_r(u_0)<0$. Moreover, in this case one has $G_r(u) >0$ near $+\infty$ and $-\infty$, so that $G$ fully winds around $0$ as $u$ goes from $-\infty$ to $+\infty$. 

If $\beta_1+\sqrt 2>0$ and $\beta_1-\sqrt2>0$ then $G_i(u)>0$ near $+\infty$ and $G_i(u)<0$ near $-\infty$. Thus $G$ winds by $-2\pi$ around zero, so that $w_\infty=-2$. Otherwise, if $\beta_1+\sqrt 2<0$ and $\beta_1-\sqrt2<0$ then $G_i(u)<0$ near $+\infty$ and $G_i(u)>0$ near $-\infty$. Thus $G$ winds by $2\pi$, so that $w_\infty=+2$. Summarizing,
$$
w_\infty = -2 \sign(\beta_1+\sqrt 2) = -2 \sign(\beta_1-\sqrt 2).
$$

%\paragraph{Case 2b: $\det B < \sqrt 2 \beta_2$ and $\det B < -\sqrt{2} \beta_2$.}  
\paragraph{Case 2b: $B_+>0$ and $B_->0$.}   In this case one has $G_r(u) <0$ near $+\infty$ and $-\infty$. Either $G_i(u)$ vanishes once at $u_0$, with $G_r(u_0)<0$, or $G_i(u)$ never vanishes. In both cases, $G$ never crosses the real positive axis, so that its change of argument is zero. Consequently
$$
w_\infty =0.
$$
Notice that this case is also valid when $\beta_2=0$, or $\beta_1 = \pm \sqrt{2}$.

\paragraph{Summarizing}

If $b_{12}\neq0$ and $\Delta \neq 0$ then we have the following table:
%\begin{equation}
%	\begin{array}{|c|c|c|}
%		\hline
%		\det B - \sqrt 2 \beta_2 & \det B + \sqrt 2 \beta_2 & w_\infty \\
%		\hline
%		<0 & >0 & 1\\ \hline
%		>0 & <0 & -1\\ \hline
%		<0 & <0 & 0 \\ \hline 
%		>0 & >0 & \pm 2\\ \hline
%	\end{array}
%\end{equation}
\begin{equation}
	\begin{array}{|c|c|c|}
		\hline
		B_+ & B_- & w_\infty \\
		\hline
		>0 & <0 & 1\\ \hline
		<0 & >0 & -1\\ \hline
		>0 & >0 & 0 \\ \hline 
		<0 & <0 & \pm 2\\ \hline
	\end{array}
\end{equation}
with $\pm 2 = 2 \sign(\sqrt 2 -\beta_1)$.

\subsubsection{Case $b_{12}=0$, $\beta_2\neq 0$ and $\beta_1^2 \neq 2$.} In that case the dominant scale, $G_\pm$ and $R_\pm$ are the same as before, except that $G$ further simplifies to
$$
G(u)=  (\beta _1  u-  \sqrt{2 u^2+1})(-\beta_2 u + \ii).
$$
The first factor vanishes at $u_0 = \sign{\beta_1} (\beta_1^2-2)^{-1/2}$, only if $\beta_1^2 > 2$, in which case Proposition~\ref{lem:dominant_scale2} applies. Otherwise Proposition~\ref{lem:dominant_scale} applies as long as $\beta_1^2 < 2$. In both cases we are left with
$$
\mathcal S(u)= \dfrac{G(u)}{G^* (u)} = \dfrac{-\beta_2 u + \ii}{-\beta_2 u - \ii}
$$
so that, as in previous classes, we infer
$$
w_\infty = \sign(\beta_2).
$$

\subsubsection{Case $b_{12}=0$, $\beta_2= 0$ and $\beta_1^2 \neq 2$.} Notice that in that case one has $\Delta=0$. The dominant scale is 
$$ 
\delta^{-3/2} g_0(u \delta^{1/2},\pm \delta) = G_\pm(u) + R_\pm(u)
$$
with 
$$
G_+(u)=  \left(\alpha _2 u^2+\ii\right) \left(\beta _1 u-\sqrt{2} \sqrt{u^2}\right), \qquad G_-(u)=G^*(u).
$$
The function $G_+$ and $G_-$ vanish at $u_0=0$. One can check that Proposition~\ref{lem:dominant_scale2} applies as long as $\beta_1^2\neq 2$, in which case we are left with the winding of
$$
\mathcal S(u) = \dfrac{G_-(u)}{G_+(u)} = \dfrac{\alpha _2 u^2-\ii}{\alpha _2 u^2+\ii},
$$
from which we infer $w_\infty=0$.

\paragraph{Remaining cases.} We did not deal with the cases where $\Delta =0$ with
$$
\Delta = \big((\beta_1+\sqrt{2}) \beta_2 + |b_{12}|^2\big) \big((\beta_1-\sqrt 2)\beta_2 + |b_{12}|^2\big)
$$
except if $\beta_2=b_{12}=0$, which is treated in the last case above. The four remaining cases would be
\begin{enumerate}
	\item $\beta_2\neq 0$, $b_{12}=0$ and $\beta_1=  \pm \sqrt{2}$,
	\item $\beta_1^2\neq 2$, $b_{12}\neq 0$ and
	$$
	\beta_2= -\dfrac{|b_{12}|^2}{\beta_1\pm\sqrt{2}}.
	$$ 
\end{enumerate}
The same issue with all this cases is that the dominant scale is correct at $u \to +\infty$ and fails at $u\to -\infty$ (or conversely), so that neither Proposition~\ref{lem:dominant_scale} or \ref{lem:dominant_scale2} apply. One could still compute $w_\infty$ by splitting the integral into two parts (positive and negative $\lambda_x$), study them separately with distinct dominant scales, then compute ``half-winding'' phases and finally glue them together properly, but we do not expect $w_\infty \in \mathbb Z$. We rather suspect some collapse of edge mode branch occurring at infinity, similarly to the case $a=\pm \sqrt{2}$ in \cite{GrafJudTauber21}. We do not investigate further this fine-tuned case and consider it out of the $w_\infty$-classification.

\subsection{Class $\mathfrak{B}$} In this case we have

$$
A_0 = \begin{pmatrix}
	a_1 & a_2 & \ii\alpha & - \ii  \mu^*\alpha\\
	\mu a_1 & \mu a_2 & \ii  \mu  \alpha & - \ii |\mu|^2  \alpha
\end{pmatrix}, \qquad A_1= \begin{pmatrix}
1 & 0 & 0 & 0\\
0 & 1 & 0 & 0
\end{pmatrix},
$$
with $a_{1}, a_{2}, \mu \in \mathbb C$ and $\alpha \in \mathbb R$. The self-adjoint condition further requires $\alpha=0$ or $\alpha \Im(\mu) - \epsilon \Re(a_1-a_2\mu)=0$ but we shall keep it as an implicit constraint and keep general $a_2,\, \alpha$ for the computations. We get $g(k_x,\kappa) = g_\infty(k_x,\kappa)+o(1)$ with
\begin{align}
	g_\infty(k_x,\kappa) = &-k_x^2 +\ii a_2 \mu  k_x+\ii a_1 k_x+\frac{\ii \kappa _{\text{ev}} k_x^3+\kappa  \kappa _{\text{ev}} k_x^2-\ii \kappa  k_x^3+k_x^4}{4 k^4 \epsilon ^2}\,.
\end{align}
Passing to dual variables via \eqref{eq:defg0} we get
\begin{align} 
	 g_0(\lambda_x,\delta)=  &\, \dfrac{1}{4\epsilon ^2 \left(\delta ^2+\lambda _x^2\right){}^2} \Big( -4 \ii a_2 \delta ^2 \mu  \epsilon ^2 \lambda _x-4 \ii a_1 \delta ^2 \epsilon ^2 \lambda _x-4 \ii a_2 \mu  \epsilon ^2 \lambda _x^3-4 \ii a_1 \epsilon ^2 \lambda _x^3+\lambda _x^3 \sqrt{\delta ^2+2 \lambda _x^2}\cr &+\ii \delta  \lambda _x^2 \sqrt{\delta ^2+2 \lambda _x^2}+\ii \delta  \lambda _x^3+\lambda _x^4-4 \epsilon ^2 \lambda _x^2\Big).
\end{align}
The denominator, as well as a common $\lambda_x$ factor, do not contribute to $w_\infty$ so we replace $g_0$ by $(\delta ^2+\lambda _x^2)^2 \lambda_x^{-1} g_0$.  The dominant scale is 
$$ 
\delta^{-2} g_0(u \delta^2,\pm \delta) = G_\pm(u) + R_\pm(u)
$$
with 
%\begin{equation}
%	g_0(\lambda_x,\delta) = \dfrac{(\lambda_x^2+\delta^2)^2}{4\delta^2 \epsilon^2\lambda_x} g_\infty\Big(-\dfrac{\lambda_x}{\lambda_x^2+\delta^2},\dfrac{\delta}{\lambda_x^2+\delta^2}\Big)
%\end{equation}
%which leads to 
%\begin{align}
%	g_0(\lambda_x,\delta)= &-i a_2 \mu -i a_1-\frac{\lambda _x}{\delta ^2}-\frac{i a_2 \mu  \lambda _x^2}{\delta ^2}-\frac{i a_1 \lambda _x^2}{\delta ^2}+\frac{\lambda _x^3}{4 \delta ^2 \epsilon ^2}+\frac{\lambda _x^2 \sqrt{\delta ^2+2 \lambda _x^2}}{4 \delta ^2 \epsilon ^2}+\frac{i \lambda _x \sqrt{\delta ^2+2 \lambda _x^2}}{4 \delta  \epsilon ^2}+\frac{i \lambda _x^2}{4 \delta  \epsilon ^2}
%\end{align}
%The dominant scale is $u=\tfrac{\lambda_x}{\delta^2}$ and we get
\begin{equation}
	G_+(u) = -\ii a_2 \mu -\ii a_1-u, \qquad  G_-(u) = G_+(u).
\end{equation}
$G_+$ and $G_-$ do not vanish if $a_2 \mu +a_1 \notin \ii \mathbb R$, and we can check that Proposition~\ref{lem:dominant_scale} applies, from which we immediately infer $\mathcal S=1$ and $w_\infty=0$. If $a_2 \mu +a_1 \in \ii \mathbb R$, since $G_+=G_-$ then we can replace $g_0$ by $g_0+\ii \eta$ with $\eta\in \mathbb R^*$ sufficiently small so that the winding phases of $g_0(\lambda_x,\delta)$ and $g_0(\lambda_x,-\delta)$ are unchanged. Then Proposition~\ref{lem:dominant_scale} applies to $G_+ + \ii \eta = G_-+\ii \eta$. Again, we get $w_\infty=0$.

\subsection{Class $\mathfrak{C}$} In this case we have

$$
A_0 = \begin{pmatrix}
	a_1 & a_2 & 0 & a_4\\
	\mu a_1 & \mu a_2 & 0 & \mu a_4
\end{pmatrix}, \qquad A_1= \begin{pmatrix}
	1 & 0 & 0 & 0\\
	0 & 0 & 0 & 0
\end{pmatrix},
$$
with $a_{1}\in \mathbb C$,  $a_{2}, a_4 \in \mathbb C^2\setminus\{0\}$ and $\mu \in \mathbb C^*\setminus\{0\}$. The self-adjoint condition further requires $\Im(a_2 a_4^*)=0$, which implies $a_2=0$, or $a_4=0$ or $a_4=r a_2$ with $r \in \mathbb R^*$ and $a_2 \neq 0$. We get $g(k_x,\kappa) = g_\infty(k_x,\kappa)+o(1)$ with
\begin{align}
	g_\infty(k_x,\kappa) = &-a_4  \mu  k_x \kappa+\ii a_2 \mu  k_x+\frac{a_4 \mu  k_x \left(\kappa  \kappa _{\text{ev}}^2+\kappa _{\text{ev}} k_x^2+\ii \kappa _{\text{ev}}^2 k_x-\ii \kappa  \kappa _{\text{ev}} k_x\right)}{4 k^4 \epsilon ^2}\,.
\end{align}
Passing to dual variables via \eqref{eq:defg0} we get
\begin{align}
	g_0(\lambda_x,\delta)= &\, \dfrac{1}{4\epsilon^2(\lambda_x^2+\delta^2)^2} \Big( -\ii a_4 \mu  \lambda _x^3 \sqrt{\delta ^2+2 \lambda _x^2}-\ii a_4 \delta ^2 \mu  \lambda _x^2+a_4 \delta  \mu  \lambda _x^2 \sqrt{\delta ^2+2 \lambda _x^2}+a_4 \delta ^3 \mu  \lambda _x\cr &-4 \ii a_2 \delta ^2 \mu  \epsilon ^2 \lambda _x+2 a_4 \delta  \mu  \lambda _x^3+4 a_4 \delta  \mu  \epsilon ^2 \lambda _x-2 \ii a_4 \mu  \lambda _x^4-4 \ii a_2 \mu  \epsilon ^2 \lambda _x^3\Big)\,.
\end{align}
The denominator, as well as some common $\lambda_x \mu$ factor, do not contribute to $w_\infty$ so we replace $g_0$ by $\lambda_x^{-1}\mu^{-1}(\delta ^2+\lambda _x^2)^2g_0$. Then the dominant scale of $g_0$ depends on some parameter vanishing or not. 
%Passing to dual variables:
%\begin{equation}
%	g_0(\lambda_x,\delta) = \dfrac{(\lambda_x^2+\delta^2)^2}{4\delta \epsilon^2 \mu \lambda_x} g_\infty\Big(-\dfrac{\lambda_x}{\lambda_x^2+\delta^2},\dfrac{\delta}{\lambda_x^2+\delta^2}\Big)
%\end{equation}
%which leads to 
%\begin{align}
%	g_0(\lambda_x,\delta)= &\frac{a_4 \delta ^2}{4 \epsilon ^2}-\ii a_2 \delta -\frac{\ii a_4 \lambda _x^2 \sqrt{\delta ^2+2 \lambda _x^2}}{4 \delta  \epsilon ^2}+\frac{a_4 \lambda _x \sqrt{\delta ^2+2 \lambda _x^2}}{4 \epsilon ^2}\cr &-\frac{\ii a_2 \lambda _x^2}{\delta }-\frac{\ii a_4 \lambda _x^3}{2 \delta  \epsilon ^2}-\frac{\ii a_4 \delta  \lambda _x}{4 \epsilon ^2}+\frac{a_4 \lambda _x^2}{2 \epsilon ^2}+a_4
%\end{align}
\paragraph{Case $a_2 \neq 0$ and $a_4=r a_2$ with $r \in \mathbb R^*$.} The dominant scale is 
$$ 
\delta^{-1} g_0(u \delta^{1/2},\pm \delta) = G_\pm(u) + R_\pm(u)
$$
with 
\begin{align}
	& G_+(u) = a_2(1-\ii r u^2),\cr 
	& G_-(u) = a_2(1+\ii r u^2) = G^*_+(u).
\end{align}
Proposition~\ref{lem:dominant_scale} applies, and $G_-(u)$ has no winding phase as $u$ goes from $-\infty$ to $+\infty$, so that $w_\infty=0$.

\paragraph{Case $a_2=0$ and $a_4 \neq 0$.}
In that case we also divide $g_0$ by $a_4$. The dominant scale is 
$$ 
\delta^{-1} g_0(u \delta^{1/3},\pm \delta) = G_\pm(u) + R_\pm(u)
$$
with 
$G_+(u)=1 - \ii \dfrac{2 u + \sqrt2 |u|}{4 \epsilon^2} u^2, \qquad 
G_-(u) =1 + \ii \dfrac{2 u + \sqrt2 |u|}{4 \epsilon^2} u^2 = G^*_+(u)$. 
Proposition~\ref{lem:dominant_scale} does not apply but $G_+$ and $G_-$ do not vanish and 
$$
\dfrac{R+(u,\delta)}{G_+(u)} \mathop{\sim}_{u \to \pm \infty} \dfrac{\ii \delta^{2/3}}{u}, \qquad \dfrac{R-(u,\delta)}{G_-+(u)} \mathop{\sim}_{u \to \pm \infty} \dfrac{-\ii \delta^{2/3}}{u}
$$
so that  $\left|\tfrac{R_\pm}{G_\pm}\right|\to 0$ as $\delta \to 0$ and all $u \in [-\tfrac{\lambda}{\delta^{1/3}},\tfrac{\lambda}{\delta^{1/3}}]$, and we get the same conclusion than Proposition~\ref{lem:dominant_scale}. The real part of $G_-$ is constant and positive, and  
		$$
		G_-(u)  \mathop{\sim}_{u \to \pm \infty} \dfrac{2\pm \sqrt 2}{4\epsilon^2}u^3
		$$
so that the winding phase of $G_-$ is $\pi$, and since $G_+=G^*_-$ we infer $w_\infty =1$ for any $a_4 \neq 0$.

\paragraph{Case $a_2\neq 0$ and $a_4=0$.}

The dominant scale is 
$$ 
\delta^{-2} g_0(u \delta,\pm \delta) = G_\pm(u) + R_\pm(u)
$$
with 
$$
G_+(u)=-\ii a_2 \left(u^2+1\right), \qquad G_-(u)=G_+(u)
$$
and $R_\pm =0$ here. Thus we immediately infer $w_\infty=0$.

\paragraph{Summarizing.} In that class, $w_\infty =1$ if $a_2 =0$ and $a_4\neq 0$, and $w_\infty=0$ otherwise.

\appendix

\section{Chern number \label{app:Chern}} 

We rewrite the Hamiltonian in terms of Pauli matrices as
\begin{equation}
	H = \vec{d}\cdot \vec{\sigma}\,,\quad \vec{d}=\begin{pmatrix}
		-k_x\\
		-k_y\\
		m-\epsilon \bvec{k}^2
	\end{pmatrix}\,,
\end{equation}
where $\vec{\sigma}=(\sigma_1,\sigma_2,\sigma_3)$ is a vector of Pauli matrices
\begin{equation}
	\sigma_1 =\begin{pmatrix}
		0 & 1\\
		1 & 0 
	\end{pmatrix}\,,\quad \sigma_2 = \begin{pmatrix}
		0 & -\ii\\
		\ii & 0
	\end{pmatrix}\,,\quad \sigma_3 =\begin{pmatrix}
		1 & 0\\
		0 & -1
	\end{pmatrix}\,.
\end{equation}
The eigenprojections of $H$ are shared with those of the flat Hamiltonian $H'=\vec{e}\cdot\vec{\sigma}$, where $\vec{e}=\vec{d}/\abs{\vec{d}}$. They are
\begin{equation}
	P_\pm =\frac{1}{2}\left(1\pm\vec{e}\cdot\vec{\sigma}\right)\,,
\end{equation}
($\left(\vec{a}\cdot\vec{\sigma}\right)\left(\vec{b}\cdot\vec{\sigma}\right)=\left(\vec{a}\cdot\vec{b}\right)+\ii\left(\vec{a}\times\vec{b}\right)\cdot\vec{\sigma}$). Note that $\vec{e}=\vec{e}(\bvec{k})$ is convergent for $k\to\infty$
\begin{equation}
	\vec{e}\to \begin{pmatrix}
		0\\
		0\\
		\pm 1
	\end{pmatrix}\quad (k\to\infty)\,, \quad \mathrm{for}\,\, \epsilon\gtrless 0\,.
\end{equation}
Therefore, also the eigenprojections converge and the Chern number is a well-defined topological invariant
\begin{equation}
	C(P)=\frac{1}{2\pi\ii}\int_{\mathbb{R}^2} dk_xdk_y\tr\left(P\left[\partial_{k_x}P,\partial_{k_y}P\right]\right)\,.
\end{equation}
If the regulator $\epsilon\neq 0$ we can compactify the momentum plane to the 2-sphere $S^2$ and we can compute the r.h.s. on a closed manifold with the map $\vec e : S^2 \to S^2$. According to Prop.~1 of~\cite{GrafJudTauber21}, we get 
$$
C_\pm= \pm \dfrac{1}{4\pi} \int_{S^2}  \vec e \cdot ( \partial_1 \vec e \wedge \partial_2 \vec e)\, \dd x_1 \dd x_2\,,
$$
which, in our case, leads to
\begin{equation}
	C_\pm = \pm \frac{\sign{m}+\sign{\epsilon}}{2}\,.
\end{equation}

\end{document}